\documentclass[pdflatex,sn-mathphys-num]{sn-jnl}% Math and Physical Sciences Numbered Reference Style
\usepackage{graphicx}%
\usepackage{multirow}%
\usepackage{amsmath,amssymb,amsfonts}%
\usepackage{amsthm}%
\usepackage{mathrsfs}%
\usepackage[title]{appendix}%
\usepackage{xcolor}%
\usepackage{textcomp}%
\usepackage{manyfoot}%
\usepackage{booktabs}%
\usepackage{algorithm}%
\usepackage{algorithmicx}%
\usepackage{algpseudocode}%
\usepackage{listings}%
\usepackage{setspace}%
\usepackage{isotope}%
\theoremstyle{thmstyleone}%
\newtheorem{theorem}{Theorem}%  meant for continuous numbers
\newtheorem{proposition}[theorem]{Proposition}% 

\theoremstyle{thmstyletwo}%
\newtheorem{remark}{Remark}%

\theoremstyle{thmstylethree}%
\newtheorem{definition}{Definition}%

\begin{document}

%added for ease of reviewing
%\doublespacing
\onehalfspacing

%Andrew's definitions

\newcommand{\bW}{{\bf W}}
\newcommand{\btheta}{\boldsymbol{\theta}}
\newcommand{\bullt}{\boldsymbol{\Theta}}
\newcommand{\blambda}{\boldsymbol{\lambda}}
\newcommand{\POE}{{\rm POE}}
\newcommand{\LCOEeff}{{{\rm LCOE}_{\rm eff}}}
\newcommand{\Qecon}{\rm{Q_{\rm econ}}}
\newcommand{\Ctarget}{\rm{C_{\rm target}}}
\newcommand{\Cgain}{\rm{C_{\rm gain}}}
\newcommand{\Cnet}{\rm{C_{\rm net}}}

\title[Article Title]{{\small ACCEPTED for publication in the \textbf{\textit{Journal of Fusion Energy}}} \\
\phantom{000} Criteria for the economic viability of \\[-.125in]
fusion power plants}

%%=============================================================%%
%% GivenName	-> \fnm{Joergen W.}
%% Particle	-> \spfx{van der} -> surname prefix
%% FamilyName	-> \sur{Ploeg}
%% Suffix	-> \sfx{IV}
%% \author*[1,2]{\fnm{Joergen W.} \spfx{van der} \sur{Ploeg} 
%%  \sfx{IV}}\email{iauthor@gmail.com}
%%=============================================================%%

\author*[1,2]{\fnm{D.G.} \sur{Whyte}}\email{dwhyte@rutherfordev.com}

\author[1,3,4]{\fnm{A.} \sur{Lo}}\email{alo@rutherfordev.com}
% \equalcont{These authors contributed equally to this work.}

\author[1,2]{\fnm{R.} \sur{Bielajew}}\email{rbielajew@rutherfordev.com}

\author[1]{\fnm{M.} \sur{Hancock}}\email{mhancock@rutherfordev.com}

\author[1]{\fnm{R.} \sur{Moeykens}}\email{rmoeykens@rutherfordev.com}

\author[1,2]{\fnm{G.} \sur{Shaw}}\email{gshaw@rutherfordev.com}

\affil*[1]{\orgname{Rutherford Energy Ventures}, \orgaddress{\city{Cambridge} \postcode{02139}, \state{MA}, \country{USA}}}

\affil[2]{\orgdiv{Plasma Science and Fusion Center}, \orgname{Massachusetts Institute of Technology}, \orgaddress{ \city{Cambridge}, \postcode{02139}, \state{MA}, \country{USA}}}

\affil[3]{\orgdiv{Sloan School of Management, CSAIL, EECS, ORC, and Laboratory for Financial Engineering}, \orgname{Massachusetts Institute of Technology}, \orgaddress{ \city{Cambridge}, \postcode{02139}, \state{MA}, \country{USA}}}

\affil*[4]{\orgname{Santa Fe Institute}, \orgaddress{\city{Santa Fe} \postcode{87501}, \state{NM}, \country{USA}}}

%%==================================%%
%% Sample for unstructured abstract %%
%%==================================%%

\abstract{Commercial fusion energy requires frameworks to assess both the scientific and economic viability of a wide variety of fusion concepts. Inspired by the Lawson criterion's ability to universally describe fusion energy gain, a generalized framework is developed to determine the economic gain of fusion power plants. The model exploits temporal equilibrium, and engineering and cost parameters normalized to the energy capture surface. The derived criteria for economic gain are therefore independent of the power plant's absolute power, impartial to the particulars of its fusion technology, and can be applied to any fusion confinement concept. The derivation of the economic gain factor, $\mathrm{Q_{econ}}$, results in nonlinear equations with ten controlling normalized design parameters ranging from fusion power density and surface component lifetime to energy fluence, price of energy, and component efficiency and cost. These ten controlling parameters are varied over a wide range to provide high-level insights in design, finance and operational tradeoffs that improve the prospects for economically viable fusion energy.}

\keywords{fusion energy, economics, fusion power plants, fusion technology}

%%\pacs[JEL Classification]{D8, H51}

%%\pacs[MSC Classification]{35A01, 65L10, 65L12, 65L20, 65L70}

\maketitle

%-------
%\newpage
%\tableofcontents
%\newpage
%-------

\section{Introduction}\label{sec:introduction}

Following decades of fundamental research, fusion development is pivoting from a purely research endeavor to practical energy applications. The need for a dispatchable, sustainable, carbon-free energy source with high power density is growing due to evolving requirements for energy security and environmental health, as well as emergent energy-intensive sectors like artificial intelligence.  In response to this demand, a host of fusion development companies have formed over the last decade, aiming to develop fusion energy systems using a wide variety of approaches \cite{FIA2024}.  Recent advances in technology and computing are improving the prospects for commercializing fusion. Two recent examples are the indirect laser drive fusion implosion with net energy gain and significant self-heating from the fusion products \cite{abu2024achievement} and the demonstration of high-temperature superconductor fusion magnets at very high magnetic field \cite{hartwig2023sparc}. This naturally poses the question: how might we evaluate these accomplishments in terms of economic and commercial success, rather than merely scientific success?  
The principal gauge of scientific success in fusion as a practical energy source is the fusion plasma energy gain, $\mathrm{Q_p}$ the ratio of fusion energy produced to external energy applied to the plasma fuel. This is determined by the ``Lawson criterion,'' first derived in 1957 by John Lawson. \cite{JDLawson_1957}
The framework of the Lawson criterion only requires knowledge of the binary reaction rates for nuclear fusion and electron-ion continuum radiation processes.
Its framework makes several key assumptions to reach this simplified form:

\begin{itemize}

    \item The power balance of the fusion plasma is solved volumetrically, so that all rates of power gain and loss are normalized per unit volume.

    \item Thermonuclear fusion rates are set by the plasma density and temperature, which solely sets the reactivity,  producing self-heating of the plasma volume from the charged reaction products.  Continuum radiation losses are set by electron-ion binary collisions in a thermal plasma. Only the fusion fuel species must be prescribed, which will determine the reactivity, reaction energy gain, effective ion charge, and fusion products.
    
    \item  Power balance is determined from fusion fuel parameters, only requiring the density ($\text{n}$) and temperature (T), and an energy confinement time ($\mathrm{\tau_E \equiv 3knT/p_{heat} }$) defined from those parameters and the volumetric heating power density $\mathrm{p_{heat}}$. It does not require specifying the physical mechanisms governing confinement.

    \item The plasma fuel is in temporal equilibrium ($\mathrm{\delta / \delta t \rightarrow 0}$), or more precisely, temporal variations of plasma parameters that occur on sufficiently short timescales can be ignored if they don't affect the power balance.
      
\end{itemize}

Its simplicity and transparency make the Lawson criterion universal and profound.   Lawson provides the crucial insight that the temperature acts as an independent variable. The $\mathrm{n \cdot \tau_E}$ product determines the energy gain at a given T, not separately from each other, which leads to the staggering ranges of density and confinement times ($\mathrm{\approx 10^{10}}$) used across confinement methods.
Lawson does \emph{not} require any detailed knowledge of the plasma, such as its stability or turbulence, nor of the confinement method or its parameters (e.g., magnetic field, laser energy, etc.), making it impartial with regard to technology.
A recent evaluation  \cite{wurzel2022progress}  of fusion concepts provides details on the Lawson criterion for continuous designs (e.g., magnetic fusion) and pulsed designs (e.g., inertial and magneto-inertial fusion).

Taking inspiration from the Lawson criterion, we seek to develop a general framework that allows the economic evaluation of a mature fusion power plant (FPP).  Our simplifying assumptions are as follows:

\begin{itemize}

    \item We solve for economic gains and costs at a control surface surrounding the fusion fuel, exploiting the fact that fusion must occur in an isolated volume, and that all fusion energy must be extracted through this control surface.

    \item Gain and loss rates are normalized to the control surface, set by the power density, fusion energy fluence, plant and component financing cost, and other engineering design parameters. The framework is thus independent of absolute power production, and is impartial to fusion fuel cycle and confinement method.
    
    \item The economic gain and loss rates are in temporal equilibrium over the lifetime of the FPP. The power plant is assumed to have two periods: one for power production and one for component replacement at the control surface. Thus, the FPP lifetime is assumed to be significantly longer than these periods.

    \item The temporal equilibrium provides an economic gain factor, $\mathrm{Q_{econ}}$, which is a constant ratio of economic gain to costs over the lifetime of the FPP.  A fundamental requirement of a commercially viable FPP design is $\mathrm{Q_{econ}} \ge 1$. Meeting this threshold is ``necessary, but insufficient'' to meet real-world commercial viability because, by definition, most but not all costs can be included in the model. However, without $\mathrm{Q_{econ}} \ge 1$ there is no prospect of net returns.  
      
\end{itemize}

\section{Derivation of the framework and criteria for fusion economics}\label{sec:derivation}

\subsection{Normalization definitions for the framework}

The starting point of the framework for the economic evaluation of fusion is the realization that all fusion concepts have a control surface, S, measured in $\mathrm{m^2}$, through which all fusion energy must be extracted. This requirement stems from the Lawson criterion, which states that fusion reactions and/or energy gain occur at thermonuclear temperatures; therefore, the volume for fusion reactions must be completely isolated from engineered, terrestrial objects. We further note that we are using the broader definition of the Lawson criterion, which determines the magnitude of
$\mathrm{Q_p}$, and the model does not require that the plasma is ignited, i.e., $\mathrm{Q_p \rightarrow \infty}$.
 The size and specific shape of surface S is not required, only that the surface is directly participating in the areal removal of the volumetric fusion power $\mathrm{P_f}$ in megawatts (MW). 

\begin{figure}[h]
    \centering
    \includegraphics[width=0.95\textwidth]{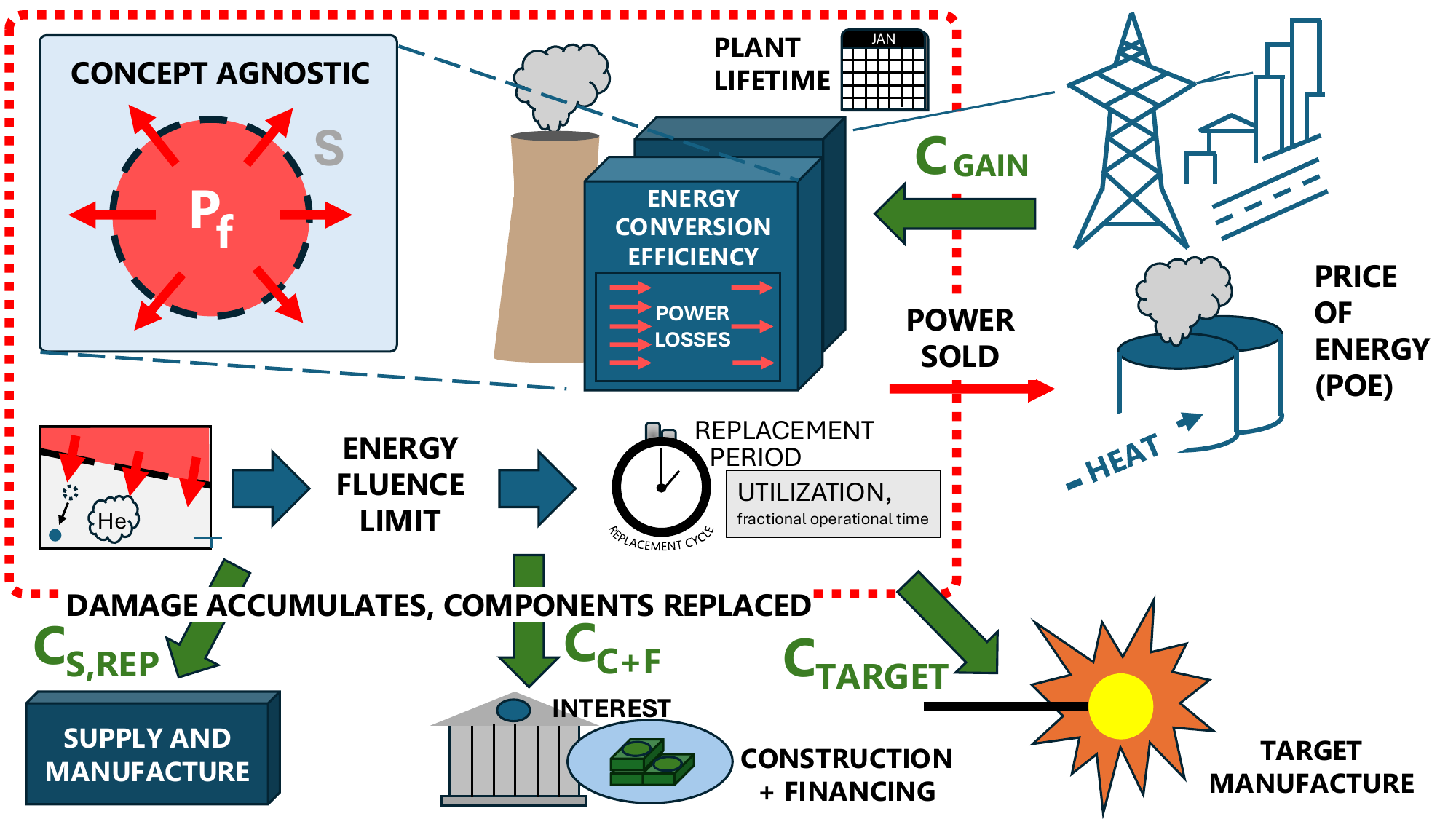}
    \caption{Graphical representation of the FPP economic framework. Red arrows represent energy flows normalized to S, the energy extracting surface surrounding the fusion volume producing fusion power $\mathrm{P_f}$.   Green arrows represent monetary rates, both gains and costs, normalized to S.}
    \label{fig:graphical_framework}
\end{figure}

Figure \ref{fig:graphical_framework} provides a graphical representation of the economic framework, with this section providing definitions and derivations of the controlling rate equations.
A fusion power plant (FPP) produces a time-averaged areal fusion power $\mathrm{P_f/S}$ in $\mathrm{MW /m^2}$ during the period of power operations.  Our model framework links this normalized fusion power/energy output to a normalized economic gain and cost rate.  The units of monetary gain/cost used are in M\$ (US million dollars) per calendar year (y), so that the normalized economic gain and cost rates through S are derived in units of $\mathrm{M\$/m^2 \text - y}$. This calendar year is not the same as an operational year, since the model will intrinsically incorporate the economic impact of maintenance periods when fusion power is not being produced. This unit for economic rate is convenient, providing results of order unity from typical input parameters. As much as possible, costing and performance parameters are represented in the same units typically used by fusion researchers and the energy industry, with appropriate unit conversions applied so that all economic rates are in $\mathrm{M\$/m^2 \text - y}$. Stated economic values are in 2025 US dollars; however, the primary output of the model, the FPP economic gain $\mathrm{Q_{econ}}$, is inherently inflation-independent because it is a ratio of economic rates.

\subsection{Temporal equilibrium}\label{sec:temporal equilibrium}

As with the Lawson criterion, temporal equilibrium is imposed for the economic framework so that gain and cost rates are constant in time over the FPP lifetime. To achieve constant annual cost estimates over the lifetime of the plant, we will apply an amortization formula with an averaged real interest rate. While this is a simplification of real life financing scenarios, its purpose is to transparently capture the impact of the cost of capital on economic viability.

We impose that the FPP has two periods that are repetitive and sequential throughout the lifetime of the FPP, denoted as $\mathrm{\tau_{life}}$ in years [y].  In one period $\mathrm{\tau_{op}}$ [y] the FPP is in fusion-producing operation with a fixed $\mathrm{P_f/S >0}$. The extraction of energy through S leads to sufficient degradation that it is no longer operational, forcing a second period $\mathrm{\tau_{rep}}$ [y], in which the FPP is producing zero fusion power, while the control surface S is being replaced or refurbished. By definition $\mathrm{\tau_{rep}}$ must include any associated decommissioning or recommissioning time, since it must capture the duration of when the FPP is not producing a commercial product.   The FPP therefore has a cycle period of 
\begin{equation}\label{eq:tau_cycle_definition}
    \tau_{cycle} = \tau_{op} + \tau_{rep} \hspace{3mm}. \vspace{2mm}
\end{equation}
Thus the temporal equilibrium assumption is simply averaging economic gains and losses over multiple cycles, and the mathematic requirement is $\mathrm{\tau_{life} \gg \tau_{cycle}}$, which is reasonable since $\mathrm{\tau_{life}}$ is typically several decades. The model is insensitive to variations of  $\mathrm{P_f/S >0}$ at timescales significantly less than $\mathrm{\tau_{op}}$. Therefore, the model is applicable to both steady-state fusion power concepts and pulsed fusion systems, where the longest pulse timescales are $\mathrm{\sim 10^3 s \ll \tau_{cycle} }$ in inductive magnetic confinement.  

The fractional calendar time that the FPP is in operation (and thus producing energy to sell) is called the utilization factor U, which is given by
\begin{equation}\label{eq:U_definition}
    U = \frac{\tau_{op}}{\tau_{cycle}} =\frac{\tau_{op}}{\tau_{op} + \tau_{rep}} \hspace{3mm}. \vspace{2mm}
\end{equation}
The utilization is connected to the capacity factor, which is typically used in energy systems to denote the fractional operational period. The distinction here is that the capacity factor has an upper limit of the utilization factor, since in practical energy systems, there are other limitations on capacity, including climate intermittency (e.g., in renewables), market pricing and demand, and mechanical maintenance. These latter considerations are ignored in the present model, with the assumption that FPPs will operate until the components in S fail due to fundamental physical limits arising from the energy throughput from the fusion reaction.  The framework mathematically allows utilization to be at unity to accommodate FPP designs where S is continually replaced so that $\mathrm{\tau_{rep} \rightarrow0}$.

\subsection{Economic gain rate from selling energy}\label{sec:economic gain}

The first term we will derive within our framework is the economic gain rate from selling the energy product generated by the FPP. 
Schematically, the fusion power passes through the surface area S into an engineered blanket volume, which converts the kinetic energy of the fusion products into useful energy to be sold. We assign a conversion efficiency $\eta_E$ to the FPP
\begin{equation}\label{eq:eta_definition}
    \eta_E \equiv \frac{P_{net}}{P_f} \hspace{3mm}. \vspace{2mm}
\end{equation}
Here, $\mathrm{P_{net}}$ is the surface-averaged, time-averaged net power output. It is important to note that $\mathrm{\eta_E}$ is not a single parameter (e.g., the thermal conversion efficiency) but rather, it is a systems-wide accounting of net power (or net energy) production, and therefore it must include all aspects of internal power conversion, plasma energy gain, recirculating power, and so on.  Neither is it limited to electricity as the sole energy product of the FPP, which may also include industrial heat or fuel production.  For illustrative purposes, we include a sample derivation of $\eta_E$ for an electricity-producing FPP in Appendix \ref{sec:etae_appendix}.

The received net income for the FPP energy product is measured in the energy industry standard unit, the  ``price of energy'' in $\mathrm{\$ /MW \text - h}$. For clarity, we define
\begin{equation}\label{eq:POE_definition}
    POE_{net} \space [\$ /MW \text - h] = POE_{price} - POE_{O\&M}, \vspace{2mm}
\end{equation}
where $\mathrm{POE_{net}}$ is primarily set by $\mathrm{POE_{price}}$, the time-averaged price that the FPP energy product is sold at during FPP operations. While $\mathrm{POE_{price}}$ is a constant in the model, it can reflect the impact of local market conditions and variable energy pricing (e.g. peak power prices) as long as these are incorporated into a model that time-averages over $\mathrm{\tau_{op}}$. $\mathrm{POE_{price}}$ should use the output weighted average of all the products sold by the FPP, including energy market products (e.g. electricity, heat, fuel) and non-energy market products whose price can be linked to net energy output (e.g. transmutation products \cite{rutkowski2025scalablechrysopoeian2n},  desalinated water).
$\mathrm{P_{O\&M}}$ reflects the FPP variable O\&M ``Operations \& Maintenance" costs, in $\mathrm{\$ /MW \text - h}$, which are tied to net energy output \cite{larsen2023nuclear}, and so are typically driven by fuel costs, coolant consumption, etc. In this economic framework, the cost of the ``raw'' fusion fuel is effectively zero, since fusion target costs are treated separately in Sec. \ref{sec:target_cost}. For most FPP concepts one would expect $\mathrm{P_{O\&M} \ll P_{price} }$ but this will be design dependent. $\mathrm{POE_{net}}$ is the controlling parameter in the model. Fixed O\&M costs are considered in Sec. \ref{sec:fixed_costs}.

Taking into account the utilization factor U, then the economic gain rate, $\mathrm{C_{gain}}$ [$\mathrm{M\$/m^2 \text -  y}$], is given by
\begin{equation}\label{eq:Cgain_definition}
\begin{split}
    C_{gain} [M\$/m^2 \text -  y] = \frac{M\$}{10^{6}\$} \frac{8760 h}{y}\enspace POE_{net} \enspace(P_f/S) \cdot U \cdot \eta_E  \\ = 8.76\times10^{-3}POE_{net} \enspace(P_f/S) \cdot U \cdot \eta_E
    \hspace{3mm}, \vspace{2mm}
    \end{split}
\end{equation}
where the final term provides the consolidated constant. 

\subsection{Utilization factor and S energy fluence limit}\label{sec:utilization_factor}

As defined, the utilization factor U arises from the physical limits of the surface S due to its degradation, which forces the replacement of S (and its associated components) during $\mathrm{\tau_{rep}}$. The degradation is quantified as an energy fluence limit of S by $\mathrm{X_S}$ in units of $\mathrm{MW \text -  y/m^2}$. By definition, once this limit is reached, S is no longer operational, fusion power operations must cease, and S must be replaced during $\mathrm{\tau_{rep}}$.  Conceptually, this takes advantage of the fact that, regardless of the nature of energy removal, every joule of fusion energy must first be generated in the plasma volume and then pass through S. Thus, we link the energy fluence limit to the total amount of fusion power, such that the operational period is  defined by
\begin{equation}\label{eq:tau_op_solved}
\tau_{op} [y] = \frac{X_S}{P_f/S} \hspace{3mm}. \vspace{2mm}
\end{equation}
The choice of linking energy fluence to the durability of S arises from the  nature of fusion energy and the physical realities of S, which must be an engineered component in the solid or liquid phase.

In general, the fusion energy is transmitted to S in discrete high-energy forms that will force S to degrade. First, the primary particles produced by fusion reactions will have kinetic energies $\mathrm{>}$ MeV. Therefore, the direct removal of these particles, be they charged particles or uncharged neutrons, must necessarily lead to significant damage or perturbation at the atomic level in S, given that interatomic potentials in S will be $\mathrm{\sim}$10-100 eV.  The outgoing fusion products must undergo collisions in S (whether coulombic or neutronic) which will disorder the atoms in S, leading to cumulative degradation. That damage level is linked to the areal energy transfer for a given distribution of fusion products and composition of S. Furthermore, products with energies $\mathrm{>}$ MeV have the possibility to energetically engage in nuclear reactions with the isotopes composing S, which would result in a cumulative effect on S, also linked to the energy fluence for a given neutron spectrum. 

In addition to high kinetic energy fusion products, one must also consider the nature of plasma energy removal. Sufficient fusion reactivity and energy gain is accessed at plasma temperatures/energies $\mathrm{\sim>}$ 1-100 keV.  Plasmas will transmit their energy into S via charged particles, neutral particles or photons, all of which are likely to contribute to the accumulated damage and degradation of S.  Charged particles and ions will accelerate towards S due to ambipolar potentials with energies $\mathrm{\sim 5 \times}$ the local plasma temperature, typically leading to atomic surface damage in the form of sputtering, since surface binding energies are $\mathrm{\sim}$ 1-5 eV. Similarly, neutral particles may arrive at S above this sputter threshold, due to charge-exchange between neutral species from incident fuel  and hot ions in the plasma, or from the expulsion of  non-ionized fuel in pulsed fusion concepts. The bulk of photons arriving at S will reflect the characteristic plasma temperature/energy, which will be $\mathrm{>}$ 1-100 keV, since  they arise from coulombic/inelastic collisions of free/bound electrons in the plasma. These photons will cause damage through ionization and displacement processes, since their energies surpass the typical thresholds for these effects.

Another consideration arises because the primary product of practical fusion reactions for energy production is helium, due to its high nuclear binding energy.  Because it is an inert element, helium accumulation poses particular challenges in solid components, due to its insolubility.  Since, for a given fusion reaction, the total helium production is directly linked to fusion energy, one may expect further cumulative damage caused by helium linked to energy fluence.

These facts justify the generic use of energy fluence through S as a determinant of its operational lifetime given by Equation \ref{eq:tau_op_solved}, since it appears inevitable that one of the energy removal mechanisms described above will occur at a sufficient rate to limit S.   However, the actual limit of $\mathrm{X_S}$ must be determined from the specific details of the FPP and the design of S.  The use of $\mathrm{X_S}$ fits with the stated goal of an economic model impartial to fusion approaches; the FPP designer, however, will be obligated to determine $\mathrm{X_S}$ based on the design specifics. This will depend on a wide variety of physical and engineering parameters, including the fusion fuel cycle products, the spectrum and flux of primary fusion products into S, the plasma temperature/energy, and the plasma energy loss mechanisms.  

An estimate of $\mathrm{X_S}$ across the entirety of proposed fusion designs is outside the scope of this work. Nevertheless, it is useful to provide an approximation of $\mathrm{X_S}$ for D-T fusion, the most common choice for proposed commercial FPPs due its high reactivity and power density at its given plasma conditions. In D-T fusion, 80 \% of the fusion power exhaust arrives as free-streaming 14.1 MeV neutrons at S, which will likely determine $\mathrm{X_S}$ due to the cumulative material displacement and nuclear transmutation levels in S.  

Appendix \ref{sec:Xs_appendix} provides a derivation of $\mathrm{X_S}$ that illustrates a simplified but robust link between $\mathrm{X_S}$ and the displacement per atom (dpa), a commonly used figure of merit for neutron energy fluence. The simplicity of this example arises from the general nature of collision kinematics for the highly penetrating neutrons, which pass through S and thermalize. Besides its relevance to commercial FPPs, the D-T system provides a relatively straightforward way to link degradation to energy fluence, since high energy neutrons and weak neutron-matter interactions (the mean free path is $\mathrm{\sim 0.1 m}$ in solids and liquids) lead to unavoidable volumetric damage and/or heating in S at the atomic level.  

It must be noted that $\mathrm{X_S}$ represents the most optimistic limit for the lifetime of S. There are clearly other failure modes for the components that make up S, for example, excursions in peak power density that pass the local limits of actively cooled components.  Another potential failure mechanism is thermal fatigue in components of S that undergo thermal cycling.  Regardless of the details of the dominant degradation mechanism, however, $\mathrm{X_S}$ represents the ultimate S limit, because it is linked to the energy fluence which is necessary for producing energy that the FPP can sell.

Having discussed the individual parameters that make up the operational period, we insert its formula into Equation \ref{eq:U_definition} and rearrange the utilization factor U:
\begin{equation}\label{eq:U_rearranged_definition}
    U =  [1 + \frac{\tau_{rep}\cdot (P_f/S)} {X_S}]^{-1} \hspace{3mm}. \vspace{2mm}
\end{equation}
To see how this affects the economic framework we can insert the solution for U into  Equation \ref{eq:Cgain_definition}, providing the full economic gain rate:
\begin{equation}\label{eq:Cgain_full}
    C_{gain}    = 
    8.76\times10^{-3}POE_{net}  \cdot \eta_E \cdot
    \enspace(P_f/S) \cdot [1 + \frac{\tau_{rep}\cdot (P_f/S)} {X_S}]^{-1} \hspace{3mm}. \vspace{2mm}
\end{equation}

The replacement time $\mathrm{\tau_{rep}}$ for S remains an independent parameter, since it has no link to fusion power density, but is set by the integrated FPP and S design.  Equations \ref{eq:tau_op_solved} - \ref{eq:Cgain_full} indicate the utility of using the power (or energy) normalized per surface area; it provides a simple relationship to the normalized cumulative damage in S, which in turn links energy transmission through S to FPP utilization.  It is noted that this framework accommodates FPP designs where S is continually replaced, such as the flowing molten salt blanket in the Hylife design  \cite{moir1994hylife} or the liquid lead-lithium blanket proposed by General Fusion \cite{laberge2019magnetized}, in which case $X_S$ can be set arbitrarily large to force $\mathrm{U \rightarrow 1}$.  The economic gain rate is insensitive to absolute power output, as desired for a general framework.

\subsection{Fusion target cost rate}\label{sec:target_cost}

Multiple fusion concepts require the fabrication of discrete fuel targets which are consumed by the process of achieving fusion energy and gain.  This is distinguished from the consumable fuel elements used in D-T fusion, deuterium and lithium, which are assumed in this present model to incur no cost due to the extraordinarily high energy gain achieved per unit mass---indeed, this is the allure of fusion energy! The target costs are treated separately from the replacement costs for S (next section), because while obviously linked to fusion power, their design is not directly linked to the lifetime of the components in S. The cost rate for the targets is defined as:
\begin{equation}\label{eq:C_target_def}
   C_{target}[M\$/m^2 \text - y]=(\frac{M\$}{target})\cdot (\frac{N_{target}}{S \cdot year}) \hspace{3mm}. \vspace{2mm}
\end{equation}
The framework does not specify the technical requirements of the target, simply that it must reflect the integrated costs associated with the discrete engineered objects (e.g., fabrication, delivery, and removal) which are required to achieve the fusion power $\mathrm{P_f}$ and are consumed at a known rate.  For example, laser-driven inertial fusion includes a spherical target and its delivery assembly, while indirect drive fusion would include the target's hohlraum, which are fully consumed with each fusion event.  In pulsed magnetic fusion, this includes the fuel target and any electrodes or wires which are intentionally consumed in a pulse or a finite number of pulses.  In magnetized target fusion, this would include the cost of developing any consumable engineered material involved in the formation of the plasma for compression and/or the cost of mechanical/electrical components that are limited to a finite number of compression events. In current pinches, this could include the cost of the electrodes, which could have a lifetime significantly shorter than S due to mechanical degradation from repeated pulses. (In this case, the ``target'' is used over many fusion events.)  In magnetic fusion, this could include complex fuel pellets required to achieve fusion performance.  

The framework uses the same target costing convention of $\mathrm{c_{target}}$ in  $\mathrm{[\$/target]}$ from laser inertial fusion energy (IFE). Each target use is assumed to produce a fixed amount of fusion energy yield, $\mathrm{Y_{target}}$ in MJ, before it is consumed. The fusion power density is given by the use rate of the targets $\mathrm{f_{target}}$ [Hz]:
\begin{equation}\label{eq:Pf_and_yield_def}
   P_f/S=\frac{Y_{target}\cdot f_{target}}{S} \vspace{2mm}
\end{equation}
or 
\begin{equation}\label{eq:ftarget_def}
   \frac{f_{target}}{S}=\frac{P_f/S}{Y_{target}} \hspace{3mm}. \vspace{2mm}
\end{equation}
In a calendar year, the number of targets consumed is:
\begin{equation}\label{eq:Ntarget_year_def}
   \frac{N_{target}}{year} = f_{target} \frac{3.15\times10^7 s}{year}\cdot U \hspace{3mm}, \vspace{2mm}
\end{equation}
and 
\begin{equation}\label{eq:Ntarget_year_S_def}
   \frac{N_{target}}{S \cdot y} = \frac{f_{target}}{S} \frac{3.15\times10^7 s}{y}\cdot [1 + \frac{\tau_{rep}}{\tau_{op}}]^{-1} \enspace \hspace{3mm}. \vspace{2mm}
\end{equation}

Substituting Equation \ref{eq:ftarget_def} into the first term on the right-hand side of Equation \ref{eq:Ntarget_year_S_def}, and using Equation \ref{eq:C_target_def} with unit conversion to $\mathrm{c_{target}}$, we obtain a complete expression for the target cost rate:
\begin{equation}\label{eq:C_target_full}
\begin{split}
   C_{target}[M\$/m^2 \text -  y]=  10^{-6} \enspace c_{target} \cdot \frac{(P_f/S)} {Y_{target}} \cdot 3.15\times10^7 
   \cdot [1 + \frac{\tau_{rep}}{\tau_{op}}]^{-1} \\
   =31.5 \enspace (\frac{c_{target}}{Y_{target}}) \cdot (P_f/S) \cdot [1 + \frac{\tau_{rep}\cdot (P_f/S)} {X_S}]^{-1} \hspace{3mm}. \vspace{2mm}
   \end{split}
\end{equation}
Since $\mathrm{c_{target}}$ and $\mathrm{Y_{target}}$ only appear in the fusion target cost rate we make one further definition to minimize the number of controlling parameters,
\begin{equation}\label{eq:cY_definition}
   (c/Y)_{target} \enspace [\$ / MJ] \equiv \frac{c_{target}}{Y_{target}} \hspace{3mm},
\end{equation}\vspace{1mm}
defined as the normalized target cost per energy yield, providing 
\begin{equation}\label{eq:C_target_full_collapsed}
   C_{target}[M\$/m^2 \text -  y]= 31.5 \enspace (c/Y)_{target} \cdot (P_f/S) \cdot [1 + \frac{\tau_{rep}\cdot (P_f/S)} {X_S}]^{-1} \hspace{3mm}, \vspace{2mm}
\end{equation}
where the full definition of utilization has been used.

\subsection{Economic cost rate for replacing the control surface S}\label{sec:S_replacement_cost}

The replacement of S, and its associated components, incurs a rate of economic cost for the FPP, $\mathrm{C_{S,rep}}$. Consistent with the framework, this cost is normalized to the surface area of S, given by the parameter $\mathrm{(M\$/S)_S}$  in $\mathrm{M\$/m^2}$.  This cost must include the entirety of the costs required to remove and replace S each time, including but not limited to its fabrication, off-site qualification, installation and disposal.

The use of a surface area S may seem contradictory with engineering reality since components, which are characterized by a volume or mass, are what will be actually replaced and paid for, not a surface. 
As detailed in Sec. \ref{sec:utilization_factor} the replacement will be required because S, and associated components ``behind'' S (heat removal systems, mechanical attachments and interfaces, etc.), will reach their energy fluence limit. This limit depends on the details of the FPP design and the nature of the energy fluence (particles, S geometry, energy spectrum) through S, and therefore the definition of what volume specifically must be replaced is design specific. Therefore while S cannot have a specific definition in the model framework, it is a universally applicable normalization for any FPP fuel cycle, confinement concept or S design. Physical reasoning requires that all the energy pass through S, and it this energy, changed in form, that will constitute the economic product of the FPP.

Given the framework assumption that the FPP has a lifetime much longer than the cycle lifetime, $\mathrm{\tau_{cycle}}$ we will neglect the financing costs for the replacement of S.  The full average replacement period for S is $\mathrm{\tau_{cycle}}$, and therefore the S replacement cost rate is:
\begin{equation}\label{eq:C_S_Rep_definition}
    C_{S,rep} [M\$/m^2 \text -  y]  = \frac{(M\$/S)_S }{\tau_{cycle}}= 
    \frac{(M\$/S)_S }{\tau_{op}+\tau_{rep}}  \hspace{3mm}, \vspace{2mm}
\end{equation}
with this derivation following the definition of Equation \ref{eq:tau_cycle_definition}.  Substituting Equation \ref{eq:tau_op_solved} results in:
\begin{equation}\label{eq:C_S_Rep_full}
    C_{S,rep}  = (M\$/S)_S \cdot
    [\frac{X_S}{P_f/S}+\tau_{rep}]^{-1} \hspace{5mm} \hspace{3mm}. \vspace{2mm}
\end{equation}
This can be rearranged to follow the forms used for the gain and target cost rate:
\begin{equation}\label{eq:C_S_Rep_U_full}
\begin{split}
    C_{S,rep}  = (M\$/S)_S \cdot \frac{P_f/S}{X_S} \cdot
    [1 + \frac{\tau_{rep}\cdot (P_f/S)} {X_S}]^{-1} \hspace{5mm} \hspace{3mm}. \vspace{2mm}
\end{split}
\end{equation}

\subsection{FPP construction, financing and operation fixed cost rate}\label{sec:fixed_costs}

The construction, financing and operations of a FPP will incur cost rates which are fixed over its lifetime.  In a commercial FPP, as with any commercial power plant, there will be a fixed cost associated with paying the principal and interest of the funds borrowed for construction and delivery.  In addition, there will be fixed O\&M (``operations and maintenance'') costs per year associated with FPP staffing, compliance and routine maintenance outside of the replacement of S. There are many assumptions that have to be made to assess this cost realistically, including depreciation, fixed versus variable interest rates, etc., which are beyond the scope of this framework due to its temporal equilibrium requirement. For transparency and simplicity we use the  amortization formula for the fixed rate of payments based on the FPP cost constant $\mathrm{(M\$/S)_{FPP}}$ in units of $\mathrm{M\$/m^2}$ normalized to the surface area S, 
\begin{equation}\label{eq:C_CF_definition}
    C_{fixed} [M\$/m^2 \text -  y]  = (M\$/S)_{FPP} 
    \frac{0.01 \cdot i(1+0.01 \cdot i)^{\tau_{life}}}{(1+0.01 \cdot i)^{\tau_{life}}-1} \hspace{3mm}, \vspace{2mm}
\end{equation}
which uses the standard amortization formula, with i in $\mathrm{\%}$ being the real interest rate, with the principal and interest assumed to be paid over the lifetime of the FPP $\mathrm{\tau_{life}}$ in years.  By using the real interest rate, i.e. inflationary rates are subtracted, the framework intrinsically incorporates the time-varying cost of capital, even through the framework is in temporal equilibrium.   

The normalized FPP cost constant is composed of two components: ``construction and delivery''(C+D) and O\&M, namely 
\begin{equation}\label{eq:normalized_FPP_cost_definition}
    (M\$/S)_{FPP} \hspace{2mm} [M\$/m^2] =(M\$/S)_{C+D} \hspace{2mm}+ \hspace{2mm} (M\$/S)_{O\&M}
    \hspace{3mm}. \vspace{2mm}
\end{equation}
$(M\$/S)_{C+D}$ is the full cost required to construct and deliver an operational FPP, normalized to $S$ of the FPP, including financing costs during construction. Thus $(M\$/S)_{C+D}$ is related to, but is larger than, normalized FPP overnight costs.  The second term $(M\$/S)_{O\&M}$ is associated with the fixed O\&M costs. Fixed O\&M costs are typically determined as a fractional cost linked to an engineering feature of a power plant, such as peak power output (see for example from fission \cite{larsen2023nuclear}) or total overnight cost (see \cite{najmabadi2006aries, Overview_ARIES-RS} for fusion design examples.) Because the C+D and O\&M costs are additive in determining $\mathrm{(M\$/S)_{FPP}}$, mathematically the O\&M would incur the financing costs associated with amortization, which is not usually used in economic forecasting. However, if the annual fixed O\&M cost is accurately known, e.g. for a Nth-of-a-kind (NOAK) FPP, this can easily be remedied by using Eq. \ref{eq:C_CF_definition} to determine the required $(M\$/S)_{O\&M}$ with knowledge of i and $\mathrm{\tau_{life}}$. This will provide a fully accurate fixed annual cost rate since the amortization formula is linear in $(M\$/S)_{FPP}$ and $(M\$/S)_{O\&M}$. Alternatively, if the fixed O\&M cost is not well known the model user can simply assign a fixed O\&M cost as a fractional increment to the C+D cost, and accept that the modest mathematical inaccuracy introduced by the amortization will reflect the uncertainty in forecasting the O\&M cost; which will likely be appropriate for a First-of-a-kind (FOAK) FPP.

A discussion regarding the S normalization is warranted. The units used in the areal cost of $\mathrm{(M\$/S)_{FPP}}$ are related to the fact that the FPP will require construction and finance costs which must be paid back by eventually selling energy. It is thus reasonable to normalize these costs to a physical feature of the FPP.  In this framework, the cost is normalized to the surface area surrounding the fusion plasma. The actual cost may have  other dependencies, for example, in magnetic fusion energy the volume of the magnets, or in inertial fusion energy the costs of the lasers, but it is not unreasonable to expect some correlation (even if not exactly  linear) between S and the total cost, since S and the surrounding engineered objects are a prime driver in the cost of a FPP. This costing relationship is examined in Section \ref{sec:Areal_cost_appendix}, which finds there is reasonable agreement with the premise that cost varies as S. Therefore, while $\mathrm{(M\$/S)_{FPP}}$ may not be a fixed number for a particular FPP concept, the total cost should increase monotonically with S.

Therefore, $\mathrm{C_{fixed}}$ will depend on details such as overnight costs, FPP size, construction time, financing rates, and operator costs.  There is no single formula that can capture the $\mathrm{(M\$/S)_{FPP}}$ value for an FPP design.  Rather the onus is placed on the FPP developer to provide a determination of $\mathrm{(M\$/S)_{FPP}}$. In cases of a highly mature NOAK FPP, it is desirable that $\mathrm{(M\$/S)_{FPP}}$ asymptote close to the overnight costs, which will require additional considerations such as  minimum construction time, time effective FPP siting and licensing, and small fractional O\&M costs relative to overnight costs. From the model's viewpoint the imperative is on the model user that Eq. \ref{eq:C_CF_definition} reflects the most accurate determination of fixed costs during the FPP lifetime.  While an accurate determination of $\mathrm{(M\$/S)_{FPP}}$ involves a complex technical and financial analysis of each FPP, mathematically the three controlling parameters in Eq. \ref{eq:C_CF_definition} are sufficient to capture the linear ($\mathrm{(M\$/S)_{FPP}}$), and non-linear ( i and $\mathrm{\tau_{life}}$ ) dependencies on fixed costs in the model framework.  

It is useful to extract another standard metric for the NOAK power plant costing, namely the overnight cost per net power, $\mathrm{c_{O/N}}$ in $\mathrm{\$/MW}$.  Following the discussion above this can be estimated by
\begin{equation}\label{eq:covernight_def}
    c_{O/N} [\$/W] = \frac{(M\$/S)_{FPP}}{(P_f/S)\cdot\eta_E} \hspace{3mm}. \vspace{2mm}
\end{equation}
 One might consider using this definition to define the $\mathrm{(M\$/S)_{FPP}}$ term in Equation \ref{eq:C_CF_definition}. However, this would be misleading, since the fabrication cost would then scale linearly as $\mathrm{P_f/S}$, leading to a nonsensical solution where the FPP fabrication costs would approach zero as the fusion power output decreased to zero, while $\mathrm{c_{O/N} \rightarrow \infty }$. The reality is that all fusion FPPs require the manufacturing and assembly of technologically complex devices before any fusion power is generated, and this must incur a finite fabrication and financing cost. The proper interpretation of Equation \ref{eq:covernight_def} is that it allows one to translate the assigned value of $\mathrm{(M\$/S)_{FPP}}$ to the widely used figure of merit $\mathrm{c_{O/N}}$ in the energy industry, with a finite value of $\mathrm{P_f/S}$ either assigned or optimized for the FPP.  Thus, the $\mathrm{P_f/S}$ ratio can be viewed as both a design feature and an operational choice of an FPP.

It should be noted that the true governing cost equation is in fact Equation \ref{eq:C_CF_definition}, while Equation \ref{eq:covernight_def} is a convenient form that touches base with standard pricing practices for energy systems. The form of Equation \ref{eq:covernight_def} recognizes that some costs, such as for generation equipment, scale as the total fusion power at fixed S. However, it may be counterintuitive that costs should increase with the efficiency $\mathrm{\eta_E}$. The reasoning here is that $\mathrm{c_{O/N}}$ contains within itself critical factors such as economies of scale, generating costs, and so on. Therefore, $\mathrm{c_{O/N}}$ is not a parameter that is a fixed metric for a particular FPP approach or technology, but rather an indicator of the costing and market maturity of the FPP. Hence, $\mathrm{c_{O/N}}$ is generally used to indicate the stage of energy market penetration, with pilot projects typically having higher values of $\mathrm{c_{O/N} \sim 8-20 \enspace \$/W}$, and later deployment having values of $\mathrm{c_{O/N}\sim3-8 \enspace \$/W}$ at competitive costing. This is discussed in detail in  \cite{national2021bringingfusion} in the context of an entry level FPP.  Thus, $\mathrm{c_{O/N}}$ will be used as an output of our economic model, rather than as a constraint, to provide insight into the market stage of a FPP design.

\subsection{Examples of adopting the framework across varying FPP concepts}\label{sec:application_concepts}

The framework seeks to provide economic performance metrics impartial to the fusion concept.  The purpose of this section is to illustrate the model's adoption across a variety of FPP concepts and discuss how the controlling parameters might be determined. As evident from the wide array of controlling parameters, summarized in Sec. \ref{sec:summary_equation_controlling_parameters}, their numerical evaluation requires information on costing, fusion performance, finances, etc. particular to each design and marketplace, which is beyond our scope.  Instead, we qualitatively discuss adoption of the framework across a large distribution of FPP concepts covering magnetic, inertial and magnetized-target fusion.  The following list is not meant to be exhaustive of FPP concepts, nor is it an exercise to evaluate the viability of any particular FPP design and/or fusion company.  Generic names are given for FPP concepts, with references provided to commercial designs from specific companies or groups. D-T fusion is the default unless stated otherwise.
The discussion focuses on locating S and how related controlling parameters in the model ($\mathrm{X_S}$, $\mathrm{\tau_{rep}}$, $\mathrm{(M\$/S)_S}$, $\mathrm{(c/Y)_{target}}$ etc.) might be determined.  

\begin{itemize}
  
    \item \textbf{Flow-stabilized Z-pinch} \cite{levitt2023zap,shumlak2024fusion}
    S is comprised of the pinch-viewing internal solid surfaces.   $\mathrm{X_S}$ is set by neutron damage to either the weir wall (over which the liquid blanket flows to form the pinch cavity) or upstream electrode materials (upwards in Fig. 3 of \cite{levitt2023zap}. The LiPb liquid blanket is a lifetime component and should be included in  $\mathrm{(M\$/S)_{FPP}}$.  The electrodes may suffer degradation (e.g. erosion, corrosion). Since these are primarily used to provide the target plasma for compression (rather than energy extraction) electrode replacement costs should be captured in   $\mathrm{(c/Y)_{target}}$ with its value determined by energy yield per pulse, electrode costs and cycle limit. Any downtime to replace electrode targets would be accounted in time-averaged power, $\mathrm{P_f}$ along with pulse rate and energy yield. $\mathrm{\tau_{rep}}$ should be positively impacted by the linear geometry and absence of magnets.

    \item \textbf{High-B tokamak with liquid immersion blanket} \cite{sorbom2015arc,CreelySOFE_2025,rutherford2024manta}
    S is the inner surface of the close-fitting vacuum vessel, including the divertor.
     $\mathrm{(M\$/S)_S}$ should include the vacuum vessel,first wall solid components (including divertor), vacuum vessel support structures and launcher structures used for plasma heating. 
    The liquid blanket/breeder is a lifetime component so its costs should be added to $\mathrm{(M\$/S)_{FPP}}$. For a fixed blanket and VV design one notes the convenience of normalized cost to S, since the volume of both solid and liquid components is linearly proportional to S.  The  $\mathrm{X_S}$ will be likely set by 14.1 MeV neutron damage to the main chamber first wall or the erosion limit of the divertor.   The  $\mathrm{\tau_{rep}}$
    needs to include any required cooling/heating and energizing/de-energizing of superconductor magnets, liquid blanket emptying/filling, and mechanical disassembly/assembly of the vacuum vessel. 
   $\mathrm{\eta_E}$ includes the thermal efficiency of the liquid blanket, the wall plug efficiency of RF heating, recirculating power fraction and  any external electricity sources use for solenoid operations.

    \item \textbf{Laser-driven inertial fusion with flowing liquid wall} \cite{thomas2024hybrid,ogando2024preliminary,moir1994hylife}
    S is the interior surface of the flowing liquid wall / blanket. Since the D-T products (neutrons, charge particles) and target remnants only interact with the liquid, this comprises the energy extraction and therefore $\mathrm{X_S}$ may be very large since ``replacement'' of S would only entail the steel structures well behind the thick liquids.  This may not be the case if refractory final optics are used which would presumably have finite energy fluence and would set $\mathrm{X_S}$ and a $\mathrm{\tau_{rep}}$ determined by optics replacement. We note that with $\mathrm{X_S}$ very large, then $\mathrm{C_{s,rep}}$ could go to negligibly small values, which is allowed mathematically (Sec. \ref{sec:Evaluating}). The cost of the flowing liquid system (liquid blanket, jets, collection system) should be included in $\mathrm{(M\$/S)_{FPP}}$.  The  $\mathrm{(c/Y)_{target}}$ is determined from the D-T target and hohlraum costs normalized to the implosion fusion yield.  $\mathrm{\eta_E}$ is strongly affected  by the thermal conversion efficiency of the blanket liquid and the wall plug efficiency of the laser drivers.

    \item \textbf{Field-reversed configuration (FRC) using advanced fuels }\cite{rostoker1997colliding,momota1992conceptual,kirtley2023fundamental,kirtley2024fundamental,slough2025compact}
    S will comprise of the surface area of the central cell (where fusion occurs) and end plates where open field lines intercept solid surfaces and/or plasma formation occurs, since both of these surface participate in energy removal. With $\mathrm{D-\isotope[3]He}$ fusion $\mathrm{X_S}$ may be set by 2.45 MeV D-D neutron fluence (Appendix \ref{sec:Xs_appendix}) to central cell surfaces. With $\mathrm{p-\isotope[11]B}$, which should have minimal neutron production, the $\mathrm{X_S}$ is more likely set by high energy particle fluence used in direct energy conversion; see for example \cite{momota1992conceptual} where finite lifetime grids are used in charged particle energy extraction in end cells. Whatever the details of the energy extraction, the very nature of fusion energy will likely impose a finite $\mathrm{X_S}$ as discussed in Sec. \ref{sec:utilization_factor}. Yet $\mathrm{\tau_{rep}}$ should be positively impacted by the FRC linear geometry and the separation/reduction of activated components.  The $\mathrm{(c/Y)_{target}}$ must take into account frequency and cost of wearable plasma formation equipment. The $\mathrm{\eta_E}$ must account for the distribution of energy removal (e.g. central cell versus end cell) and the direct conversion efficiency of both particle and electromagnetic energy recovery methods.

    \item \textbf{Tandem high-field mirror} \cite{frank2025confinement,forest2024prospects}
    S includes the surface of central and tandem cells, that receive mostly neutron flux, as well as the end cell targets that receive charged particle power exhaust. Both of these surface participate significantly in energy exhaust.
    $\mathrm{X_S}$ will be set by either neutron fluence limit in the central cells or the charged particle fluence (e.g. erosion) limit at end cell targets, which can be calculated based on the known fixed fraction of energy fluence to both of the surfaces.  $\mathrm{(M\$/S)_S}$ must include normalized costs for both of these replacements appropriately weighted for replacement frequency, as well as the replacement costs for neutron damaged neutral beam components which receive a fixed fraction of neutron fluence compared to the first wall.   $\mathrm{\tau_{rep}}$ will be strongly impacted by the linear geometry and any requirement to energize/cool the magnets. The $\mathrm{\eta_E}$ should account for neutral-beam wall plug efficiency and the use of any direct energy conversion of charged particles in the end cell \cite{forest2024prospects}.

    \item \textbf{Magnetized target fusion using acoustic compression} \cite{laberge2019magnetized,laberge2013acoustically}. The inner surface of the spherical, 
    $\mathrm{\sim 4 \pi}$ steradian PbLi liquid blanket is a natural choice for S. $\mathrm{X_s}$ will likely \emph{not} be set by neutrons since the liquid should not have a damage limit, and neutron fluence to the blanket tank will be minimal through the thick (2 m) liquid. However the compressing pistons may have a lifetime $\ll \tau_{life}$ set by cyclic fatigue since they are undergoing stresses exceeding 1 GPa at $\sim$ 1 Hz \cite{suponitsky2017propagation}.  In this case the $\mathrm{X_s}$ is readily calculable from the piston cycle limit because there is a fixed ratio of compressive energy to fusion energy for each pulse \cite{laberge2013acoustically}. This illustrates the power of using S as the normalization since both the compressive energy, which is a large component of the externally supplied energy, and the produced fusion energy must pass ``through'' S. Then $\mathrm{\tau_{rep}}$ would be set by the time to empty the PbLi liquid from the tank, replace the pistons, and refill the tank. The PbLi liquid  cost should be included in $\mathrm{(M\$/S)_{FPP}}$ since it is a lifetime component. The concept compresses plasma targets formed by coaxial helicity injection, therefore target costs must include the cost of electrodes, which have finite erosion per pulse. The $\mathrm{(c/Y)_{target}}$ value is obtained from the electrode cycle limit, their unit cost and the fixed fusion energy yield per pulse.  The determination of $\mathrm{\eta_E}$ must include the compression energy recovered by piston recoil from the liquid expansion following the pulse.

    \item \textbf{Toroidal MFE with segmented  metal-structure blanket} \cite{lion2025stellaris,Overview_ARIES-RS}
    S is the first-wall components attached to the blanket structure, and $\mathrm{X_S}$ is likely set by neutron fluence to those solid components. 
    The $\mathrm{(M\$/S)_S}$ would include the replaceable neutron damaged blanket segments, while the vacuum vessel exterior to the blanket is a lifetime component included in $\mathrm{(M\$/S)_{FPP}}$. Due to the highly varied neutron fluence and spectrum as a function of distance into the blanket, it can be the case that the innermost radial segments are replaced at a faster rate than the outer segments (e.g. blankets and shields in \cite{Overview_ARIES-RS, Overview_ARIES-AT}). In this case the $\mathrm{(M\$/S)_S}$ should reflect the lifetime-averaged replacement frequency to fit with the temporal equilibrium of the framework. For instance, if the outer segments are replaced every other maintenance, then $\mathrm{(M\$/S)_S}$ would be half of the outer segment cost added to the full cost of the inner segments.  $\mathrm{\tau_{rep}}$ will be determined from the mechanical and installation requirements of the removal and installation of the sectors through access ports. $\mathrm{(M\$/S)_S}$ should account for disposal costs of the full blanket segments and associated hardware.
    When used \cite{lion2025stellaris} the production and delivery cost per cryogenic fuel pellet must be included in $\mathrm{(c/Y)_{target}}$ where its numerical value would be determined from the fusion power, the required pellet injection rate and the pellet cost.

    \item \textbf{Current-pulse driven magnetized liner inertial fusion} \cite{alexander2025affordable}
    S is the internal area of the chamber/blanket with a direct view of the injected target.  $\mathrm{X_S}$ will be set by neutron damage to the chamber/blanket solid structures holding the working fluid of the blanket (see Fig. 19 of \cite{alexander2025affordable}) and possibly the target-injection hardware which can receive finite neutron flux. The  $\mathrm{(M\$/S)_S}$ and  $\mathrm{\tau_{rep}}$ would be determined by the cost and replacement times of the chamber/blanket solid structures, while the blanket liquid cost should be in     $\mathrm{(M\$/S)_{FPP}}$.  In this FPP concept the $\mathrm{(c/Y)_{target}}$ involves the cost of the MagLIF target and the replaceable transmission lines which are consumed with each pulse. 

\end{itemize}

\section{Results: Evaluating FPP economics}\label{sec:Evaluating}

\subsection{Summary of controlling equations and parameters for $Q_{econ}$}
\label{sec:summary_equation_controlling_parameters}

Section \ref{sec:derivation} contains the detailed derivations of the four economic gain and loss rates, and the detailed definitions of their ten controlling parameters. For convenience the controlling parameters are summarized in Table \ref{tab:Control_parameters}.

\begin{table}[ht]
\caption{Controlling parameters for the FPP economic framework \label{tab:Control_parameters}}
\centering
\begin{tabular}{|c||c||c|}
\hline
Description             &   Symbol  &   Unit    \\
\hline \hline
Areal fusion power density during operations   & $P_f/S$     &  $MW/m^2$      \\
\hline
S replacement time           & $\tau_{rep}$  &  $y$      \\
\hline
FPP  lifetime           & $\tau_{life}$ &  $y$      \\
\hline
Net Price of Energy (\$ / $P_{net}$ )             & $POE_{net}$          &  $ \$ /MW \text - h$      \\
\hline
% Energy to dpa conversion*     & $F_{dpa}$     & $\frac{dpa}{(MW/m^2) \text -  y}$    \\
% \hline
% S dpa lifetime limit*         & $L_{dpa}$     &  dpa      \\
% \hline
S energy fluence limit         & $X_{S}$     &  $MW \text - y /m^2$     \\
\hline
FPP net energy conversion efficiency   & $\eta_E$   &  $\equiv P_{net}/{P_f}$      \\
\hline
Fusion target cost per energy yield        & $(c/Y)_{target}$   &  $ \$ / MJ$      \\
\hline
Integrated S replacement areal cost                & $(M\$/S)_S$    &  $M\$/m^2$      \\
\hline
Integrated FPP areal cost            & $(M\$/S)_{FPP}$ &  $M\$/m^2$  \\
\hline
Construction + Financing real interest rate               &  $i$                & $\%$ \\
\hline
\end{tabular}
\end{table}

For convenience we summarize the governing equations for the economic gain and loss rates.
The economic gain rate from selling energy is
\begin{equation}\label{eq:Cgain_summary}
    C_{gain}   = 
    8.76\times10^{-3}POE_{net}  \cdot \eta_E \cdot(P_f/S) \cdot U \hspace{5mm}, \vspace{2mm}
\end{equation}
the cost rate of consumable fusion targets is
\begin{equation}\label{eq:C_target_summary}
   C_{target}
   =31.5 \enspace (c/Y)_{target} \cdot (P_f/S) \cdot U \hspace{5mm}, \vspace{2mm}
\end{equation}
the replacement cost rate of S is
\begin{equation}\label{eq:C_S_Rep_summary}
    C_{S,rep}  = (M\$/S)_S \cdot \frac{(P_f/S)}{X_S} \cdot U \hspace{5mm}, \vspace{2mm}
\end{equation}
and the construction, financing and operational fixed cost rate is
\begin{equation}\label{eq:C_CF_cost_summary}
    C_{fixed} [M\$/m^2 \text -  y]  = (M\$/S)_{FPP} 
    \frac{0.01 \cdot i(1+0.01 \cdot i)^{\tau_{life}}}{(1+0.01 \cdot i)^{\tau_{life}}-1}
    \hspace{3mm} . \vspace{2mm}
\end{equation}

As shown in Section \ref{sec:utilization_factor} the utilization factor U, which appears in three of the rate equations, can be determined from the controlling parameters by solving for $\mathrm{\tau_{op}}$ based on the energy fluence limit of S, $\mathrm{X_S}$. For completeness we restate Equation \ref{eq:U_rearranged_definition} as the full solution to U,
\begin{equation}\label{eq:Utilization_definition_summary}
   U = [1 + \frac{(P_f/S) \cdot \tau_{rep} }{X_{S}}]^{-1} \hspace{5mm}. \vspace{2mm}
\end{equation}
The design parameters appear here in an inverse sum while also appearing elsewhere in the rate equations. This makes the economic model non-linear, particularly in fusion power density.
Finally, the surface normalized net rate of economic gain or loss, $\mathrm{C_{net}}$ in [$M\$ /m^2 \text -  y$] for the FPP is attained by the balance of economic gain and the three cost terms, specifically:
\begin{equation}\label{eq:C_net_definition}
    C_{net} \enspace [\frac{M\$}{m^2 \text -  y}]  = 
    C_{gain} - C_{target} - C_{S,rep} - C_{fixed} \hspace{3mm}. \vspace{2mm}
\end{equation}

The economic framework took its inspiration from the Lawson criterion, which allows one to examine the physical requirements for the plasma energy gain $\mathrm{Q_p}$, the ratio of fusion power output to external power input, in power balance.  Analogously, with this economic model at steady-state economic rates, we can now define and calculate
\begin{equation}\label{eq:Q_econ_definition}
  Q_{econ} \equiv \frac{C_{gain}}{C_{target}+C_{S,rep}+C_{fixed}} \hspace{3mm}, \vspace{2mm}
\end{equation}
which is the ratio of economic output to external economic ``input'' (i.e., the cost or expenditure), thus providing a conceptual figure of merit similar to $\mathrm{Q_p}$. As with the Lawson criterion, it is generic and can be applied to any fusion concept or design. Likewise, as with an engineered fusion system, it captures the ``amplifier'' effect of fusion; a successful FPP amplifies both energy and economic input.

The model is intrinsically inflation-adjusted due to the use of real interest rates in Equation \ref{eq:C_CF_cost_summary} in the financing costs.  The operational parameters involving explicit prices ($\mathrm{(c/Y)_{target}}$, $\mathrm{(M\$/S)_S}$,$\mathrm{POE_{net}}$) will vary in monetary value with inflation over the FPP lifetime, but these variations will not affect the $\mathrm{Q_{econ}}$ ratio since all its terms vary together with inflation.  We adopt a simple timing convention which is to assign prices at the time the FPP has been constructed, by expending $\mathrm{(M\$/S)_{C+F}}$, and energy production operations are beginning for the first time for the FPP.  This means that the numerical value of $\mathrm{C_{net}}$ will vary over the FPP lifetime but $\mathrm{Q_{econ}}$ and the purchasing power of $\mathrm{C_{net}}$ are constant in time.

The controlling parameters of $\mathrm{Q_{econ}}$ come from science, engineering, finance, and their interactions.  $\mathrm{Q_{econ}}$ is designed as a concept-agnostic design-window metric, analogous to the role of scientific $\mathrm{Q_p}$ in fusion plasma science. As with the distinction between scientific and engineering $\mathrm{Q_{eng}}$ \cite{menard2011prospects}, obtaining $\mathrm{Q_{econ}}>1$ is a necessary but insufficient condition for commercial viability. A complete FPP finance assessment which incorporates Net Present Value (NPV), Internal Rate of Return (IRR), cash flow timing, capital structure, and market dynamics, requires plant-specific inputs that are generally not available at this development stage for fusion. While these factors are  outside the scope of this present work, expanding the framework for their inclusion is discussed in Sec. \ref{sec:additional considerations}.
Nevertheless, NPV and IRR are strongly linked to the concepts developed in this framework and $\mathrm{Q_{econ}}$, such as an understanding of how capital outflow and the cost of capital affect economic viability.
The $\mathrm{LCOE_{eff}}$ (Eq \ref{eq:LCOE_definition} derived in Sec. \ref{sec:Evaluation_Cnet}) similarly carries an "effective" qualifier acknowledging its omission of depreciation, tax treatment, and other factors included in a standard NPV-based LCOE.

In addition, certain categories of upfront costs (e.g., regulatory burden, development costs) have been excluded by default from the framework due to the temporal equilibrium required for the model. Such upfront costs typically become relatively less important as an industry matures, stressing that this economic framework is most accurately applied to mature NOAK FPPs. 
To link this conceptually back to the analogous Lawson criterion, this is the equivalent of excluding the energy requirements to get the plasma to the operating point required for a given $\mathrm{Q_p}$. However, if the steady-state equilibrium gain is too low, there is no fundamental viability, regardless of those transient details. These features make both $\mathrm{Q_p}$ and $\mathrm{Q_{econ}}$  useful and fundamental targets for FPP design evaluations.

\subsection{Solving economic gain, $\mathrm{Q_{econ}}$, overnight costs and LCOE}\label{sec:Evaluation_Cnet}

The net gain equations are cast into in a more convenient form for solving.
Examination of the controlling parameters and Equations \ref{eq:Cgain_summary} - \ref{eq:Utilization_definition_summary} indicates that the areal fusion power density $\mathrm{P_f/S}$ and the utilization U appear in three of four equations (gain, target, and replacement), all of which have the same functional dependence on $\mathrm{(P_f/S)\cdot U}$, with U depending on $\mathrm{P_f/S}$, suggesting the following form to solve: 
\begin{equation}\label{eq:C_BE_equation_form}
   \frac{A \cdot(P_f/S)}{1+(P_f/S)\cdot B}=C \hspace{3mm}. \vspace{2mm}
\end{equation}
We define the variable A, which is a ratio of economic worth to energy:
\begin{equation}\label{eq:A_definition}
\begin{split}
    A [\frac{M\$}{MW \text -  y}]=A_1-A_2-A_3 \\
   A_1 \enspace  \equiv 8.76\times10^{-3}POE_{net}  \cdot \eta_E \enspace \\
   A_2 \enspace \equiv 31.5 \cdot (c/Y)_{target} \\
   A_3 \enspace \equiv \frac{(M\$/S)_S}{X_S} \hspace{3mm}. \vspace{2mm}
\end{split}
\end{equation}
The variable A thus has no dependence on $\mathrm{P_f/S}$. We also define
\begin{equation}\label{eq:B_definition}
   B \enspace [\frac{m^2}{MW}] \equiv \frac{\tau_{rep}}{X_S} \hspace{3mm}, \vspace{2mm}
\end{equation}
and
\begin{equation}\label{eq:C_definition}
   C \enspace [\frac{M\$}{m^2 \text -  y}] \equiv C_{fixed}+C_{net} \hspace{3mm}. \vspace{2mm}
\end{equation}
This allows us to solve for the required $\mathrm{P_f/S}$ at a targeted $\mathrm{C_{net}}$, which sets C since $\mathrm{C_{fixed}}$ is fixed by definition, namely:
\begin{equation}\label{eq:PfS_solution}
   (P_f/S)_C=\frac{C}{A-(B\cdot C)} \hspace{3mm}. \vspace{2mm}
\end{equation}

A special case occurs when the FPP achieves economic breakeven (denoted as ``BE"), at $\mathrm{C_{net} = 0}$ and $\mathrm{Q_{econ}=1}$, which occurs when the gain and loss rates offset each other:
\begin{equation}\label{eq:C_BE_definition}
   [C_{gain} - C_{target} - C_{S,rep} - C_{fixed}]_{BE} = 0 \hspace{3mm}. \vspace{2mm}
\end{equation}
The breakeven solution, $\mathrm{C_{net}=0}$ for areal power density, follows from Equation \ref{eq:PfS_solution},
\begin{equation}\label{eq:PfS_BE_solution}
   (P_f/S)_{BE}=\frac{C_{fixed}}{A-(B\cdot C_{fixed})} \hspace{3mm},  \vspace{2mm}
\end{equation}
which is the minimum power density required for economic viability when the other controlling parameters are fixed.  

This formulation may be adapted to solve for the value of $\mathrm{P_f/S}$ for a targeted $\mathrm{Q_{econ}}$,
\begin{equation}\label{eq:PfS_Q_econ_solution}
   (P_f/S)_{Q_{econ}}=\frac{Q_{econ} \cdot C_{fixed}}
   {A_1-Q_{econ} \cdot (A_2 + A_3+ B \cdot C_{fixed})} \hspace{3mm}. \vspace{2mm}
\end{equation}
While the relationship between $\mathrm{P_f/S}$ and $\mathrm{Q_{econ}}$ is nonlinear it is monotonic, thus assuring unique solutions. 

Equations \ref{eq:C_BE_equation_form} through \ref{eq:PfS_Q_econ_solution} constitute the set of solvable equations and definitions that will link the model's engineering parameters to its economic parameters of net gain rate and $\mathrm{Q_{econ}}$. Using solutions for $\mathrm{P_f/S}$ at a targeted $\mathrm{Q_{econ}}$ (or at breakeven), various operational and economic parameters can be evaluated, since $\mathrm{P_f/S}$ is unique.  The utilization parameter is of interest as an engineering parameter, as it informs the FPP operator about maximum availability:
\begin{equation}\label{eq:Utilization_PfS_solution}
   U_{Q_{econ}} = [1 + \frac{(P_f/S)_{Q_{econ}} \cdot \tau_{rep} }{X_S}]^{-1} \hspace{3mm}. \vspace{2mm}
\end{equation}
Likewise, as an economic and market parameter, the overnight cost can be solved from Equation \ref{eq:covernight_def}:
\begin{equation}\label{eq:covernight_def_again}
    c_{O/N,Q_{econ}} [\$/W] = \frac{(M\$/S)_{FPP}}{(P_f/S)_{Q_{econ}}\cdot\eta_E} \hspace{3mm}. \vspace{2mm}
\end{equation}
We note that this definition is consistent with the energy industry standard by using the maximum power generation possible, rather than the time-averaged power. 

We may also extract an effective levelized cost of energy (LCOE) from the model. The LCOE, which is typically given in units of $\mathrm{ \$/MW\text -  h }$, is defined as the total cost (including operating, construction, and financing costs) per energy output over the lifetime of the FPP. Due to the temporal equilibrium of the model, this will be a constant in time. Again exploiting the normalization to S, our derivation starts with:
\begin{equation}\label{eq:LCOE_definition}
   LCOE_{eff} \enspace [\$/MW\text -  h] =
   114.2 \frac{C_{target}+C_{S,rep}+C_{fixed}}{{(P_f/S)} \cdot \eta_E \cdot U} \hspace{3mm}. \vspace{2mm}
\end{equation}
The denominator provides the time-averaged net power output and the constant provides unit conversion to the commercial convention. The ``effective'' subscript acknowledges the lack of fixed costs, depreciation and other factors  which are normally included in a NPV-based LCOE calculation. As with $\mathrm{Q_{econ}}$, $\mathrm{LCOE_{eff}}$ is a useful figure of merit to gauge market access. By examination, we obtain another solution form:
\begin{equation}\label{eq:LCOE_definition_Qecon}
   LCOE_{eff} = \frac{POE_{net}}{Q_{econ}} \hspace{3mm}, \vspace{2mm}
\end{equation}
which allows for easy evaluation of the $\mathrm{LCOE_{eff}}$ from graphical representations of $\mathrm{Q_{econ}}$ with knowledge of $\mathrm{POE_{net}}$, and therefore will not be plotted explicitly going forward as a model output.   
We note that $\mathrm{LCOE_{eff}}$ does not actually depend on $\mathrm{POE_{net}}$, since it does not appear in any term in Equation \ref{eq:LCOE_definition} and  both the numerator and denominator of Equation \ref{eq:LCOE_definition_Qecon}  vary linearly with $\mathrm{POE_{net}}$.
The parameters $\mathrm{U_{Q_{econ}}}$, $\mathrm{c_{O/N,Q_{econ}}}$ and $\mathrm{LCOE_{eff}}$ are \textit{outputs} of the model framework (not assumptions) at the required power density to meet its economic gain constraints. 

\subsection{Model base case and ranged values }\label{sec:base_case_parameters}

We estimate a set of base case values for the input control parameters in order to exercise the economic framework.
The base case values in Table \ref{tab:Default_parameters} are chosen based on their reasonability with projected FPP designs and on historic data from earlier fusion and nuclear fission power projects.  \emph{It is important to note that this effort is in no way a cost or design prediction for an FPP; that must occur in its bottom-up technical design.} However, we desire to have reasonable base case values to explore trends and to understand their relative sensitivity to design parameters around that point. Regardless, the input parameters will be varied over the large range of values listed in Table \ref{tab:Default_parameters} in the following sections.  The ease in scoping the economic viability with a  large range of controlling parameters is one of the fundamental motivations for developing a high-level model framework.

\begin{table}[ht]
\caption{Base case and ranged values of input parameters (Table \ref{tab:Control_parameters}) for scoping sensitivity studies.  \label{tab:Default_parameters}}
\centering
\begin{tabular}{|c||c||c||c|}
\hline
Parameter       &    Base case &   Range      &    Unit     \\
\hline \hline
$\tau_{rep}$    &       0.1  &  0 - 0.5     &  $y$      \\
\hline
$\tau_{life}$   &      30.0  &  10 - 40     & $y$      \\
\hline
$POE_{net}$             &       100  &  20 - 200   &  $ \$ /MW\text -  h$ \\
\hline
$X_S$           &      3.125 &  0.1 - 6     &  $MW \text -  y /m^2$\\
\hline
$\eta_E$        &       0.4  &  0.1 - 0.6    &  $\equiv \frac{P_{net}}{P_f}$  \\
\hline
$(c/Y)_{target}$&       0  &  0 - 0.01     &  $ \$ / MJ$      \\
\hline
$(M\$/S)_S$     &       0.3  &  0 - 1     &  $M\$/m^2$      \\
\hline
$(M\$/S)_{FPP}$ &      10    &  3 - 25     &  $M\$/m^2$  \\
\hline
$i$              &        2.0  &  -2 - 5     & $\%$ \\
\hline
\end{tabular}
\end{table}

The replacement time of 0.1 y ($\sim $ 5 weeks) is taken from typical refueling time of a fission power plant. A plant lifetime $\mathrm{\tau_{life}}$ of several decades is expected and a financing timescale of 15 to 30 years is typical of large construction development. The $\mathrm{POE_{net}}$ is estimated from the average inflation-adjusted US retail price of electricity over the last decade, $\mathrm{\sim 160 \enspace \$ /MW \text -  h}$, under the assumption that generation is $\mathrm{\sim 2/3}$ of that cost and that variable O\&M costs (Eq. \ref{eq:POE_definition} ) are already captured in this value.   The energy conversion efficiency $\mathrm{\eta_E=0.4}$ is typical of predictions for fusion blanket designs (c.f. ARIES designs in Table \ref{tab:FPP_parameters}, Section \ref{sec:Areal_cost_appendix}). A real interest rate of 2 \% is estimated based on the US Federal Reserve inflation target of 2 \%. The real interest rate can vary below zero, for example in cases where the cost of capital is substantially less than inflation. The base case target cost is set to zero, but will be varied over a large range, up to 0.01 \$/MJ or \$10 per target for gigajoule yield. Thus, the default case has effectively zero cost for fuel, a distinguishing feature of many fusion concepts.

There is little to no direct data on the energy fluence limit for FPP components, and thus $\mathrm{X_S}$ is given wide variability.   From the examples of D-T fusion in Appendix \ref{sec:Xs_appendix}, taking $\mathrm{L_{dpa}=25}$ and $\mathrm{F_{dpa} \simeq 10 \enspace dpa/m^2/MW \text -  y}$ provides a value of $\mathrm{X_S=3.125 \enspace MW \text -  y /m^2}$. The reason D-T fusion is used as an example is that it is highly likely that fast neutron fluence determines $\mathrm{X_S}$. However, the results presented here are generic to any FPP fuel cycle, and the value of $\mathrm{X_S}$ will be widely varied regardless in sensitivity scans. To place this assumption in context,  the solution example in Figure \ref{fig:Cnet_example} has a required fusion power density $\mathrm{\sim 2-5 \enspace MW/m^2}$ for economic viability and so $\mathrm{X_S \simeq \enspace 3}$ implies an operational lifetime of S $\sim 0.6-1.5$ years. This seems like a reasonable starting point and again aligns with the fission experience, where refueling occurs approximately once a year. We  note that the model accommodates fusion concepts where S is continually being replenished (e.g. flowing liquid walls) in which case either $\mathrm{\tau_{rep}} \rightarrow 0$ or $\mathrm{X_S \rightarrow \infty}$ capture this design feature without diverging the values $\mathrm{C_{net}}$ or $\mathrm{Q_{econ}}$.

The base case FPP  areal cost is set at $\mathrm{(M\$/S)_{FPP}=10 \enspace M\$/m^2}$ based on historical data and FPP design studies. The default S areal cost is taken as $3\%$ of the FPP areal cost, $\mathrm{(M\$/S)_{S}=0.3 \enspace M\$/m^2}$, with the realization that there is a such a wide variety of technology choices for S and its associated blanket that the value is inherently difficult to estimate.  Both of these cost choices are covered in detail in the Discussion, Section \ref{sec:Discussion} . It is noted that $\mathrm{(M\$/S)_{S}}$ is only the replacement cost, and therefore would not include any materials and components in S or its blanket that are reused.  An example of this would be the cost of coolants and liquid breeders, which would presumably be recovered and recycled after each operational period.   As both these areal costs have large uncertainties, they will be varied by large fractions of their base case values in the sensitivity analysis.

\subsection{Example solution}

An illustrative solution for $\mathrm{C_{net}}$ versus $\mathrm{P_f/S}$ is provided in Figure \ref{fig:Cnet_example}. The choice of base case input parameters listed in Table \ref{tab:Default_parameters} is described in the previous section. From the input parameters, the derived solution variables are: $\mathrm{A_1}$=0.35, $\mathrm{A_2}$=0, $\mathrm{A_3}$=0.032 $\mathrm{M\$/MW \text -  y}$ (Equation \ref{eq:A_definition}),  B=0.096 $\mathrm{m^2/MW}$ (Equation \ref{eq:B_definition}) and $\mathrm{C_{fixed}}$=0.446 $\mathrm{M\$/m^2 \text -  y}$ (Equation \ref{eq:C_definition}). 

\begin{figure}[H]
    \centering
    \includegraphics[width=0.85\textwidth]{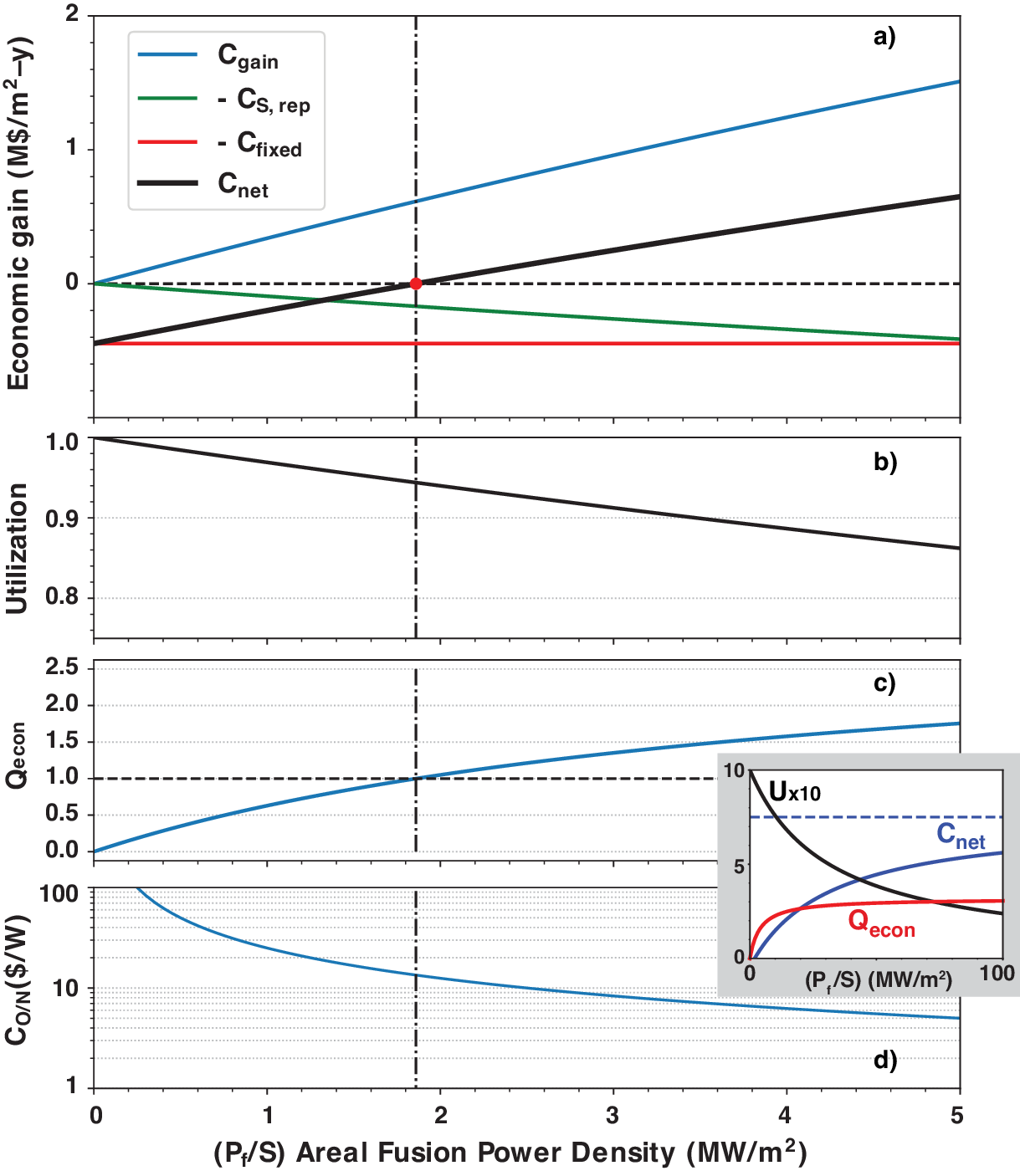}
    \caption{Example calculation of economic model parameters versus $\mathrm{P_f/S}$ with parameters listed in Table \ref{tab:Default_parameters}: a) Economic gain, cost and net rates with costs plotted as negative values. Breakeven occurs where $\mathrm{C_{net}}$ crosses zero at $\mathrm{(P_f/S)_{BE} \simeq 1.9 \enspace MW/m^2}$ indicated by the vertical line, with the analytic solution (Equation \ref{eq:PfS_BE_solution}) for $\mathrm{(P_f/S)_{BE}}$ indicated by red dot. b) Utilization factor U  c) Economic gain $\mathrm{Q_{econ}}$ and d) Overnight cost. (Insert) $\mathrm{C_{net}}$, utilization and $\mathrm{Q_{econ}}$ at higher $\mathrm{P_f/S}$, asymptote maximum values of $\mathrm{C_{net,max}}$ is the horizontal dashed line.}
    \label{fig:Cnet_example}
\end{figure}

We see that $\mathrm{C_{net}}$ increases monotonically, but nonlinearly, with $\mathrm{P_f/S}$ starting from negative values at $\mathrm{P_f/S=0}$, and crossing zero at $\mathrm{(P_f/S)_{BE}\simeq 1.9 \enspace MW/m^2}$. At BE the utilization is 0.94, and \textit{decreases} as the economic gain rate increases for $\mathrm{P_f/S}$ above breakeven. This already provides us with an important insight: the economic attractiveness of fusion prefers to be at lower utilization values due to the direct link between energy output and the energy ``usage'' of S. This trend will push against FPP market requirements for high availability.

%Joule a topic which will be explored later.
 
By definition, $\mathrm{Q_{econ}}$=1.0 at BE, and then varies nonlinearly with $\mathrm{P_f/S}$, with $\mathrm{Q_{econ}}$=1.5 achieved at $\mathrm{(P_f/S)\simeq  3.7 \enspace MW/m^2}$, more than 1.5 times the power density at BE. At BE the overnight cost is $\mathrm{c_{O/N}\simeq 12 \enspace \$/W}$, which is typical of entry-level price points for new energy sources. The overnight cost decreases with higher $\mathrm{P_f/S}$, and at $\mathrm{Q_{econ}}$=1.5, it has decreased to $\mathrm{c_{O/N}\simeq 6 \enspace \$/W}$, which is becoming competitive for dispatchable power. At $\mathrm{Q_{econ}}$=1.5, and with $\mathrm{POE_{net}}$ set at $\mathrm{100 \enspace \$/MW \text - h }$,  this gives $\mathrm{LCOE_{eff}}$ at 67 $\mathrm{\$/MW\text -  h}$. The fact that the overnight and levelized costs generated by the economic model are market typical indicates the practical nature of the model, and that our base case parameters (Table \ref{tab:Default_parameters}) are a reasonable starting point for exploring their relative sensitivities for an economically interesting FPP.

The insert in Figure \ref{fig:Cnet_example} shows the asymptotic behavior of the economic model at very high fusion power densities.  Above $\mathrm{P_f/S\sim 50 \enspace MW/m^2}$,  both $\mathrm{C_{net}}$ and $\mathrm{Q_{econ}}$ are highly nonlinear and start to saturate, although $\mathrm{Q_{econ}}$ flattens at a lower power density and becomes constant at $\mathrm{Q_{econ}\sim 3}$. In this region, the utilization falls below 0.5, indicating that the economics of the FPP are becoming dominated by its operating costs.  

\subsection{General observations on economic gain solutions}\label{sec:Evaluation_observations}

An examination of the constituent equations in the previous sections provides three high-level insights. 

\begin{enumerate}
   
    \item

All cases of interest should have at least a positive economic return $\mathrm{C_{net} \geq 0}$, in which case C is definitely positive, as are $\mathrm{P_f/S}$ and B. Therefore, a fundamental requirement for economic viability, $\mathrm{C_{net} \geq 0}$ or $\mathrm{Q_{econ} \geq 1}$,  is that $\mathrm{A \geq 0}$, or specifically:
\begin{equation}\label{eq:A_requirement}
    A_1 \geq A_2+A_3 \hspace{3mm}, \vspace{2mm}
\end{equation}
which will depend on a variety of market and technical parameters. 

    \item 
    
While $\mathrm{A \geq 0}$  is necessary, it is an insufficient condition for economic viability. Notably, at an arbitrarily low areal fusion power density, $\mathrm{U \rightarrow 1}$, and the left-hand side of Equation \ref{eq:C_BE_equation_form} collapses to $\mathrm{A \cdot P_f/S }$, which cannot meet the breakeven criterion  as $\mathrm{P_f/S \rightarrow 0 }$ (see Figure \ref{fig:Cnet_example}).  
While this may seem trivial mathematically, it is in fact fundamental to the framework, because it speaks to the purpose of an FPP, which is to make economic gains selling energy for its operator, \emph{not} assuring that the FPP has an extremely high utilization by slowly wearing through the components of S.  Further to this point, the left-hand side of Equation \ref{eq:C_BE_equation_form} increases monotonically with $\mathrm{P_f/S}$, and therefore so does C and $\mathrm{C_{net}}$. Stated differently, increasing $\mathrm{P_f/S}$ always improves economic gain if $\mathrm{A > 0}$. Correspondingly, if $\mathrm{A < 0}$, then increasing $\mathrm{P_f/S}$ always increases the economic loss rate, since $\mathrm{C_{net}}$ will become more negative, which means the FPP is losing money in its operation.

    \item 
    
In the limit of an arbitrarily high power density, the net gain reaches an asymptote and maximizes at:
\begin{equation}\label{eq:Cnet_PfS_high}
   C_{net,max} = \frac{A}{B}-C_{fixed} \enspace
   \textrm{ , for} \ P_f/S \rightarrow \infty \hspace{3mm}. \vspace{2mm}
\end{equation}
Similarly, the asymptotic behavior of economic gain is:
\begin{equation}\label{eq:Q_econ_max}
   Q_{econ,max} = \frac{A_1}{A_2+A_3+B \cdot C_{fixed}} \hspace{3mm}. \vspace{2mm}
\end{equation}
This limit corresponds to the case where the fusion power density is so high that the energy fluence limit through S is reached nearly instantaneously.  This represents the maximum economic gain in the model, where one is producing energy as fast as it is possible to sell, the only limitation being the required waiting time to replace S in  $\mathrm{\tau_{rep}}$.  This limit does not judge whether or not the power density is achievable, nor if the  market would accept a large instantaneous influx of energy. Regardless, this represents an interesting theoretical limit to economic gain in an FPP.  An example of this asymptotic behavior at high $\mathrm{P_f/S}$ is shown in the insert of Figure \ref{fig:Cnet_example}, with substantial flattening of $\mathrm{Q_{econ}}$ and $\mathrm{C_{net}}$ occurring past $\mathrm{P_f/S \sim 50 \enspace  MW/m^2}$ and utilization $\mathrm{U< 0.5}$.  In this example, $\mathrm{Q_{econ}}$ reaches $\sim$ 3, but at areal power densities which likely surpass surface cooling limits. Regardless of its practicality, it is important to note that an FPP will have a finite maximum $\mathrm{Q_{econ}}$.

\end{enumerate}

\section{Results: Economic gain evaluation varying two control parameters}\label{sec:results_two_params}

With the economic framework derived in Section \ref{sec:derivation} and tools developed for its evaluation in Section \ref{sec:Evaluating}, we are now interested in evaluating the criteria that provide economic viability  design ``spaces" by varying the control parameters. However, with ten controlling parameters it is not possible to provide graphical evaluation of these spaces for the full model. In this section we will vary two controlling parameters, and the key model outputs ($\mathrm{Q_{econ}}$, utilization, overnight cost) are presented in contour plots.

\subsection{Varying $\mathrm{P_f/S}$ and one other parameter }\label{sec:Sensitivity_PfS_scan}

Fusion power density is a natural choice of controlling parameter to vary in our framework. Of all the controlling parameters, this is the only one with a direct link to fusion plasma performance (and indirectly, to {$\mathrm{Q_p}$}), and thus it must be a focus for any FPP designer. Furthermore, $\mathrm{P_f/S}$ is the term that appears most frequently in the governing equations, and so we reasonably expect interesting behavior when it is varied against the other controlling parameters. In this section, we will examine the results of the economic model as $\mathrm{P_f/S}$ is varied from 0.5 to 10 $\mathrm{MW/m^2}$ while one other controlling parameter is varied over the range provided in Table \ref{tab:Default_parameters}, whilst the other parameters are fixed at their base case values. The lower limit power density is being set near the lowest $\mathrm{P_f/S}$ of FPP-class devices (ITER 0.7  $\mathrm{MW/m^2}$, Table \ref{tab:FPP_parameters}), up to a likely global heat exhaust technology limit at S at 10 $\mathrm{MW/m^2}$.

\begin{figure}[h]
    \centering
    \includegraphics[width=0.6\textwidth]{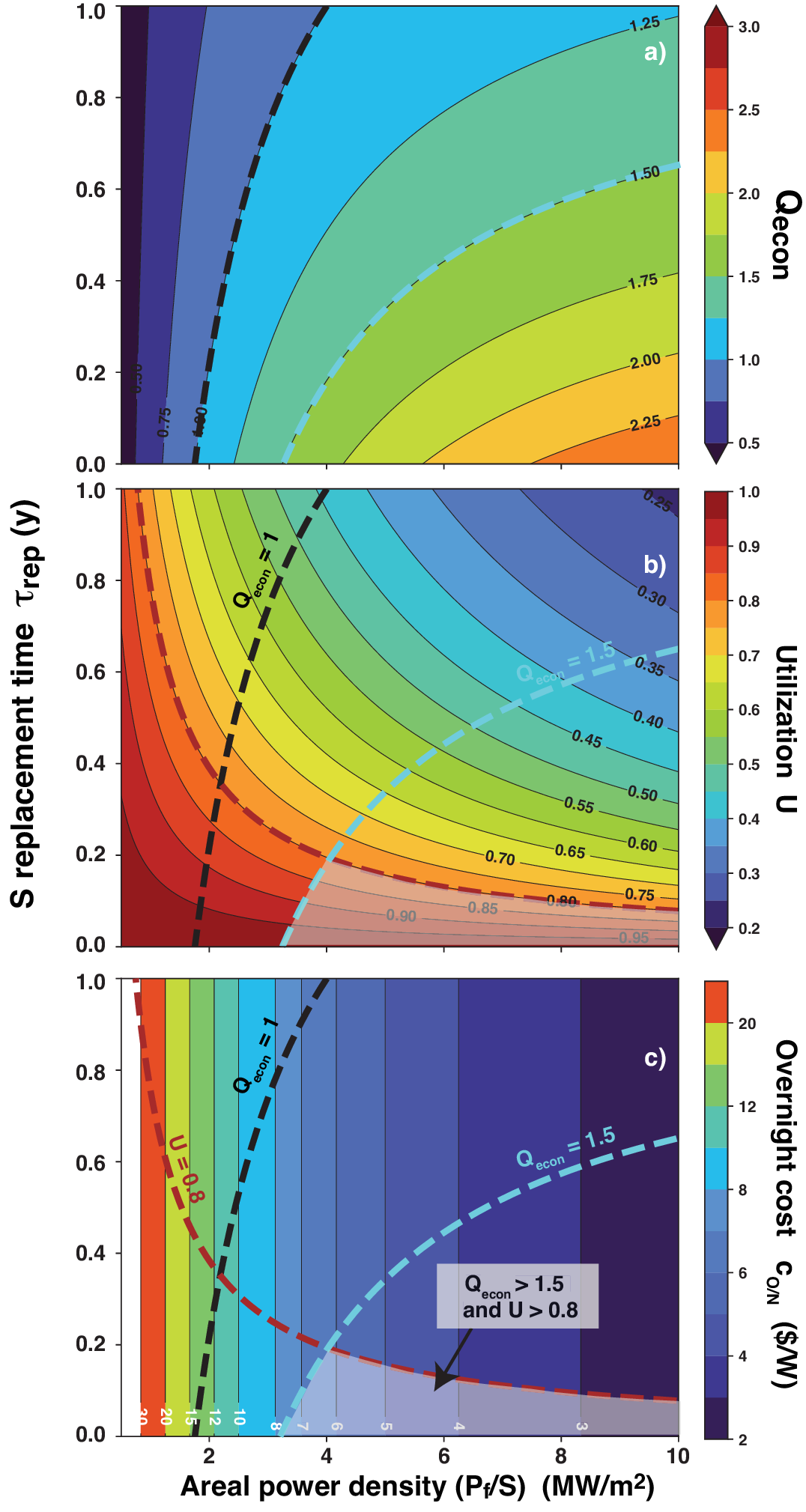}
    \caption{Contour plots of economic model outputs with varying $\mathrm{P_f/S}$ and S replacement time $\mathrm{\tau_{rep}}$ (a) Economic gain $\mathrm{Q_{econ}}$  (b) Utilization U (c) Overnight cost $\mathrm{c_{O/N}}$.
    Overlays of possible minimum commercial goals for $\mathrm{Q_{econ}}$ 
    and U define required design criteria shown by the shaded regions for the example case of $\mathrm{Q_{econ} > 1.5}$ and $\mathrm{U>0.8}$. 
    }
    \label{fig:Contour_PfS_taurep}
\end{figure}

Contour plots of the results of the economic model are depicted in Figure \ref{fig:Contour_PfS_taurep} as a function of $\mathrm{P_f/S}$ and S replacement time $\mathrm{\tau_{rep}}$. This provides insight as to the interaction between these two design features and the necessary combinations that reach a FPP design target for $\mathrm{Q_{econ}}$. The positive economic gain criterion $\mathrm{Q_{econ}=1}$ has a threshold value $\mathrm{\sim 2 \enspace MW/m^2}$ at $\mathrm{\tau_{rep}=0}$, which represents instantaneous S replacement.  This threshold is quite insensitive to the replacement time for $\mathrm{\tau_{rep} \lesssim 0.5}$ years. At higher levels of $\mathrm{P_f/S} \ge 6 $, $\mathrm{Q_{econ}}$ becomes increasingly sensitive to $\mathrm{\tau_{rep}}$  and less sensitive to $\mathrm{P_f/S}$. Access to high gain, e.g., $\mathrm{Q_{econ}>2}$, is disallowed at $\mathrm{\tau_{rep}} \gtrsim 0.2$ years, effectively becoming a threshold design requirement.  The two other important outputs of the model, the utilization and overnight normalized cost, are also shown in  Figure \ref{fig:Contour_PfS_taurep}. The utilization contours are convex in shape, since increasing either $\mathrm{P_f/S}$ or $\mathrm{\tau_{rep}}$ decreases utilization.

A minimum utilization value will likely be a design target for an FPP, for example, to meet customer goals for availability to supply power to an electric utility, or in specialized uses such as powering a data center using a cluster of several FPPs.  A targeted $\mathrm{Q_{econ}}$ for the FPP further constrains the design space.  In the example indicated by the shaded area in Figure \ref{fig:Contour_PfS_taurep}, we show that example simultaneous design targets of $\mathrm{Q_{econ} \geq1.5}$ and $\mathrm{U\geq0.80}$ require a $\mathrm{P_f/S \gtrsim 3 \enspace MW/m^2}$ and $\mathrm{\tau_{rep} \lesssim 0.2 \enspace y}$. While the overnight cost only depends on $\mathrm{P_f/S}$, the permitted values are constrained by both parameters at the given design targets for $\mathrm{Q_{econ}}$ and U.  These plots demonstrate the usefulness of the model framework to inform the design space about potential tradeoffs. For example, if there is a technical decision that low $\mathrm{\tau_{rep}}$ will be more difficult to obtain than high power density, then the design will choose the top corner of the shaded area to meet the other design targets near $\mathrm{\tau_{rep}}\sim0.2$. This in turn will set the overnight cost at $\mathrm{\sim6 \enspace \$/W}$.  We will continue to use this example target region of $\mathrm{Q_{econ} \geq1.5}$ and $\mathrm{U\geq0.80}$ to illuminate how the model framework provides design insights.

The interaction between power density and the S energy fluence limit $\mathrm{X_S}$ is shown in Figure \ref{fig:Contour_PfS_Xs}. The $\mathrm{Q_{econ}}$ contours have a convex shape, and the sharpness of the lower-left corner of isolinear contours increases at low  $\mathrm{Q_{econ}}$. As a consequence, there are effective threshold values for $\mathrm{Q_{econ}=1}$ where the contour becomes vertical at $\mathrm{P_f/S}\sim2$ and horizontal at $\mathrm{X_S}\sim 1$. It is apparent from this plot that there are regions in the design space where increases in $\mathrm{X_S}$ or $\mathrm{P_f/S}$ lead to minimal increases in $\mathrm{Q_{econ}}$. The implications of these contours to technology design and economic effectiveness are further quantified in Section \ref{sec:extract}. 
The isolinear utilization contours have a constant positive slope, with the slope varying with U.  The target design example of $\mathrm{Q_{econ} \geq1.5}$ and $\mathrm{U\geq0.80}$  gives a  design window in the shaded area, further defining threshold values of  $\mathrm{P_f/S \gtrsim 3  \enspace MW/m^2}$ and $\mathrm{X_S \gtrsim \enspace 2 \enspace MW \text - y/m^2}$. At the lower corner of the shaded area, the overnight cost is $\mathrm{\sim5.5 \enspace \$/W}$, similar to the result from Figure \ref{fig:Contour_PfS_taurep}. 

\begin{figure}[h]
    \centering
    \includegraphics[width=0.6\textwidth]{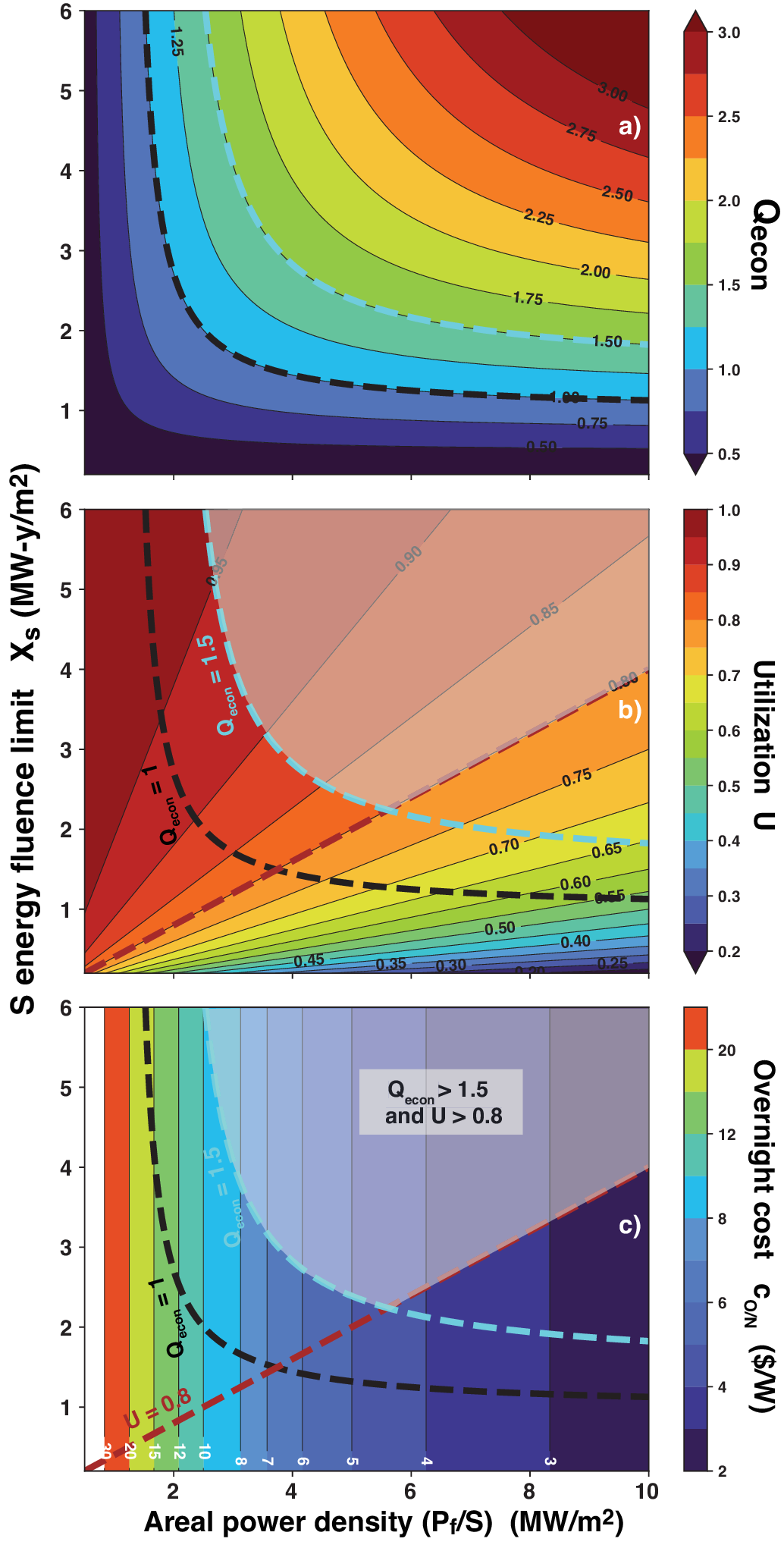}
    \caption{
    Contour plots of economic model outputs with varying $\mathrm{P_f/S}$ and $\mathrm{X_S}$ (a) Economic gain $\mathrm{Q_{econ}}$  (b) Utilization U (c) Overnight cost $\mathrm{c_{O/N}}$
    Overlays of possible minimum commercial goals for $\mathrm{Q_{econ}}$ and U indicate define required design space. Shaded regions shows example target of $\mathrm{Q_{econ} > 1.5}$ and $\mathrm{U>0.8}$.
    }
    \label{fig:Contour_PfS_Xs}
\end{figure}

Figure \ref{fig:Contour_simple_two} shows the interactions between power density and FPP areal cost and conversion efficiency. In these cases, the utilization depends only on $\mathrm{P_f/S}$, which is indicated in the top panel. The $\mathrm{Q_{econ}}$ have positive slopes in the $\mathrm{(M\$/S)_{FPP}}$ versus $\mathrm{P_f/S}$ plot, with $\mathrm{Q_{econ}}$ becoming more sensitive to $\mathrm{(M\$/S)_{FPP}}$ at higher $\mathrm{P_f/S}$. The threshold power density on the $\mathrm{Q_{econ}}=1$ contour decreases to $\mathrm{P_f/S} \sim 1$ at the lowest cost point.  The $\mathrm{c_{O/N}}$ contours have a  similar shape to the  $\mathrm{Q_{econ}}$ contours, but with a constant slope at a given $\mathrm{c_{O/N}}$. The $\mathrm{Q_{econ}}$ and $\mathrm{c_{O/N}}$ isolinear contours are convex in the $\mathrm{\eta_E}$ versus $\mathrm{P_f/S}$ space. The example $\mathrm{Q_{econ} \geq1.5}$ and $\mathrm{U\geq0.80}$ spaces provide both lower and upper limits on  $\mathrm{P_f/S}$,  an upper limit on $\mathrm{(M\$/S)_{FPP}\sim 18 \enspace M\$/m^2}$, a lower limit $\mathrm{\eta_E\sim0.3}$, and at its edges, a maximum $\mathrm{c_{O/N}\sim 5-6 \enspace \$/W}$, again similar to the cases discussed above. 

\begin{figure}[h]
    \centering
    \includegraphics[width=0.55\textwidth]{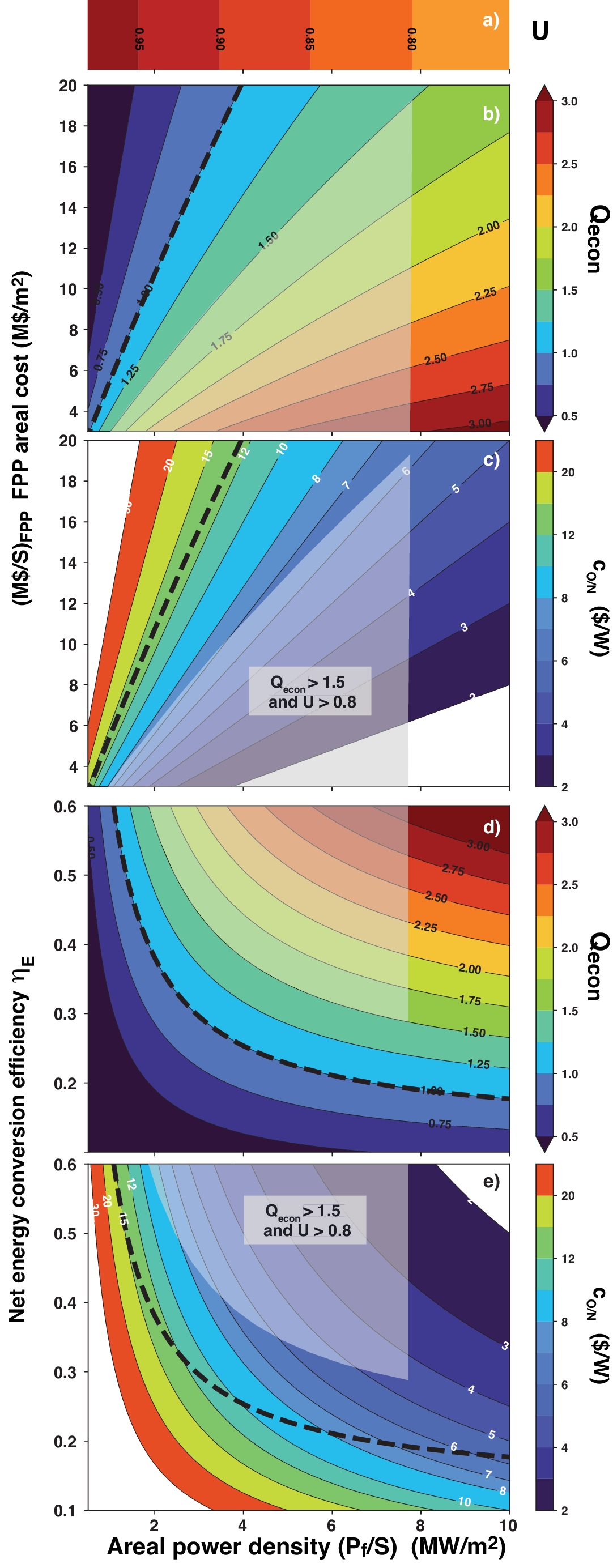}
    \caption{Economic model output varying $\mathrm{P_f/S}$ versus scanned controlling parameters (a) Utilization which only varies with $\mathrm{P_f/S}$ for these parameter scans.  Economic gain $\mathrm{Q_{econ}}$ contours and overnight cost $\mathrm{c_{O/N}}$ for (b,c) FPP areal cost and (d,e) net energy conversion efficiency $\mathrm{\eta_E}$. Shaded regions shows example  $\mathrm{Q_{econ} > 1.5}$ and $\mathrm{U>0.8}$.      
    }
    \label{fig:Contour_simple_two}
\end{figure}

Figure \ref{fig:Contour_simple_five} shows the contours of $\mathrm{Q_{econ}}$ while varying $\mathrm{P_f/S}$ and the five remaining control parameters not covered in Figures \ref{fig:Contour_PfS_taurep}-\ref{fig:Contour_simple_two}. For these controlling parameters, the  utilization and $\mathrm{c_{O/N}}$ only depend on $\mathrm{P_f/S}$, whose values are shown at the top of the figure.

The S areal cost and normalized target costs versus $\mathrm{P_f/S}$ yield identically shaped $\mathrm{Q_{econ}}$ contours. This is because mathematically they behave identically in their governing equations. This provides us with the insight that the replacement of S should be viewed like the target costs as de facto ``consumables'' in the FPP, i.e., they take the place of fuel costs in other energy systems. The implications of the S replacement cost $\mathrm{(M\$/S)_S}$ are considered in the Discussion. At high $\mathrm{P_f/S}$, the contours of $\mathrm{Q_{econ}}$ in Figure \ref{fig:Contour_simple_five} (c) and (e) become nearly horizontal, thus providing approximate threshold values. Taking the $\mathrm{Q_{econ} \geq 1.5}$ and $\mathrm{U\geq0.80}$ as a design goal, these thresholds are $\mathrm{(M\$/S)_S}\sim0.4$ and $\mathrm{(c/Y)_{target}}\sim 2\times10^{-3} \enspace \$/MJ$.  

Convex isolinear contours of $\mathrm{Q_{econ}}$ are found in both the $\mathrm{POE_{net}}$ and $\mathrm{\tau_{life}}$ versus power density plots. As previously discussed, this shape leads to effective threshold requirements for both power density, which is $\mathrm{P_f/S}\sim 1$ in the $\mathrm{Q_{econ} \geq 1.5}$ example.  The POE result reflects the critical nature of the market conditions needed to make a FPP viable; however, this result is unsurprising, since that is the case for any energy source. This criticality should motivate the designer to maximize the value of the energy product. A FPP might produce carbon-free fuel rather than electricity due to market conditions (e.g., the present POE of hydrogen from renewable energy is $\mathrm{\sim 5 \$/kg}$ or  $\mathrm{150 \$/MW \text -  h }$ in US \cite{DOEHydrogen2024}).  

The $\mathrm{Q_{econ}}$ contours are close to linear in the interest rate versus power density plots. This points to a somewhat contradictory condition for an FPP:  the power density must increase if interest rates are higher, which likely would give the FPP design and its operation greater technical risk. However, if the FPP is riskier, then that might increase the commercial interest rate imposed by the lender, thus spiraling up both parameters. This  underlies the strong desire that the first generation of FPPs have low financing costs, e.g., through public loan guarantees, which would allow less risky operations at lower power densities.

Figures \ref{fig:Contour_PfS_taurep} - \ref{fig:Contour_simple_five} illustrate the sensitivity of economic viability to power density over a very large range of design parameters.  A surprisingly consistent result is that there exists a threshold power density $\mathrm{P_f/S}\sim2 \enspace MW/m^2$ for basic economic viability $\mathrm{Q_{econ}=1}$. Of course, this threshold is dependent on the base case values chosen for the FPP; however, the base case parameters were chosen to reflect reasonable values around which to evaluate, and the results come from these scans of the solution space, not a point design.  

This threshold is significant in that the survey of commercial FPP designs shown in Table \ref{tab:FPP_parameters} in the Discussion comes to a similar conclusion regarding power density, with no FPP falling below $\mathrm{P_f/S}\sim2 \enspace MW/m^2$. However that result is arrived at independently by a different path, where FPP designs were derived ``bottom-up'' from their confinement concept, and technology-specific. This confirms the power of using a ``top-down'' economic model as developed here; while it necessarily must make simplifications, it seems to have correctly captured the quantitative design constraint of fusion power density with respect to making fusion economically viable. This also lays to rest the idea that an arbitrarily low power density is attractive because it will prolong the usable lifetime of the S and blanket components. That pathway does not have economic viability because one cannot make back enough revenue to justify the expense of building and financing the FPP. Furthermore, the top-down model provides insights to the sensitivity of the economic return across multiple design parameters in a transparent way that is not possible or practical with bottom-up design efforts. 

\begin{figure}[h]
    \centering
    \includegraphics[width=0.95\textwidth]{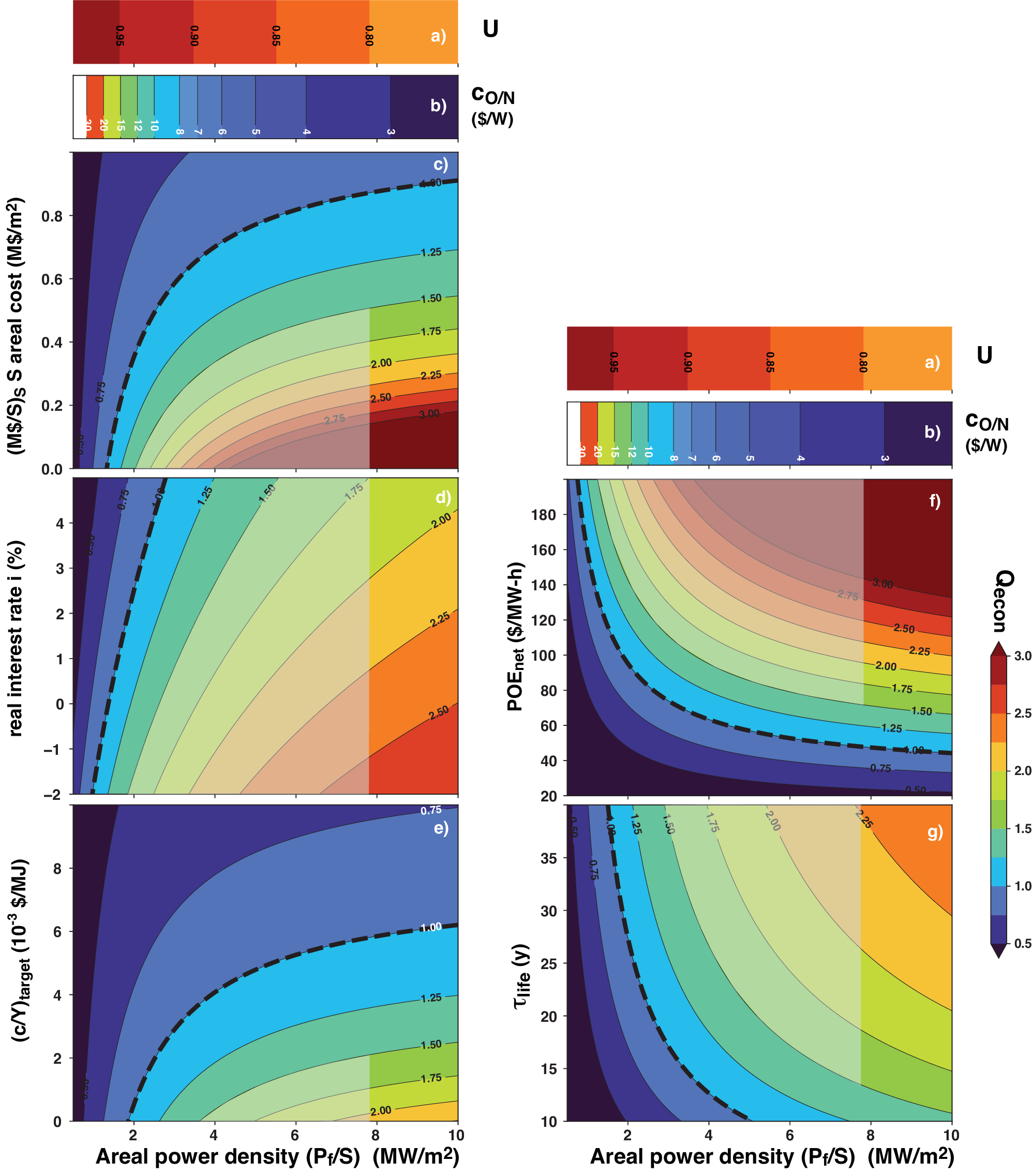}
    \caption{Economic model output varying $\mathrm{P_f/S}$ versus scanned controlling parameters (a) Utilization and (b) Overnight cost which only vary with $\mathrm{P_f/S}$ for these parameter scans.  Economic gain $\mathrm{Q_{econ}}$ contours for  (c) S areal cost (d) real interest rate (e) Target cost per yield (f) $\mathrm{POE_{net}}$ (g) FPP lifetime. See Table \ref{tab:Control_parameters} for definitions and Table \ref{tab:Default_parameters} for default parameters held fixed in scans. Shaded regions shows example  $\mathrm{Q_{econ} > 1.5}$ and $\mathrm{U>0.8}$.}
    \label{fig:Contour_simple_five}
\end{figure}

\subsection{Varying two control parameters with fixed power density }\label{sec:Sensitivity_PfS_fixed}

The interaction between two controlling parameters exclusive of power density $\mathrm{P_f/S}$ is examined by fixing $\mathrm{P_f/S}$.  For these examples, a value of $\mathrm{P_f/S=4\enspace MW/m^2}$ is chosen as representative of FPP designs (Table \ref{tab:FPP_parameters}), and as a value that generally meets economic viability from  $\mathrm{P_f/S}$ sensitivity scans of the previous section. Unvarying control parameters are fixed at their base case values listed in Table \ref{tab:Default_parameters}. 

The first set of comparisons focus on the design parameters of S and its associated blanket: the energy fluence limit $\mathrm{X_S}$, the net energy conversion efficiency $\mathrm{\eta_E}$ (linked to the blanket thermal efficiency), the S replacement cost $\mathrm{(M\$/S)_S}$ and the S replacement time $\mathrm{\tau_{rep}}$. 

\begin{figure}[h]
    \centering
    \includegraphics[width=1.0\textwidth]{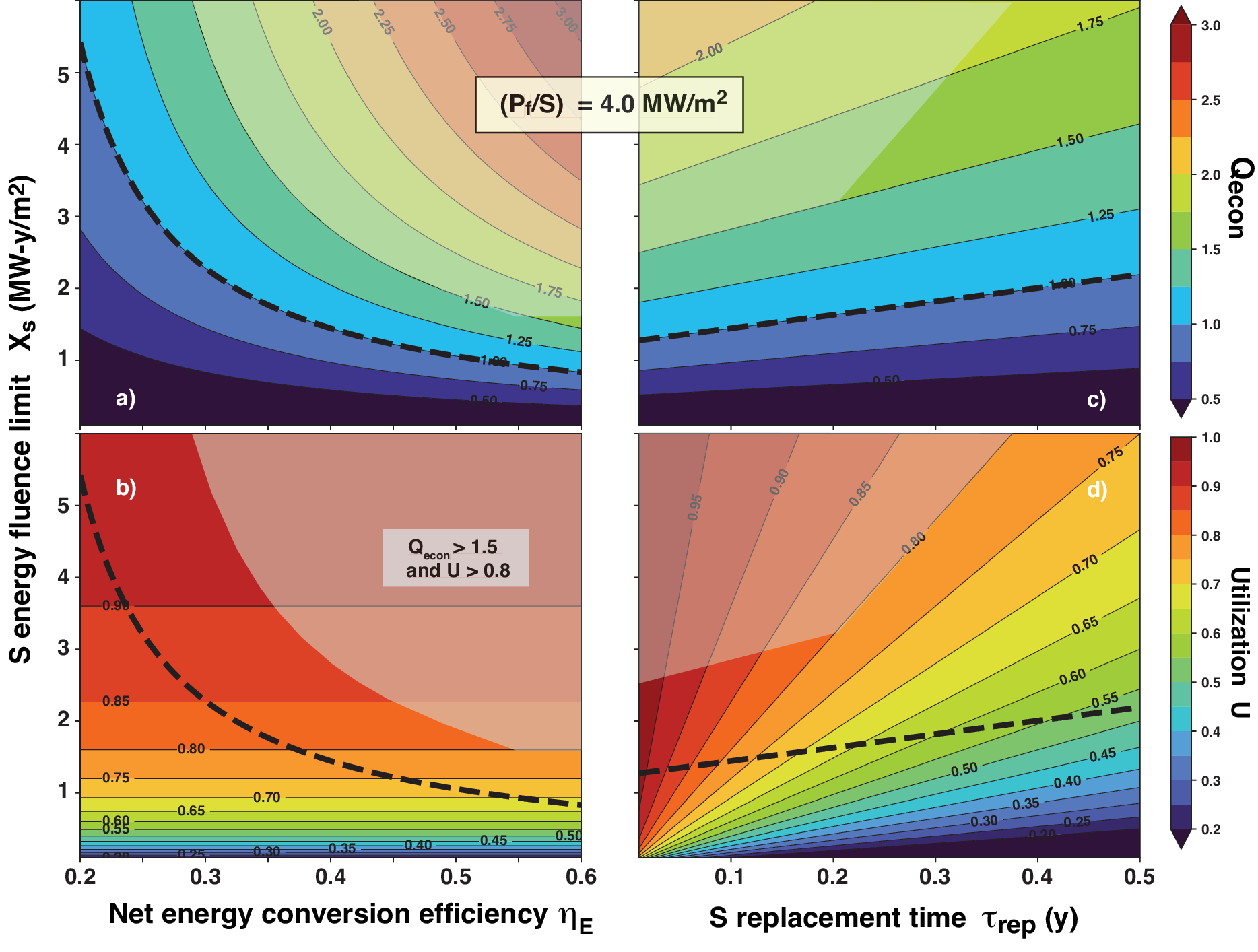}
    \caption{Economic model output with fixed $\mathrm{P_f/S=4 \enspace MW/m^2}$ and varying two control parameters. Energy fluence limit $\mathrm{X_S}$ on vertical axis versus (a-b) $\mathrm{\eta_E}$ and (c-d) $\mathrm{\tau_{rep}}$.  (a, c) Economic gain $\mathrm{Q_{econ}}$ (b,d) Utilization, with the $\mathrm{Q_{econ}=1}$ contour overlaid. Shaded regions shows example  $\mathrm{Q_{econ} > 1.5}$ and $\mathrm{U>0.8}$.}
    \label{fig:Contour_fixedPfS_Xs_yaxis}
\end{figure}

The $\mathrm{Q_{econ}}$ contours are convex in the $\mathrm{\eta_E}$ versus $\mathrm{X_S}$ space (Figure \ref{fig:Contour_fixedPfS_Xs_yaxis}).
There is a threshold value $\mathrm{X_S}\sim1$, but only a marginal increase in $\mathrm{Q_{econ}}$ for $\mathrm{X_S} \gtrsim 4-5 $, mostly because the utilization is surpassing 0.9. For the design goal example of $\mathrm{Q_{econ} \geq 1.5}$ and $\mathrm{U\geq0.80}$ used in Section \ref{sec:Sensitivity_PfS_scan}, the lower corner has thresholds of $\mathrm{X_S}\sim2$ and $\mathrm{\eta_E}\sim0.3$. The $\mathrm{Q_{econ}}$ and U contours are linear in the $\mathrm{\tau_{rep}}$ versus $\mathrm{X_S}$ plot, with a threshold value of $\mathrm{X_S}\sim1$ as $\mathrm{\tau_{rep}\rightarrow0}$.  The $\mathrm{Q_{econ}}$ contours show weak sensitivity at higher values of $\mathrm{X_S}$ and low $\mathrm{\tau_{rep}}$. In the example design goal, a threshold of $\mathrm{\tau_{rep} \lesssim 0.3 \enspace y}$ arises.  One notes similarities in the $\mathrm{X_S}$ threshold values across different controlling parameters.

\begin{figure}[h]
    \centering
    \includegraphics[width=1.0\textwidth]{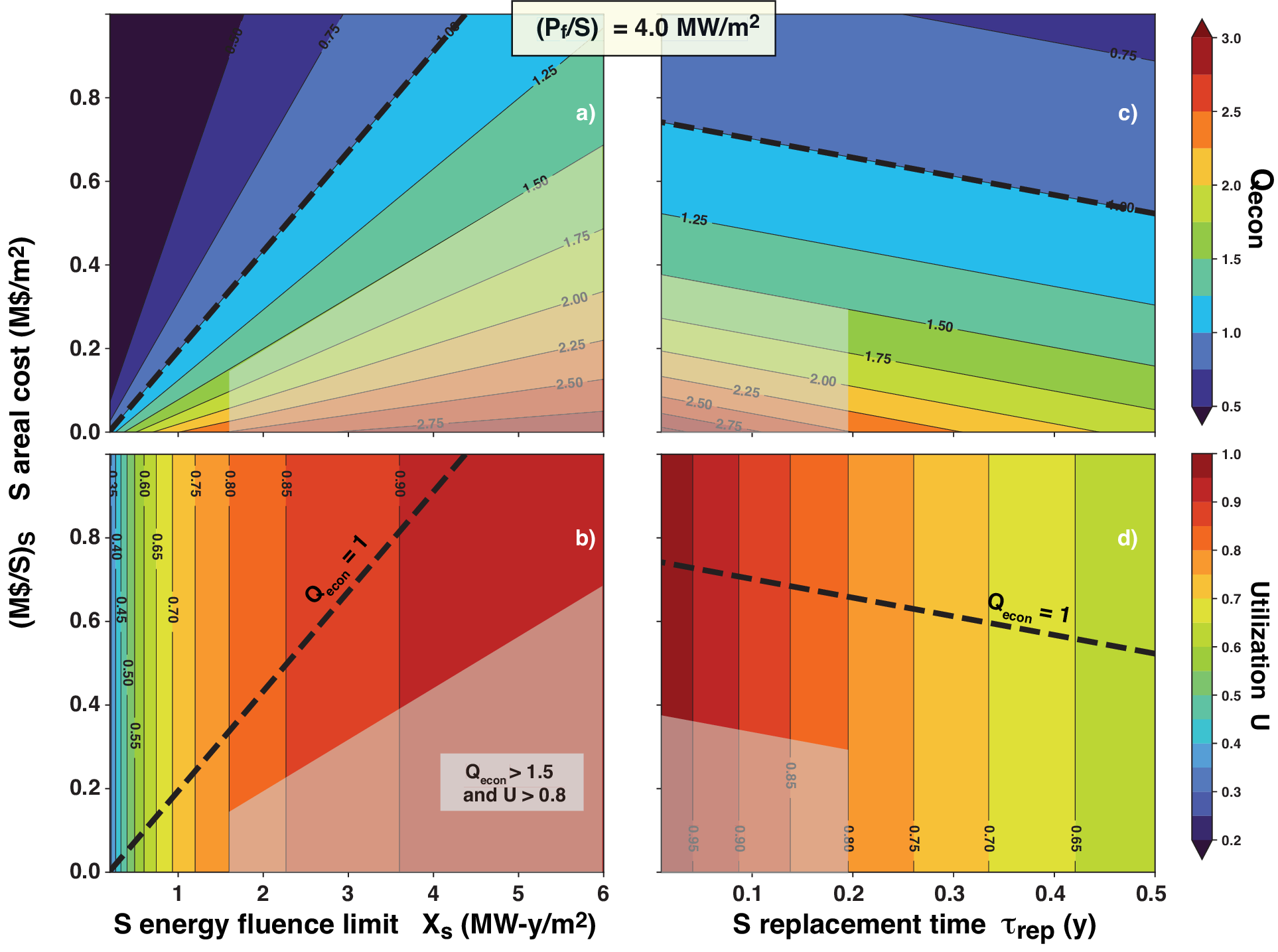}
    \caption{Economic model output with fixed $\mathrm{P_f/S=4\enspace MW/m^2}$ and varying two control parameters. Replacement S areal cost $\mathrm{(M\$/S)_S}$ on vertical axis versus (a-b) $\mathrm{X_S}$ and (c-d) $\mathrm{\tau_{rep}}$.  (a, c) Economic gain $\mathrm{Q_{econ}}$ (b,d) Utilization, with the $\mathrm{Q_{econ}=1}$ contour overlaid with dashed black line. Shaded regions shows example  $\mathrm{Q_{econ} > 1.5}$ and $\mathrm{U>0.8}$.}
    \label{fig:Contour_fixedPfS_ArealcostS}
\end{figure}

The $\mathrm{Q_{econ}}$ contours are linear in the $\mathrm{(M\$/S)_S}$ versus $\mathrm{X_S}$ space (Figure \ref{fig:Contour_fixedPfS_ArealcostS}). At increased $\mathrm{X_S}$ and higher $\mathrm{Q_{econ}}$, the contours become more horizontal, so that $\mathrm{(M\$/S)_S}$ is the more important factor on economic return. However, the utilization depends only on $\mathrm{X_S}$. Thus, in the design space example $\mathrm{Q_{econ} \geq 1.5}$ and $\mathrm{U\geq0.80}$, there are clear thresholds in both parameters: $\mathrm{X_S} \geq 1.5$ and $\mathrm{(M\$/S)_S} \lesssim 0.5 $.   The more horizontal shape of the design space (i.e., the shaded area) further indicates that cost effective S fabrication is more important than high energy fluence limits, as long as the threshold $\mathrm{X_S}$ value is met. The $\mathrm{Q_{econ}}$ contours are inversely linear in the $\mathrm{(M\$/S)_S}$ versus $\mathrm{\tau_{rep}}$ space. This leads to the most severe threshold restrictions, with $\mathrm{(M\$/S)_S} \leq 0.3 $ and $\mathrm{\tau_{rep} \leq 0.2}$. Thus, one conclusion from these scans of parameter space is that a less expensive and rapid replacement of S is more critical to the FPP design than very large S energy fluence limits.

This trend is further confirmed in Figure \ref{fig:Contour_fixedPfS_Xs_xaxis}, where $\mathrm{X_S}$ is varied against two other key costing parameters of FPP areal cost, $\mathrm{(M\$/S)_{FPP}}$ and $\mathrm{POE_{net}}$. Here, the utilization only varies with $\mathrm{X_S}$. The $\mathrm{Q_{econ}}$ contours increase with $\mathrm{X_S}$ but are nonlinear, exhibiting a minimum threshold requirement and then increasing with diminishing marginal economic gains at higher $\mathrm{X_S}$ due to the utilization surpassing 0.9 (note that $\mathrm{\tau_{rep}}$ is fixed at the base case value of 0.1 year). In the design space example of $\mathrm{Q_{econ} \geq 1.5}$ and $\mathrm{U\geq0.80}$, there are clear thresholds at both $\mathrm{X_S} \geq 1.5$ and $\mathrm{(M\$/S)_{FPP}} \lesssim 14$. While the overnight cost only depends on $\mathrm{(M\$/S)_{FPP}}$, the design space shows an accessible $\mathrm{c_{O/N} \lesssim9 \enspace \$/W }$.The $\mathrm{Q_{econ}}$ contours are convex in the $\mathrm{POE_{net}}$ versus $\mathrm{X_S}$ plot, and again $\mathrm{Q_{econ}}$ stops increasing strongly when $\mathrm{X_S}\geq 4$. The design space example also yields clear thresholds $\mathrm{X_S} \geq 1.5$ and $\mathrm{\mathrm{POE_{net}} \gtrsim 80 \enspace \$/MW \text -  h}$.

\begin{figure}[h]
    \centering
    \includegraphics[width=0.6\textwidth]{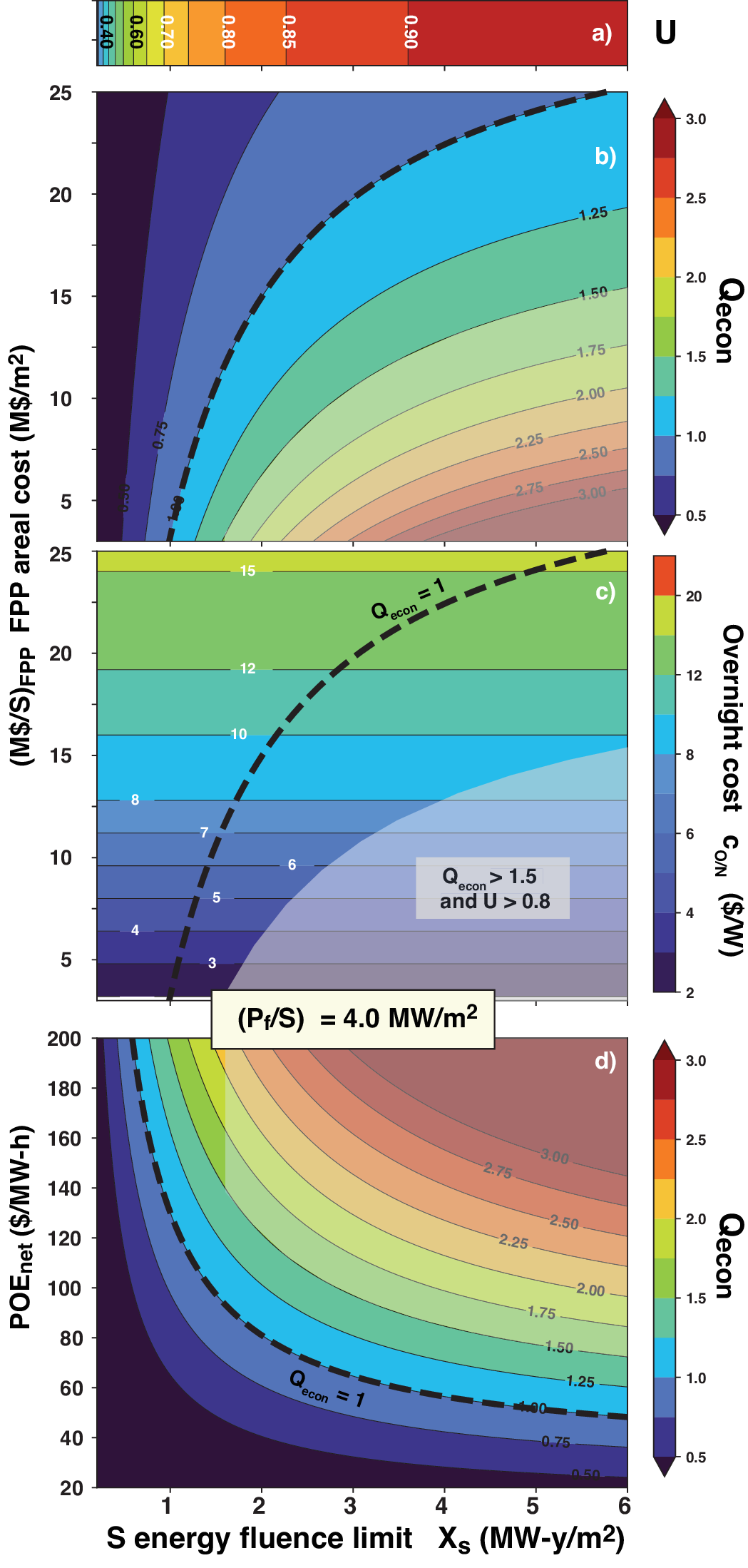}
    \caption{Economic model output with fixed $\mathrm{P_f/S=4\enspace MW/m^2}$ and varying two control parameters. Energy fluence limit $\mathrm{X_S}$ on horizontal axis. (a) Utilization which only varies as $\mathrm{X_S}$ (b) Economic gain $\mathrm{Q_{econ}}$ and (c) overnight cost $\mathrm{c_{O/N}}$ with varying FPP areal cost  (d) Economic gain $\mathrm{Q_{econ}}$ with varying $\mathrm{POE_{net}}$. Shaded regions shows example  $\mathrm{Q_{econ} > 1.5}$ and $\mathrm{U>0.8}$. }
    \label{fig:Contour_fixedPfS_Xs_xaxis}
\end{figure}

\clearpage

\section{Results: General assessments of economic viability}\label{sec:general_assessment}

A fusion power plant is economically self-sustaining only if $\Qecon \ge 1$ (break-even or better). In practice, one would want $\Qecon$ comfortably above 1 to justify investment (providing profit margin for the operator and its investors, and accounting for other fixed costs or risks). $\Qecon < 1$ implies economic loss in steady state which is not sustainable. The combination of parameters must be balanced to achieve $\Qecon \ge 1$, which is complicated by their nonlinear dependence on each other. For example, lowering $\mathrm{P_{f}/S}$ puts pressure on $\mathrm{X_{S}}$ (Figure \ref{fig:Contour_PfS_Xs}), or a lower $\mathrm{\eta_{E}}$ might be tolerable if $\mathrm{POE_{net}}$ is very high, etc. The utility of our economic viability criterion $\Qecon \geq 1$ is that it highlights these trade-offs clearly. A fusion plant design must simultaneously satisfy a whole set of criteria and if any one factor is too weak, the others have to compensate accordingly.  

These tradeoffs were examined in the previous section where the economic gain space is evaluated in detail with two control parameters being varied, so we now consider more generally how to access $\Qecon \ge 1$.
In more general terms we want to separate the 10-dimensional space of controlling parameter values into two distinct regions, those that yield economically viable FPPs, and those that do not. To that end, we first define the \emph{feasible ranges} of each of the 10 controlling parameters as the set of values feasible solely due to physical and mathematical admissibility: for example, efficiencies must lie in $(0,1)$, costs and lifetimes must be nonnegative, and the price of energy must be strictly positive. These ranges are deliberately broad and include values that may be unrealistic or unattainable in practice. They are essential for studying the mathematical geometry of the hypersurface $\mathrm{Q_{econ}}=1$ across the entire admissible domain of parameters. In contrast, we can also define \emph{plausible ranges} of parameter values, typically narrower intervals around base case design assumptions for FPPs, based on engineering projections and economic precedent. They provide a more realistic basis for sensitivity analysis and design trade-offs. Considering both sets of ranges highlights the difference between exploring theoretical properties of the model and assessing practical design viability. Plausible ranges for the 10 controlling parameters are given in Table \ref{tab:Default_parameters}.

\subsection{Viable and non-viable regions}
\label{ssec:viable_nonviable}
Denote by $\theta \equiv [\, P_f/S,\ \tau_{\rm rep},\ \cdots,\ i \,]$ the vector of all 10 controlling parameters (Table~\ref{tab:Control_parameters}), which is an element of the feasible parameter space $\Theta$, defined as the Cartesian product of all parameter domains admissible solely from physical and mathematical principles:
\begin{equation}
\begin{split}
\Theta \equiv (0,\infty)\times[0,\infty)\times(0,\infty)\times(0,\infty)\times(0,\infty)\times
(0,1)\times  \\
[0,\infty)\times[0,\infty)\times[0,\infty)\times(-100,\infty),
\end{split}
\end{equation}
where $\mathrm{i}$ is expressed in percent, so the lower bound $\mathrm{i>-100}$ ensures that $\mathrm{1+0.01\,i>0}$ in the annuity factor.
Within the feasible set $\Theta$, the plausible set $P$ is defined as the subset corresponding to engineering and economic expectations for NOAK fusion plants as given in Tables~\ref{tab:Control_parameters} and~\ref{tab:Default_parameters}:
\begin{equation}
P \equiv \{\theta \in \Theta : P_f/S \in [0.5,10],\ \ldots,\ i \in [-2,5]\} .
\end{equation}

For the log-coordinate results below, and for any numerical closest-viable-design optimization, we further restrict attention to a strictly positive box-constrained subset
\begin{equation}
\Theta_{+} \equiv \prod_{k=1}^{10} [\underline{\theta}_k,\overline{\theta}_k] \subset \Theta,
\qquad
0 < \underline{\theta}_k \le \overline{\theta}_k < \infty .
\end{equation}
This distinction is important because the transformation $\ell=\log\theta$ is only defined on the strictly positive interior. In practice, lower bounds such as $\mathrm{(c/Y)_{\rm target}\ge 10^{-6}}$ or $\tau_{\rm rep}\ge 10^{-4}$ are numerically indistinguishable from zero while keeping the optimization on the interior where the log-log-concavity results of the Supplement apply. Sensitivity scans may still be discussed on the broader feasible set $\Theta$, but the theorem-level log-coordinate statements below concern $\Theta_{+}$.

Let $Q_{\mathrm{econ}}:\boldsymbol{\Theta}\to \mathbb{R}$ denote the economic viability criterion and define the viable region
\begin{equation}
\mathcal{V} =
\{\, \btheta \in \boldsymbol{\Theta} \;|\; Q_{\mathrm{econ}}(\btheta) \geq 1 \,\}
\end{equation}
and the infeasible region
\begin{equation}
\mathcal{I} = 
\{\, \btheta \in \boldsymbol{\Theta} \;|\; Q_{\mathrm{econ}}(\btheta) < 1 \,\}\ \hspace{3mm}.
\end{equation}
The economically viable region $\mathcal{V}$ is essentially the set of all combinations of these parameters that yield at least break-even economics. This region has a complex shape due to the nonlinear interplay of parameters in the $\Qecon$ formula. We can think of the boundary $\Qecon = 1$ as a kind of hypersurface in this 10-D space: 
\begin{equation}
\partial \mathcal{V} =
\{\, \btheta \in \boldsymbol{\Theta} \;|\; Q_{\mathrm{econ}}(\btheta) = 1 \,\}\ \hspace{3mm}.
\end{equation}
One side of this hypersurface corresponds to profitable designs ($\Qecon\ge 1$) and the other side to unprofitable ones ($\Qecon<1$).

\subsection{Geometric properties of $\mathcal{V}$}
\label{ssec:geometric}
\textbf{\textit{Connectedness}}: The feasible region is expected to be continuous and (largely) connected. Small changes in a parameter produce small changes in $\Qecon$ because the underlying equations are continuous. If a design is just barely viable ($\Qecon\approx 1$), a slight improvement in any favorable direction (e.g. a bit higher efficiency or lower cost) will yield $\Qecon>1$ (still viable), and a slight degradation will yield $\Qecon<1$ (just infeasible). There are no isolated ``pockets'' of viability separated by gaps; rather, there is one continuous region, provided all parameters remain in physically meaningful ranges. The complement (non-viable region $\mathcal{I}$) is likewise continuous---essentially it is the set of points below the $\Qecon=1$ hypersurface. For example, starting from a viable point and gradually worsening one parameter (e.g., slowly raise the cost or lower the power density), will move the FPP design continuously into the infeasible side once the threshold $\delta\mathcal{V}$ is crossed. There is no discrete jump; it is a smooth boundary crossing where net profit transitions from positive to negative.

\textbf{\textit{Non-Convexity}}: The feasible region is not strictly convex in all ten dimensions, due to the nonlinear nature of the $\Qecon$ constraint. In a convex region, any linear interpolation between two viable design points would also be viable. Here, that is not guaranteed, because improving one parameter can compensate for worsening another in a highly nonlinear way. However, in many 2-dimensional slices of the design space, the boundary does exhibit a convex-like shape. For instance, if we plot a two-parameter trade-off like 
$\mathrm{P_{f}/S}$ versus $\mathrm{X_{S}}$ (power density vs. fluence limit) holding other parameters fixed, the contours of constant $\Qecon$---often referred to as ``isoquants'' in the economics literature---are convex curves (see Figure \ref{fig:Contour_PfS_Xs} in the preceding section). This convex isoquant implies, for example, that to maintain $\Qecon=1$ some increase in one parameter can be compensated by a decrease in another in a smooth fashion rather than non-monotonically. In economic terms, these isoquants may be viewed as ``indifference curves''---another borrowed concept from economics involving the loci of combinations of parameters that yield the same value for $\Qecon$ (see Discussion). The region above the contour $\Qecon=1$ (better in all parameters) would satisfy $\Qecon>1$, and tends to look convex in that local projection. Nevertheless, considering all 10 parameters at once, the viable set $\mathcal{V}$ is defined by a nonlinear inequality and can have some curved boundaries and trade-off surfaces. It is not a simple polyhedron or hypercube, but more of a warped multi-dimensional volume.

\textbf{\textit{Log-Log-Concavity:}} Although $\mathcal{V}$ is not convex with respect to arithmetic (linear) combinations of parameters, a stronger structural property holds on the strictly positive box-constrained domain $\Theta_{+}$. Define $\ell \equiv \log \theta$ componentwise on $\Theta_{+}$. It is shown in the Supplement \cite{supplement} that $\log \Qecon$ is a concave function of $\ell$ on $\Theta_{+}$, i.e.\ $\Qecon$ is log-log-concave on the strictly positive interior. The proof exploits the fact that, after clearing the utilization factor, $\Qecon$ has the form of a monomial divided by a sum of log-log-convex terms, together with the log-log-convexity of the annuity factor $\phi(\mathrm{i,\tau_{\rm life})}$. The immediate geometric consequence is that every super-level set
\[
\{\theta \in \Theta_{+} : \Qecon(\theta)\ge q^\ast\}
\]
is convex in log-coordinates: on the domain actually used for the weighted closest-viable-design optimization, there are no disconnected pockets of viability, no re-entrant corners, and no spurious feasible regions. This convexity is not a statement about the full boundary-inclusive feasible set $\Theta$; it applies to the strictly positive subset on which the log transformation is well-defined.

\textbf{\textit{Boundary Properties}}: One striking feature of $\mathcal{V}$ is the presence of thresholds or ``cliffs'' in certain directions. Because some parameters must exceed minimum values for viability, the $\Qecon=1$ surface often lies near those thresholds. For example, there may be a minimum required $\mathrm{P_{f}/S}$ and $\mathrm{X_{S}}$ such that below those values, no solution exists. In the $\mathrm{P_{f}/S}$ vs $\mathrm{X_{S}}$ plane, this manifests as a steep corner---if both power density and fluence limit are too low, $\Qecon$ falls off sharply. In these regions, the $\Qecon$ surface is almost like a cliff: small deviations can lead to non-viability. Above the cliff, however, improvements yield diminishing returns---once in the viable region, changing a parameter further helps less and less. For instance, increasing $\mathrm{X_{S}}$ beyond a certain point (so components last extremely long) might not increase $\Qecon$ much if the plant is already running at $\mathrm{>90\%}$ utilization. The feasible region thus has a kind of ``corner'' or knee in many 2-D projections: one must first climb up to a threshold, after which further improvements flatten out in value. This implies that the boundary $\Qecon=1$ is often curved and has high curvature in some areas---it is not a flat plane through the space, but rather an irregular surface, bent and steep in some directions (near thresholds) and flatter in others (where additional margin exists).

\textbf{\textit{Boundedness}}: In principle, $\mathcal{V}$ is unbounded in several directions. Improving some parameters without limit will ensure economic viability, for example decreasing FPP costs to zero.  Practically, many parameters have natural limits: efficiencies cannot exceed 100\%, power density cannot be arbitrarily high, etc. If we confine attention to realistic ranges as given in Table \ref{tab:Default_parameters}, the viable region is effectively bounded by these physical/market limits. But within these bounds, $\mathcal{V}$ typically occupies a substantial volume if all parameters are near their favorable limits. For example, a combination of high
$\mathrm{P_{f}/S}$, high $\mathrm{X_{S}}$, low cost, etc., will be deep inside the viable region. Conversely, the complement $\mathcal{I}$ extends toward the opposite extreme: e.g. as we approach very low power density, very short component life, very high cost, etc., $\Qecon$ plummets well below 1. The volume of $\mathcal{I}$ is generally much larger than that of $\mathcal{V}$, which is another way of stating the obvious: achieving economic fusion is hard.

In summary, the economically viable design space for FPPs is a single contiguous region bounded by a nonlinear, curved hypersurface defined by $\Qecon=1$. This surface is not a simple shape but can be visualized through 2-D slices as a set of curves that often look convex and have clear threshold boundaries (see previous section). Within the region, $\Qecon > 1$ (and especially $\gg1$) indicates increasingly profitable designs. The complement is all other points---generally less ``extreme'' in performance---which yield $\Qecon<1$ and thus are economically non-viable. Importantly, because of the monotonic influence of each parameter (improving any one, holding others fixed, will never reduce $\Qecon$), i.e. starting with a viable design and improving any parameter further will yield viable designs, and vice versa. This monotonicity is why plotting iso-$\Qecon$ ``indifference'' curves is useful---it illustrates trade-offs between pairs of parameters.

\subsection{Closest viable design}
\label{ssec:closestviable}

Because the 10 controlling parameters carry heterogeneous units (MW/m$^2$, years, \%, \$/MW-h, M\$/m$^2$, etc.), a standard unweighted Euclidean norm is dimensionally ill-posed: a unit change in one parameter is incommensurable with a unit change in another. We therefore formulate the projection using a diagonal weighted norm
\[
\|\theta-\theta_o\|_W^2 \equiv (\theta-\theta_o)^\top W(\theta-\theta_o),
\qquad
W=\mathrm{diag}(w_1,\ldots,w_{10}),\quad w_i>0.
\]
Each weight admits a conceptual decomposition
\begin{equation}
w_i = \frac{\hat w_i}{s_i^2},
\label{eq:closest-viable}
\end{equation}
where $s_i$ is a scale factor that renders $(\Delta\theta_i/s_i)^2$ dimensionless (e.g.\ the width of the plausible range from Table~2), and $\hat w_i>0$ is a dimensionless difficulty factor reflecting the relative cost or difficulty of changing parameter $i$. In implementation the two roles are combined into a single $w_i$, but the decomposition clarifies that the norm is dimensionally consistent by construction and that the difficulty factors are modeling inputs.

For illustrative weighted calculations, we adopt the convention
\[
s_i = \theta_{i,\max}^{\rm plausible}-\theta_{i,\min}^{\rm plausible},
\qquad
\hat w_i = 1 \quad \text{for all } i,
\]
unless otherwise stated, so that
\[
w_i = \frac{1}{\bigl(\theta_{i,\max}^{\rm plausible}-\theta_{i,\min}^{\rm plausible}\bigr)^2}.
\]
Alternative choices of $\hat w_i$ can then be used to encode stakeholder-specific judgments about relative difficulty.

The closest viable design is the solution to
\begin{equation}
\min_{\theta\in\Theta_{+}}
\sum_{i=1}^{10} w_i(\theta_i-\theta_{o,i})^2
\qquad
\text{subject to}
\qquad
\Qecon(\theta)\ge 1 .
 \label{eq:closest-viable}
\end{equation}

The solution to this optimization problem is essentially a weighted least-squares projection of $\theta_o$ onto the viable region within the box-constrained positive domain. It can be shown that:
\begin{enumerate}
\item \textbf{(Existence)} If $\Theta_{+}\cap V$ is nonempty, then the weighted distance
\[
d_W(\theta_o,\Theta_{+}\cap V) = \inf_{\theta\in\Theta_{+}\cap V}\|\theta-\theta_o\|_W
\]
is attained at some $\theta^\star \in \Theta_{+}\cap V$.

\item \textbf{(First-order optimality)} If $\Qecon$ is continuously differentiable in a neighborhood of $\theta^\star$ and the active constraint is regular, then there exists $\lambda^\star \ge 0$ such that
\begin{equation}
2W(\theta^\star-\theta_o)=\lambda^\star \nabla \Qecon(\theta^\star),
\qquad
\Qecon(\theta^\star)=1 .
 \label{eq:kkt}
\end{equation}
Equivalently, the weighted displacement $W(\theta^\star-\theta_o)$ is normal to the hypersurface $\Qecon=1$ at $\theta^\star$.

\item \textbf{(Uniqueness)} Because $\Qecon$ is log-log-concave on $\Theta_{+}$ (Section~\ref{ssec:geometric}), the viability frontier is convex in log-coordinates on that domain. If $\Theta_{+}\cap V$ is nonempty and the lower bounds satisfy $\underline{\theta}_i > \theta_{o,i}/2$ for all $i$, then the weighted optimization problem has a unique global minimizer for any choice of positive diagonal weights (see Supplement, Proposition~6.1). Numerical verification (SLSQP from 50 random initializations) confirms convergence to the same solution to 8 significant figures.
\end{enumerate}

The optimization framework also offers practical information regarding the most effective way to achieve viability: parameters with large gradient-to-weight ratios $\frac{|\partial \Qecon/\partial \theta_i|}{w_i}$ should be adjusted the most, because they offer the best cost-to-impact ratio at the viability boundary.

For example, if $\Qecon$ is most sensitive to $\mathrm{P_f/S}$, the weighted projection may primarily involve increasing the areal power density. If the design falls short mainly because of a high cost of capital, a reduction in $i$ through loan guarantees or related policy support may be the more efficient adjustment. The key concept is that the model provides a weighted notion of nearest approach to economic viability in the 10-dimensional control space.

It is important to distinguish the roles played by the formula and
the weights in this optimization. The formula for
$\mathrm{Q_{{econ}}}$ determines the \emph{viable
region}---the set of parameter combinations achieving
$\mathrm{Q_{econ} \geq 1}$. The weights do not change this
region; they determine which point on its boundary is selected as
``closest.''  This dependence is not a defect but the point of
the decision problem, because the 10 controlling parameters span
fundamentally different categories of effort and agency.
\emph{Engineering parameters} ($\mathrm{P_f/S}$, $\mathrm{X_S}$, $\mathrm{\eta_E}$)
can only be improved through sustained physics and materials R\&D;
increasing the areal fusion power density requires advances in
plasma confinement or magnet technology, extending the energy
fluence limit requires developing radiation-tolerant materials, and
improving net conversion efficiency requires blanket and thermal
cycle innovation.
\emph{Market and financing parameters} ($\mathrm{POE_{net}}$,
$i$) are set by energy markets, central bank policy, the project's
credit profile, and macroeconomic conditions; the FPP designer does
not control them directly, though they can be influenced indirectly
through government loan guarantees, contracts-for-difference, or
capacity payments.
\emph{Construction and replacement parameters}
($\mathrm{({\rm M\$}/S)_S}$, $(\mathrm{{\rm M\$}/S)_{\mathrm{FPP}}}$,
$\tau_{\mathrm{rep}}$, $\tau_{\mathrm{life}}$) depend on
industrial-economic factors such as supply chain maturity,
manufacturing learning curves, construction logistics, and
maintenance technology.

These categories are not interchangeable. Asking a fusion power
plant designer to ``reduce the interest rate by 2 percentage
points'' is a categorically different directive than ``increase
the power density by 1~MW/m$^2$.''  A uniform weighting would
conflate these entirely different categories of effort; the
difficulty weights $\hat{w}_i$ encode the asymmetries in effort
and agency.

Indeed, the isoquant analysis of Section~\ref{sec:extract}
already does this implicitly: the ``lines of constant differential
technology risk'' with slope $\mathrm{dX_S/d(P_f/S) = -2}$ in Figure~10
are precisely a ratio of difficulty weights, expressing the
judgment that, at the margin, a 1~MW-y/m$^2$ improvement in
fluence limit is twice as costly as a 1~MW/m$^2$ improvement in
power density. The weighted closest-viable-design optimization generalizes this tangency-slope concept from pairwise parameter tradeoffs to the full 10-dimensional space, and from graphical inspection to a constrained optimization problem that, provided the feasible set is nonempty and the lower bounds satisfy $\underline{\theta}_i > \theta_{o,i}/2$ for all $i$, has a unique global minimizer for any choice of positive diagonal weights (see Supplement, Proposition~6.1).

Different stakeholders would reasonably assign different weights.
A plasma physicist might weight power density improvements as
relatively cheap (low $\hat{w}_i$) and financing improvements as
expensive (high $\hat{w}_i$), reflecting the view that the physics
is solvable but the capital markets are hard to influence.
A financial engineer might take the opposite view: public loan
guarantees and contracts-for-difference are proven policy tools,
while achieving 6~MW/m$^2$ power density is an unsolved physics
problem. Each would obtain a different closest viable
design---a different pathway to viability---and the comparison of
these pathways is itself informative, revealing which parameters
offer the most economic leverage.

Also, the weighted-Euclidean projection identifies the smallest weighted
portfolio of parameter changes needed to reach viability which is an
\emph{endpoint} metric, not a literal R\&D trajectory through
design space. (The isoquant and tangency-point analysis in
Figures~10--11 is closer to a trajectory concept, but even there
the analysis is comparative-static rather than truly dynamic.)

The log-log-concavity of $\mathrm{Q_{econ}}$ ensures that the
``cliffs'' visible in plots of $\mathrm{Q_{econ}}$ versus
individual parameters (e.g., the steep rise at low $\mathrm{P_f/S}$) do not
create non-convexities in the function's \emph{level sets}.
The viability contour is a smooth convex curve in log-coordinates
with no inlets or peninsulas, guaranteeing that the endpoint
calculation is well-posed regardless of the landscape's steepness.
When the optimum is interior to the box constraints, the KKT
stationarity condition (Equation~\ref{eq:kkt}) shows that
parameters with large $\
\frac{|\partial Q_{\mathrm{econ}}/\partial
\theta_i|}{w_i}$ are adjusted the
most---the optimization preferentially changes the parameters
offering the best cost-to-impact ratio. This is exactly the
economic information that the sensitivity analysis
(Figures~2--8) provides qualitatively; the weighted optimization
formalizes it.

The differences in relative difficulty of making improvements in
the 10 controlling parameters suggests that a more realistic set of
solutions to the optimization problem may be achieved by weighting
the parameters according to their importance or difficulty.
Practical implementations also impose lower and upper bounds on
parameters within $\Theta$ (e.g., $\mathrm{\eta_E \leq 1 - \epsilon}$)
to avoid degenerate optima on open boundaries.

\subsection{Technology tradeoffs via isoquants}
\label{sec:extract}

As quantified in the preceding sections, a compelling feature of the economic model is that it provides insights to local sensitivities over a wide range of controlling parameters. Many of the controlling parameters incorporate aspects of technical performance, and therefore aspects of technical risk. Thus the model can link the cost of improving technical performance to the economic benefit of that performance, as well as other interactions between different design parameters as a FPP design moves into regions of higher $\mathrm{Q_{econ}}$.

An example of this model feature is shown in Figure \ref{fig:tangent_example}, which recasts the results from Figure \ref{fig:Contour_PfS_Xs}, isolating discrete contours of $\mathrm{Q_{econ}}$ in the $\mathrm{X_S}$ versus $\mathrm{P_f/S}$ space.  These ``isoquants" of $\mathrm{Q_{econ}}$ are convex in shape, which suggests treating them as ``lines of indifference'', i.e., a graphical representation in economics that shows all combinations of two goods, and therefore costs, that provide a consumer with the same level of utility or satisfaction \cite{pareto2014manual}.  However in our case the axes are not ``goods'' per se but rather represent the cost of technical improvement or risk.  Visual examination of the curves indicate that there are regions of the isoquants where increased technical risk become unjustified because the $\mathrm{Q_{econ}}$ saturate (noted by stars on the figure).  It is intuitive that to move upwards to higher $\mathrm{Q_{econ}}$ the most efficient means is by accessing the lower left corners of the isoquants.
The weighted closest-viable-design optimization of Section~\ref{ssec:closestviable} generalizes this tangency-slope analysis from pairwise parameter tradeoffs to the full 10-dimensional control space, and from graphical inspection to a constrained optimization problem that, under the box-constrained strictly positive assumptions stated in Section~\ref{ssec:closestviable}, has a unique minimizer in log-coordinates (see Supplement, Proposition~6.1).

\begin{figure}[h]
    \centering
    \includegraphics[width=0.75\textwidth]{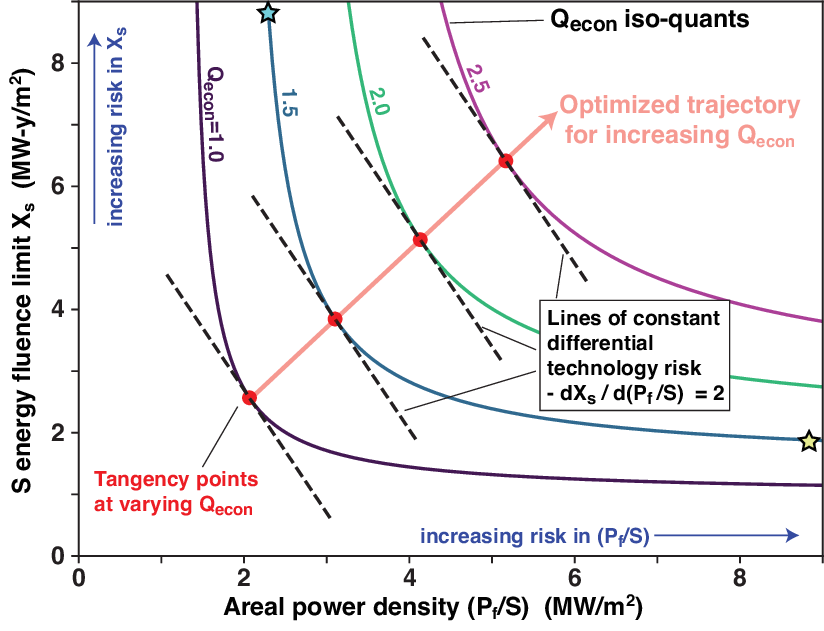}
    \caption{Economic model output as in Figure \ref{fig:Contour_PfS_Xs} isolating discrete contours of constant $\mathrm{Q_{econ}}$ or ``isoquants''. The stars on the $\mathrm{Q_{econ} =1.5}$ line indicate where improvements in 
    $\mathrm{X_S}$ (blue star) or improvements in $\mathrm{P_f/S}$ (yellow star) are becoming ineffective at increasing $\mathrm{Q_{econ}}$ and thus the increased technical risk associated with those improvement is unjustified.  Lines of constant differential technology risk (with assumed slope of -2) provide unique tangency points at each iso-quant, which are the most efficient points to access a design target $\mathrm{Q_{econ}}$.
    }
    \label{fig:tangent_example}
\end{figure}

These curves are reminiscent of the intuitions provided by the Lawson criterion where curves of constant $\mathrm{Q_p}$ are convex-like in the T versus $\mathrm{n\cdot \tau_E}$ space \cite{wurzel2022progress}.  Achievement of a target $\mathrm{Q_p}$ is typically optimized by targeting the lower-left corner of the contour, since both T and $\mathrm{n\cdot \tau_E}$ are difficult to achieve and come with a design ``cost'' (e.g., larger device, larger energy driver, etc.).  An extension of this concept, which quantifies this intuition, is widely used in magnetic fusion; the Plasma OPeration CONtour (POPCON) \cite{houlberg1982contour}. A POPCON, which also assumes temporal equilibrium, allows one to scope the most efficient means of accessing a targeted design performance of fusion power and $\mathrm{Q_p}$.  With knowledge of the design costs involved POPCONs can be used then to maximize efficiency of achieving a fusion design target, for example the least expensive combination of external heating power and device size, which each have their own cost penalty in an MFE FPP design. Thus the 2-D contour representations shown here can be thought of as an ``economic POPCON''.  It is interesting to note that while Lawson and POPCONs have significant physics simplifications, they remain widely used as design tools \cite{rutherford2024manta, creely2020overview} in fusion \emph{because} of their flexibility, simplicity and transparency.

\begin{figure}[h]
    \centering
    \includegraphics[width=0.75\textwidth]{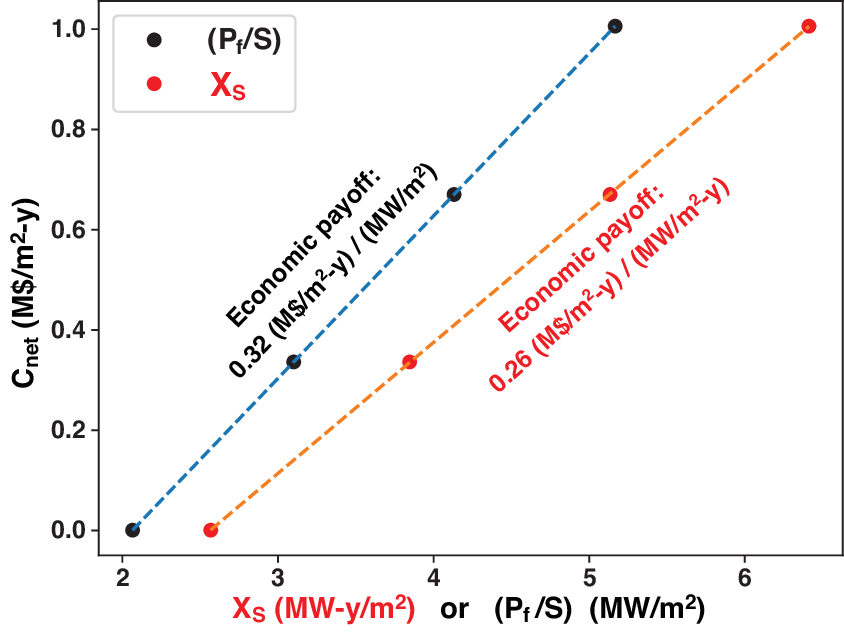}
    \caption{Tangency points from increasing $\mathrm{Q_{econ}}$ isoquants of Figure \ref{fig:tangent_example} showing resulting net economic gain rate $\mathrm{C_{net}}$ versus increments in fusion power density (black points) and S energy fluence limit $\mathrm{X_S}$ (red points). The fitted slopes indicate the optimized economic payoff rate with fixed differential technology risk. 
    }
    \label{fig:tangent_payoff}
\end{figure}

The isoquants of $\mathrm{Q_{econ}}$ can also quantify economic FPP design efficiency by providing an assessment of the different technology risk to access a target performance. This requires an engineering assessment that determines the relative difficulty (or development cost) to comparatively increasing one design parameters at the cost of another. In the example case of Figure \ref{fig:tangent_example} this could be linked to the consequences of increasing energy fluence limit of S (to increment $\mathrm{X_S}$) at the price of decreased heat removal capacity in S that decrements $\mathrm{P_f/S}$. This could occur for example in the choice of a material, or thickness of material, at S that increases its energy fluence limit but with lower heat conductivity.    An actual assessment of this risk is beyond the scope of this work, but here for an example we have assumed a differential technology risk $\mathrm{dX_S/d(P_f/S)=-2}$.  Due to their convex shape there are unique tangency points on the isoquants which match the differential technology risk, and these tangency points can be identified at varying $\mathrm{Q_{econ}}$ contours.  Figure \ref{fig:tangent_payoff} shows the results of plotting these tangency points versus the net economic return $\mathrm{C_{net}}$ (Equation \ref{eq:C_net_definition}) for the two controlling parameters. The slopes of these tangency points then provide an economic payoff rate gained by the technology improvements for the FPP. Not only does this payoff inform the optimized path to deal with technical risk, it also informs the FPP developer on the worth of the development costs. 

\subsection{Monte Carlo distributions of $\mathrm{Q_{econ}}$ and U}\label{sec:ROI distributions}

We are motivated to understand the impact of variance in the controlling parameters on the economic viability.
 By assigning probability  or confidence intervals to the input parameters listed in Table \ref{tab:Control_parameters}, the input parameters can be statistically distributed within those bounds using Monte Carlo techniques, and the resulting probability distribution functions of model outcomes, $\mathrm{Q_{econ}}$ and U,  recorded.  This will be able to capture some of the effects of variance, however each simulation run still uses the simplifications of operational and market equilibrium.

An illustrative set of distributions are shown in Figure \ref{fig:Cnet_distribution_example}, with their model inputs noted in the caption.  Figure \ref{fig:Cnet_distribution_example} (a-c) shows the results of a normal distribution applied to the areal power density $\mathrm{P_f/S}$ with an assigned mean value $\mathrm{\mu =4.0 \enspace MW/m^2}$, a standard deviation $\mathrm{\sigma = 0.4 \enspace MW/m^2}$ and a coefficient of variation $\mathrm{CV \equiv \sigma/\mu=0.1} $.  Due to nonlinearities and offsets in the model, the resulting economic outcomes can have a different CV than the initial distribution;  the utilization has a lower CV =0.01/0.89 $\simeq$ 0.01, as does the $\mathrm{Q_{econ}}$ CV =0.08/1.57 $\simeq$ 0.05, and it is also negatively skewed. The CV of $\Qecon$ provides a simple measure of relative dispersion in economic outcomes. A larger $CV$ indicates more variability relative to the mean, and is similar in spirit to, but distinct from, the Sharpe ratio \cite{sharpe1994sharpe}, which indicates the distribution of risk to average reward in investments, a critical parameter in financing and investment decisions. 

\begin{figure}[h]
    \centering
    \includegraphics[width=1.0\textwidth]{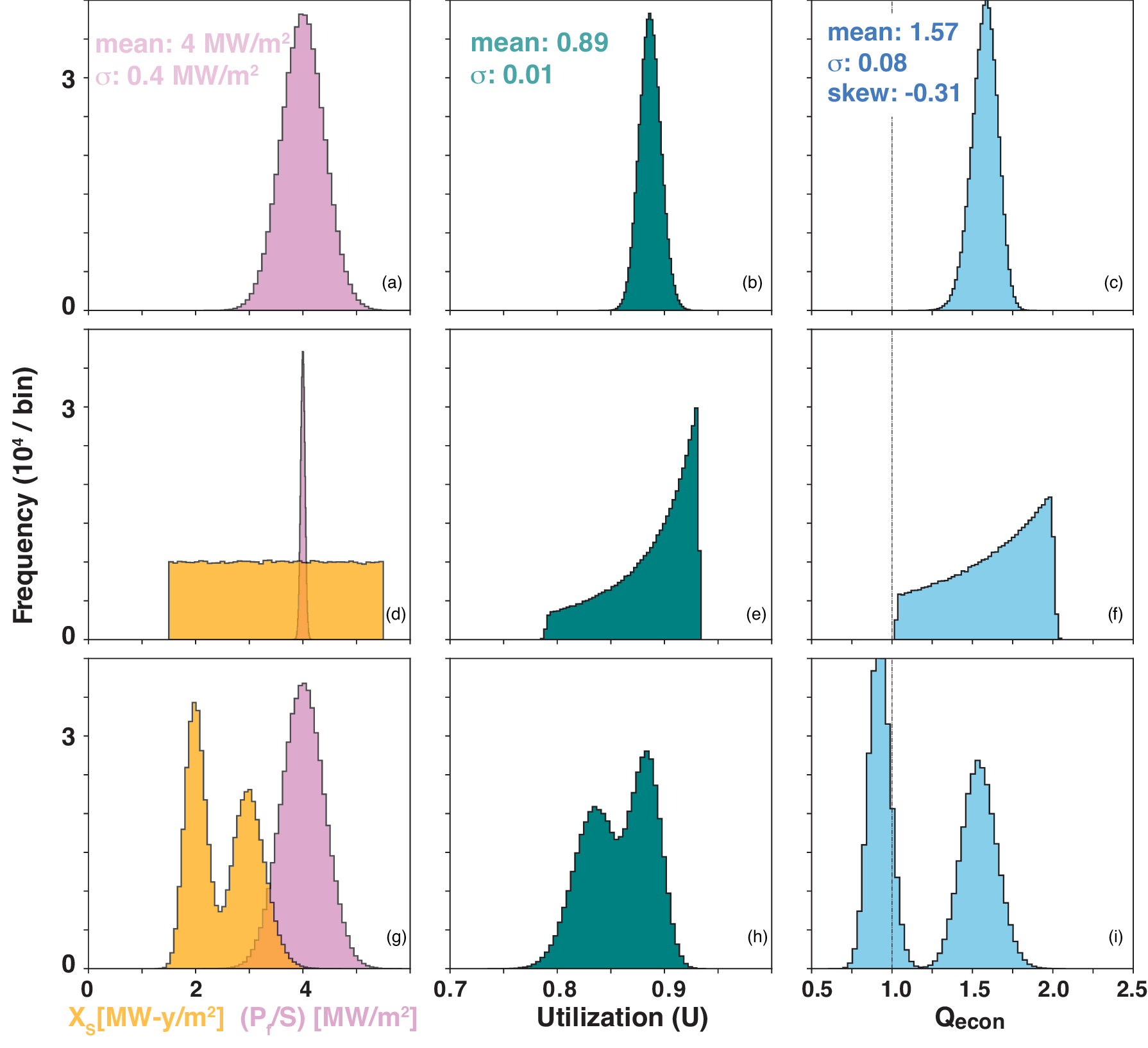}
    \caption{Examples of using the economic gain model to determine distributions of outcomes.  Except as noted the input parameters as stated in Table \ref{tab:Default_parameters}. Left column: distribution of input parameters of energy fluence limit $\mathrm{X_{S}}$ [orange] and/or fusion power density $\mathrm{P_f/S}$ [violet]. Middle and right columns: calculated distributions of utilization and $\mathrm{Q_{econ}}$ using $\mathrm{10^5}$ samples. (a-c) Input normal distribution of $\mathrm{P_f/S}$  (other inputs fixed). Fitted mean and standard deviation values noted in panels (d-f) Narrow $\mathrm{P_f/S}$ input distribution (mean: 4, $\mathrm{\sigma}$=0.04) and $\mathrm{X_{S}}$ uniformly varied from 1.5 - 5.5 $\mathrm{MW \text -  y/m^2}$. (g-i) Normal distribution of $\mathrm{P_f/S}$ as in (a-c), with bimodal distribution of $\mathrm{X_{S}}$ due to two distinct operating temperatures of equal probability: left peak (lower T, $\mathrm{\eta_E}$=0.30, $\mathrm{X_{S}}$ mean:2.0, $\mathrm{\sigma}$=0.2), right peak (higher T, $\mathrm{\eta_E}$=0.4, $\mathrm{X_{S}}$ mean:3.0, $\mathrm{\sigma}$=0.3).
    }
    \label{fig:Cnet_distribution_example}
\end{figure}

In the second example, given in Figure \ref{fig:Cnet_distribution_example} (d-f), the $\mathrm{P_f/S}$ distribution is narrowed ($\mathrm{\mu}$ =4.0, CV=0.01) while the $\mathrm{X_S}$ is uniformly distributed between 2.5 and 5.5, indicating a large range of uncertainty in the quality of knowledge for the energy fluence limit of S. The resulting U and  $Q_{econ}$  are both highly positively skewed. Since the lower U and $\mathrm{Q_{econ}}$ values arise from the lower part of the $\mathrm{X_S}$ distribution, this suggests the $\mathrm{Q_{econ}}$ ``value'' from obtaining high $\mathrm{X_S}$ is not uniform. 

In the third example, given in Figure \ref{fig:Cnet_distribution_example} (g-i), the normal $\mathrm{P_f/S}$ distribution is applied as in (a-c), but the operating temperature is taken as a binary decision to operate at high or low temperature for S, with a 50\% probability assigned to each. The operating temperature is assumed to vary the two input parameters simultaneously. First, the energy conversion efficiency $\mathrm{\eta_E}$ is partially given by its Carnot efficiency, which switches between precisely 0.3 and 0.4 at low and high temperatures, respectively. (The $\mathrm{\eta_E}$ can be accurately calculated based on the energy conversion cycle.) Second, the higher operating temperature is assumed to increase the average $\mathrm{X_S}$ due to thermal annealing effects, switching from 2.0 to 3.0, where in either case a CV=0.1 normal distribution is assigned, assuming that there will still be uncertainty in the fluence limit. The resulting U and $\mathrm{Q_{econ}}$ feature bimodal distributions. It is noteworthy that for $\mathrm{Q_{econ}}$, significant fractions of the distribution are below unity, which is unacceptable from an economic viewpoint . 

These $\mathrm{Q_{econ}}$ and U distributions indicate that, even in this simplified example, there are complex tradeoffs and interactions in the physics and engineering design in an FPP that must be considered for understanding the probabilities of economic return in an FPP.  In addition the mathematical simplicity of the model allows one to perform a wide range of statistical analysis of how the input parameters move the $\mathrm{Q_{econ}}$ contour space.

\section{Discussion}\label{sec:Discussion}

\subsection{Considerations regarding FPP areal cost}\label{sec:Areal_cost_appendix}

The FPP normalized areal cost, $\mathrm{(M\$/m^2)_{FPP}}$ in 2025 \$, have large uncertainty due to the simple fact that an FPP has not yet been constructed and operated.  The devices that have been built close to FPP scale and/or are under construction, are first-of-a-kind devices with an emphasis on confinement research, not fusion energy production. Furthermore the organization and financing structures of fusion programs makes it difficult to extract accurate costing with wide variations in in-kind contributions, development costs, site credits, etc. A starting point is a survey of confinement devices that more closely approach FPP conditions. In D-T magnetic fusion energy (MFE), JET and TFTR were $\mathrm{\sim \$ 1000 M } $ projects \cite{JET_WIKI,meade1997affordable} which approached $\mathrm{Q_p \sim 1}$ with $\mathrm{S \sim 100 \hspace{2mm} m^2}$ and so resulting in $\mathrm{(M\$/m^2)_{FPP} \sim 10}$ \cite{romanelli2015overview,hawryluk1998fusion}; however these lacked any blanket or energy extraction, and did not use the superconducting magnets required for magnetic fusion FPPs.  The non-DT superconducting MFE stellarators W7-X and LHD are of similar scale with $\mathrm{S \sim 100  \enspace m^2}$ \cite{wolf2016wendelstein,motojima2000progress}and an estimated development and construction cost $\mathrm{\$1200 M}$ and $\mathrm{\$950 M}$ respectively \cite{W7X_WIKI,LHD_Japan_Times} $ \mathrm{\rightarrow (M\$/m^2)_{FPP} \sim 10}$. In inertial fusion the National Ignition Facility has achieved $\mathrm{Q_p > 1}$ in D-T, with target chamber $\mathrm{S \sim 100 \hspace{2mm} m^2}$ and cost of $\mathrm{\sim \$ 3000 M } $ and $\mathrm{(M\$/m^2)_{FPP} \sim 30}$, again without a blanket or energy extraction. For NIF it should be noted this cost included substantial technology development costs of the laser drivers, optics and targets.  Thus the conclusion is that FPP-adjacent fusion experimental devices have normalized costs that have order of magnitude $\mathrm{(M\$/m^2)_{FPP} \sim 10}$. 

Moving to proto-commercial projects, ITER and SPARC are $\mathrm{Q_p \sim 10}$ (Table \ref{tab:FPP_parameters}) D-T MFE tokamak devices presently under construction but differ greatly in physical scale due to their  magnetic field strengths from use of varying superconductors: ITER $\mathrm{S \sim 800 \enspace m^2}$ at 5.4 tesla using $\mathrm{NbSn_3}$ superconductors with $\mathrm{P_f/S\sim0.7 \enspace MW/m^2} $, and SPARC $\mathrm{\sim 50 \enspace m^2}$ at $\mathrm{P_f/S\sim3 \enspace MW/m^2}$ at 12 tesla using high-field REBCO superconductors.  The cost of ITER is disputed \cite{kramer2018ITER} with estimates ranging from \$25,000 - 60,000 M. Using an intermediate costs provided by the ITER organization following a re-baseline of the design $\sim$ \$33,000M  leads to $\mathrm{(M\$/m^2)_{FPP} \sim 45}$.  SPARC's cost is reported at $\mathrm{\sim \$ 660 M}$ or $\mathrm{(M\$/m^2)_{FPP} \sim 15}$.  Both ITER and SPARC  have a blanket/shield to deal with the large flux of D-T neutrons, but neither will extract energy for commercialization.   The expectation is that as first-of-a-kind devices the normalized costs is elevated by the R+D development, costs associated with initiating fabrication capacity, and the other ``first-encounter'' issues with fusion power at a commercial scale; thus normalized costs should decrease for a mature FPP compared to these. 

\begin{table}[ht]
\caption{Sample FPP-class D-T fusion device costs and power performance.  Costs inflation adjusted to 2025 from cost statement year. If S area is not provided the plasma surface area is used.
\newline $^*$ ITER organization stated project cost in 2018 \cite{kramer2022ITER} and additional 5B\$ from 2024 re-baseline \cite{Matthews2024ITER}
\newline $^+$ SPARC stated costs by CFS \cite{kramer2021investors}
\newline $^\odot$ direct costs 
\newline $^\#$ target chamber dimensions containing molten salt flow
\newline $^\&$ from MANTA \cite{rutherford2024manta} study with similar tokamak dimensions
\newline $^\%$ from lower cost range in \cite{anklam2011life}
\label{tab:FPP_parameters}
}
\centering

\begin{tabular}{|c||c||c||c||c||c||c||c||c|}
\hline
Name/type                      &   $S$  &   Cost      &    $(M\$/S)_{FPP}$ &  $P_f$  &  $P_e$  & $\eta_E$  & $P_f/S$  & $c_{O/N}$ \\
                            &   $m^2$ &  $2025 M\$$ &    $M \$/m^2$      &   $MW$   &   $MW$    &           &  $MW/m^2$   &  $\$/W$  \\
\hline \hline
ITER \cite{ikeda2007progress} &  750  &  $\sim$ 33,000 $^*$ &  $\sim$ 45    & 500    &  -    &  -      &  0.7           & - \\
\hline
SPARC \cite{creely2020overview} & 45   &  $\sim$ 660$^+$     &  $\sim$ 15    & 140    &  -    &  -      &  3.1           & - \\
\hline
ARIES-RS \cite{Overview_ARIES-RS} & 420 &  5050 $^\odot$      &  12.0    & 2170  &  1000    &  0.46     &  5.2           & 5.1 \\
\hline
ARIES-ST \cite{Overview_ARIES-ST} & 580 &  5330  $^\odot$     &  9.2    & 2980  &  1000    &  0.34     &  5.1           & 5.3 \\
\hline
ARIES-AT \cite{Overview_ARIES-AT} & 460 &  3500  $^\odot$    &  7.7    & 1719  &  1000    &  0.58     &  3.8           & 3.5 \\
\hline
ARIES-CS \cite{Overview_ARIES-CS} & 750 &  4500  $^\odot$   &  6.0    & 2440  &  1000    &  0.41     &  3.3          & 4.5 \\
\hline
HYLIFE2 \cite{moir1994hylife} &  190 $^\#$ &  2750 $^\odot$     &  14.7    & 2100  &  940    &  0.45     &  11.1          & 2.9 \\
\hline
LIFE.2 \cite{anklam2011life} &  390  &  5740 $^\%$     &  14.5    & 2200  &  940    &  0.45     &  5.6          & 5.7 \\
\hline
ARC-V2E \cite{CreelySOFE_2025} &  285 &  3400 $^\&$     &  11.9  & 1130  &  400    &  0.35     &  4.0          & 8.5 \\
\hline
Stellaris \cite{lion2025stellaris} &  940 &  -            &  -    & 2700  &  1000    &  0.37     &  2.9          & - \\
\hline
\end{tabular}
\end{table}

An aside is offered regarding the history of fusion costing. All the examples above point to the challenge of accurately and/or openly reporting fusion costs. Despite the construction of many fusion relevant experiments around the world, the fusion research community has largely ignored developing agreed-upon accounting and costing standards. While this may seem unnecessary for scientific exploration, the fact is that fusion research has always been motivated as a practical energy source, and therefore as fusion projects more closely approach those of an FPP, it is unfortunate that there is ``self-obfuscation'' as to cost effectiveness. It would be unacceptable if there are factors of 2-3 reporting uncertainty in fusion science results, so it is unclear why it has been accepted such inaccuracy  is allowed for costing.  Finding documented costs of historic fusion devices is difficult on the public record, with the majority of the above costings above pulled from new articles or online sources such as Wikipedia, rather than technical publications. For ITER, the largest fusion experiment built to date, the very nature of its international agreement guarantees  that costs will never be known accurately as admitted by their spokesperson ``that ITER doesn’t provide an official estimate of construction costs because the participating countries have different methods of pricing out their in-kind contributions—mostly in the form of fabricated reactor components—and those estimates are not reported to the ITER Organization'' \cite{kramer2018ITER}. This seems an unacceptable answer for a device being built with public funds, but worse yet, disallows for effective evaluation of cost which must be a critical criterion for fusion economics. Conversely, SPARC is being constructed by a private company Commonwealth Fusion Systems funded by private capital.  In this case the company will be obligated to report accurately the full development and construction costs to their shareholders; indeed this will be the case with any private sector fusion developer. This is a welcome and necessary development for developing fusion energy. However private companies have no obligation to report publicly their detailed costs, and in fact for competitive reasons have strong disincentives to provide details of their supply chain and manufacturing. Of course as the FPP technology approaches pilot stages and market readiness, the customers will require accurate costing, as they would with any large energy source purchase. 

Commercial D-T FPP design studies are another source for costing estimates.  The US-based ARIES studies (see Table \ref{tab:FPP_parameters} for parameters and references) are insightful because they compare across multiple MFE concepts (tokamaks at varying aspect ratios, and a stellarator in ARIES-CS) but keep the net electric power fixed at 1000 MW.  The ARIES studies provide ``bottom-up'' physics and engineering designs, with specific choices of MFE configuration and fusion technology and cost estimates based on same, but uniformly sought to minimize LCOE. This makes them a useful comparison to the ``top-down'' model developed here which is impartial to specific configurations and technology but imposes high-level economic constraints with highly variable performance metrics. Direct costs are used in Table \ref{tab:FPP_parameters} from these studies as the best proxy to FPP costs. The ARIES areal FPP costs range from 6 - 12 $\mathrm{M\$/m^2}$ and the overnight normalized costs from 3.5 - 5.1 \$/W, which seems a fairly small variation given the very significant differences in their configurations and technologies.  A common feature is normalized power density $\mathrm{P_f/S > 3 \enspace MW/m^2}$ which is commented to in Section {\ref{sec:results_two_params}} as a consistent feature from the sensitivity studies on economic viability.

FPP studies for IFE concepts are included with HYLIFE-2  and LIFE.2 which are ion and laser driver concepts respectively with FPP costs $\sim 15 \enspace \mathrm{M\$/m^2}$ and the overnight normalized costs from 5.7 - 8.5 \$/W, and power density $\mathrm{P_f/S > 5 \enspace MW/m^2}$. The HYLIFE design stands out for high power density because it used a flowing molten salt wall, which permitted higher average power removal than solid surfaces. 

A recent version of an FPP MFE tokamak is ARC from Commonwealth Fusion Systems, using high-field REBCO magnets as does SPARC,  about half size in terms of S and electric power compared to ARIES designs,  but has similar costs and power when normalized to S or net electrical power.  In a similar vein a recent stellarator design Stellaris from Proxima Fusion uses REBCO high-B magnets and while costing information is not available, one notes the similar fusion power density $\mathrm{P_f/S \geq 3 \enspace MW/m^2}$ as to the other FPP designs.  

From these FPP studies (excluding ITER and SPARC, which are experimental devices) we conclude that a reasonable base case value of FPP normalized cost is $\mathrm{(M\$/m^2)_{FPP} \sim 10}$.  Furthermore it appears that normalizing to S is generally reasonable since we recover similar costing and performance despite the large range of confinement and technology choices applied in these FPP studies.

\subsection{Considerations regarding S replacement}\label{sec:S_cost}

A key result of the sensitivity analysis is that the economic performance and sufficient utilization are sensitive to the replacement cost and replacement time of S (c.f. Figures \ref{fig:Contour_simple_five} and \ref{fig:Contour_fixedPfS_ArealcostS}).  However, since no working FPP extraction surface S or blanket has ever been deployed or replaced, there is a challenge to giving these numerical results an appropriate context.

Regarding the replacement time, the approximate result of the scans through parameter space is that $\mathrm{\tau_{rep} \leq 0.25 \enspace y}$ will be an FPP requirement for robust economic performance.  This can be contrasted to the ITER project, where a changeout of the 740 blanket shield modules will require 2 years \cite{haange1999remote} using remote handling. Even though ITER is not an FPP design (it does not produce an energy product), this highlights that significant conceptual and technology changes must be adopted for FPPs. For example, these modifications in magnetic fusion energy would be horizontal vacuum vessel splitting \cite{CreelySOFE_2025} and demountable coils \cite{rutherford2024manta}, and in inertial fusion energy, a continually replenished S via flowing liquid \cite{moir1994hylife}.  Whatever the S replacement concept, however, it is clear that a time-efficient S replacement must become a high priority for the FPP designer from the outset. FPP designs should aspire to the fuel replacement times achieved in the fission energy industry, the basis of the 0.1 year choice for the FPP base case value for the replacement time (Table \ref{tab:Default_parameters}). 

With respect to S replacement costs, the first consideration is the inequality requirement stated in Equation \ref{eq:A_requirement}, which defines the boundary of basic economic viability. Inserting the parameters into the formulas and setting target costs to zero, we find:
\begin{equation}\label{eq:areal_cost_inequality}
    (M\$/S)_S \leq 8.76\times10^{-3}POE_{net}  \cdot \eta_E \cdot X_S \hspace{3mm}.
\end{equation}\vspace{1mm}
This evaluates to $\mathrm{(M\$/S)_S \leq 1.1 \enspace M\$/m^2}$, using the base case parameters of Table \ref{tab:Default_parameters}. Note this is a necessary but insufficient condition to reach economic gain since power density must be sufficiently high to hit economic breakeven.  Conversely, one can set the S replacement costs to zero $\mathrm{A_3}=0$ and evaluate the threshold for target costs:
\begin{equation}\label{eq:target_cost_inequality}
    (c/Y)_{target} \leq 2.8\times10^{-4}POE_{net}  \cdot \eta_E \hspace{3mm}.
\end{equation}\vspace{1mm}
This evaluates to $\mathrm{(c/Y)_{target} \leq 0.011 \enspace \$/MJ}$. As previously discussed, this highlights the idea that both S replacement and target costs should be viewed as consumables, and that basic economic viability can only be achieved if the consumable costs do not need exceed the value of the energy product.

There is little to no real-world experience with the S areal normalized cost $\mathrm{(M\$/S)_S}$. The ARIES MFE designs provide some cost ranges from its engineering designs, since the first wall and blanket cost are itemized (these correspond to S in those designs since they are regularly replaced in D-T fusion)  \cite{Overview_ARIES-AT,Overview_ARIES-RS,Overview_ARIES-CS}.  In the following the $\mathrm{(M\$/S)_S}$ are in units of $\mathrm{M\$/m^2}$, using inflation-adjusted 2025 dollars:  ARIES-RS=0.41,  ARIES-AT=0.32 and ARIES-CS=0.13.  The ARIES blanket costs can also be provided as a percentage of the overall FPP direct cost: ARIES-RS=3.4 \%,  ARIES-AT=4.2 \% and ARIES-CS=2.3 \%.  These are similar in magnitude to the 0.3 $\mathrm{M\$/m^2}$ and $\sim 3\%$ estimates for the base case values of Table \ref{tab:Default_parameters}.  However, these blanket designs have never actually been built, and so one must use this cost comparison with caution.

Another means to provide context is to compare these ranges of S costs to present commercial-scale products that have similar traits, such as structural integrity under loading and heat removal.  Engineered structural components are typically costed per mass ($\mathrm{M_t}$ in metric tons), i.e., the normalized cost would be $\mathrm{(M\$/M_t)}$ in millions of dollars per ton. Therefore, one must take into consideration the geometry of the S replacement, since the model S cost is $\mathrm{(M\$/S)_S}$ normalized to the surface area through which the primary fusion energy passes.  This is accomplished by assigning a depth $\mathrm{L_S}$ [m] behind S that must be replaced, thus defining the replacement volume $\mathrm{V_S=S\cdot L_S}$. Next, one must assign the volumetric fraction, $\mathrm{f_S}$, of $\mathrm{V_S}$ that are components that require replacement. For example, if neutrons are the primary cause of damage, this would be approximately the volumetric fraction of solid  components in $\mathrm{V_S}$.  This leads to a formulation for the S areal cost:
\begin{equation}\label{eq:S_areal_cost_geometry}
   (M\$/S)_S = (M\$/M_t) \cdot f_S \cdot L_S \cdot \rho_S \hspace{3mm},
\end{equation}\vspace{1mm}
where $\mathrm{\rho_S \enspace[t/m^3]}$ is the volume-averaged mass density of the replaced components. 

Equation \ref{eq:S_areal_cost_geometry} implies that simply from a geometric viewpoint, the FPP designer will be motivated to decrease areal cost by decreasing both $\mathrm{L_S}$ and $\mathrm{f_S}$.  However, $\mathrm{L_S}$ is highly constrained by multiple physical requirements: to capture the fusion product energy, to provide adequate radiation shielding, and in the case of D-T fusion, to produce tritium from neutron-lithium reactions. Decreasing $\mathrm{f_S}$ is feasible as long as the functionality of S is maintained (e.g., structural integrity or vacuum quality). A typical choice in decreasing $\mathrm{f_S}$ is to maximize the fraction of $\mathrm{V_S}$ using liquids, since they can be replaced by flow, and cannot suffer disordering degradation. Examples of this are the SiC blanket of ARIES-AT, which has a large fraction of PbLi liquid eutectic \cite{Overview_ARIES-AT}, the liquid immersion blanket using a FLiBe molten salt in ARC \cite{CreelySOFE_2025, rutherford2024manta} which has $\mathrm{f_S}\lesssim0.05$ and the flowing FLiBe blanket in the HYLIFE-2 IFE design \cite{moir1994hylife} where $\mathrm{f_S}\rightarrow0$.

As an example, we can estimate the fixed $\mathrm{L_S=1\enspace m}$, typical for D-T blankets, and a mass density of steel $\mathrm{\rho_S = 8 \enspace [t/m^3]}$ as a typical structural solid material to provide a maximum allowed S cost target,
\begin{equation}\label{eq:S_areal_cost_approx}
   (M\$/S)_S \approx 8 \enspace(M\$/M_t) \cdot f_S
   \leq 8.76\times10^{-3}POE_{net}  \cdot \eta_E \cdot X_S \hspace{3mm},
\end{equation}\vspace{1mm}
or
\begin{equation}\label{eq:S_areal_cost_approx_calc}
   (M\$/M_t) 
    \leq \frac{0.14}{f_S} \cdot 
    \left( \frac{POE_{net}}{100} \right) \cdot        
    \left( \frac{\eta_E}{0.4} \right) \cdot
    \left( \frac{X_S}{3.125} \right)    
    \hspace{3mm}, \vspace{4mm}
\end{equation}
using the base case values of Table \ref{tab:Default_parameters} and placed in a form where the sensitivity to the control parameters is evident. This highlights a design tension: the normalized cost threshold for S can be increased by improving its efficiency $\mathrm{\eta_E}$ and/or energy fluence $\mathrm{X_S}$, but those design choices must not increase the complexity / cost too much, or else the underlying economic viability will not be met.
Yet satisfying Equation \ref{eq:S_areal_cost_approx_calc}. which is equivalent to $\mathrm{(M\$/S)_S \leq 1.1 \enspace M\$/m^2}$ at base case values is only the basic economic requirement that $\mathrm{A \ge0}$. As revealed in the scoping studies, see Figure \ref{fig:Contour_fixedPfS_ArealcostS}, a more realistic limit for $\mathrm{Q_{econ} \ge 1}$ is S costing at or below the base case value of $\mathrm{(M\$/S)_S \leq 0.3 \enspace M\$/m^2}$, in which case the cost limit, with all parameters at base case value,  becomes
\begin{equation}\label{eq:S_areal_cost_approx_threshold}
   (M\$/M_t) 
    \leq \frac{0.038}{f_S}  
    \hspace{3mm}. \vspace{4mm}
\end{equation}
To gauge the challenge of meeting this S cost, and to find desirable/required $\mathrm{f_S}$, example costs from commercial or mass-produced products are examined. 

\begin{itemize}

    \item \textbf{Structural materials} such as high-strength reduced activation steels are candidates for D-T blankets, with material costs that vary between normalized costs of 0.015 (P92) and 0.075 (Eurofer) $\mathrm{(M\$/M_t)}$ \cite{zammuto2015long}, which ranges from $\sim 0.4 - 2 $ the cost limit of Equation \ref{eq:S_areal_cost_approx_threshold}. This  shows that specialized materials, even absent of assembly costs, could drive $\mathrm{(M\$/S)_S}$ to undesirable levels with high structural fractions $\mathrm{f_S}$.

    \item \textbf{Jet turbo engines} are complex engineered objects that operate in demanding conditions and require high reliability, yet are produced at significant scale for commercial airliners. The Rolls Royce Trent 900 series has a cost of $\mathrm{\sim 25 M\$}$  and a dry mass of 6.2 t \cite{RR_turbo_WIKI}, giving a normalized cost of $\mathrm{ \sim 3.8 \enspace M\$/M_t}$, about a hundred times the cost target. This high cost is due to the use of expensive, high-performance materials, complex engineering involving over $10^4$ components, extensive research and development, and quality control. This price point would likely be non-viable for a FPP, and it points to economic issues that would arise with overly complex designs.

    \item \textbf{Automobiles} are multi-component assembled commercial products. In the U.S., the average cost is 0.049 M\$ \cite{COX_KBB} and the average mass is 2 t \cite{Auto_insurance} giving a normalized cost of $\mathrm{ \sim 0.025 \enspace M\$/M_t}$ or $(\mathrm{M\$/S)_S\sim0.2\enspace f_S}$, which appears to be of the correct order of magnitude for FPP economic viability even with an $\mathrm{f_S \sim 1}$. This highlights the need for effective supply chain and assembly, given that automobiles are a highly mature mass-manufactured product with over a century of commercial experience and competition.

    \item \textbf{Nuclear submarines} are assembled products that are exposed to intense mechanical and nuclear environments. A Virginia-class submarine has a cost of $\mathrm{\sim 2800 M\$}$ and a mass of $\mathrm{\sim 7900 t}$, leading to a normalized cost of $\mathrm{ \sim 0.35 \enspace M\$/M_t}$. Using this costing implies that $\mathrm{f_S} \leq 0.1$ to meet the S costing goal.
      
\end{itemize}

This cursory examination of assembled products with similar traits to S and blanket designs illuminates the severe challenge to meeting the $\mathrm{(M\$/S)_S}$ cost targets if $\mathrm{f_S \sim 1}$. Or alternatively that FPP designers will need to seek robust designs with low $\mathrm{f_S}$ to control costs.

\subsection{Additional economic considerations}\label{sec:additional considerations}

The stylized model presented above offers a clear and quantitative framework for understanding the economics of an FPP.   However, from a financial perspective, a number of additional considerations can materially affect both the levelized cost of energy (LCOE) and the economic Lawson criterion $\mathrm{Q_{econ}}$.
The model's fidelity is clearly limited by its simplifying assumptions:  technical design parameters are known with perfect certainty such that fusion is a mature technology, and that both FPP operations and energy markets are in steady state. 
These abstractions yield considerable transparency and generality in the characterization of economic viability for FPPs of various designs, but have substantial implications for the cost of capital, stochastic variability in plant performance and revenues, working-capital requirements, and the influence of market design and policy mechanisms. 

Thus, while the model is most accurately applied to
an $n$th-of-a-kind (NOAK) fusion power plant under steady-state conditions, we are motivated to examine other considerations that can expand its applicability.
The following points outline key qualifications and potential extensions that would allow our framework to capture ever more realistic FPP scenarios for financing, operations, and investment under uncertainty.

\begin{enumerate}

\item \textbf{Risk-adjusted cost of capital.}
The framework treats construction and financing through a single real interest rate $\mathrm{i}$ in the amortization term $\mathrm{C_{fixed}}$ (Equations~\ref{eq:C_CF_cost_summary}, \ref{eq:C_CF_definition}). In practice, a fusion project's capital stack typically includes senior and junior debt, tax equity, and sponsor equity, each with distinct required returns and covenants. The appropriate discount rate is thus a risk-adjusted weighted average cost of capital (WACC). Operating risks (e.g., unplanned outages, market price risk for POE, and force majeure), technology risks (e.g., materials performance underlying $\mathrm{X_S}$ and heat removal limits setting $\mathrm{P_f/S}$), and policy risks (e.g., carbon prices or permits) all raise the required return on certain tranches of capital. Because $\mathrm{Q_{econ}}$ is monotonic in i (cf.~Figure~\ref{fig:Contour_simple_five}d, showing higher required $\mathrm{P_f/S}$ as i increases), a properly risk-adjusted WACC will generally lower $\mathrm{Q_{econ}}$ and shift the breakeven power density $\mathrm{(P_f/S)_{BE}}$ to the right. Conversely, credit enhancements such as government loan guarantees, investment tax credits, or contracts-for-difference effectively lower the WACC and improve economic viability without altering engineering parameters. A more sophisticated analysis would use multiple costs of capital, each corresponding to a different risk class, rather than a single WACC. The monotonicity of $\mathrm{Q_{econ}}$ in $\mathrm{i}$ and the
log-log-concavity established in the Supplement \cite{supplement} ensure that the
closest-viable-design projection responds smoothly to changes in
the cost of capital, with no risk of multiple local optima or
discontinuous jumps in the recommended parameter adjustments.

\item \textbf{More realistic, non-constant POE.}
The net price of energy ($\mathrm{POE_{net}}$) is assumed to be an inflation-adjusted constant, entering linearly into Equations~\ref{eq:Cgain_summary}, \ref{eq:A_definition}, and \ref{eq:LCOE_definition_Qecon}. In reality, realized revenue depends heavily on market conditions which vary outside of inflation---hourly prices, congestion, ancillary service demand, and contract structure. A fusion plant could earn multiple stacked revenue streams: grid energy, capacity payments, ancillary services, industrial heat, hydrogen, or data-center power purchase agreements (PPAs). Each product exhibits different statistical properties (e.g., mean, volatility, and correlation with outages). Because $\mathrm{LCOE_{eff} = POE_{net} /Q_{econ}}$ (Equation~\ref{eq:LCOE_definition_Qecon}), moving from full merchant exposure to long-term offtake contracts (fixed-price or floor) compresses the variance of POE and increases risk-adjusted $\mathrm{Q_{econ}}$ even if the expected price is unchanged. A small addition of a capacity payment or CfD floor can move $\mathrm{A_1}$ (Equation~\ref{eq:A_definition}) across the $\mathrm{Q_{econ} \ge1}$ threshold without any hardware change. This suggests important dynamics to FPP economic viability as a function of the stochastic properties of the $\mathrm{POE_{net}}$, especially due to drivers like geopolitical priorities, market sentiment, the impact of growing global energy demand by AI and data centers, and the shifting social norms regarding the urgency of climate change mitigation and clean energy sources.

\item \textbf{Stochastic uptime and replacement time.}
Utilization (Equation~\ref{eq:Utilization_definition_summary}) is modeled deterministically based on the ultimate durability of S, but in practice there will be a non-zero probability of random, unplanned outages due to component failure.  This probability will be larger in the FOAK FPP due to lower technical and operational maturity; one would also expect $\mathrm{\tau_{rep}}$ to have more uncertainty in a FOAK. Unplanned outages during high-price periods impose much larger financial penalties than those in off-peak seasons. Insurance, modular spares, and improved maintenance logistics can reduce not only the mean but also the variance of $\tau_{\text{rep}}$, which is crucial because $\mathrm{Q_{econ}}$ is nonlinear in U. Lenders typically underwrite conservative availability metrics (P95 or P99), so $\mathrm{Q_{econ}}$ should be evaluated under downside utilization scenarios. These facts motivated using the model in Monte Carlo simulations as illustrated Figure~\ref{fig:Cnet_distribution_example}. Structuring outage windows and maintenance carve-outs in PPAs can mitigate penalty exposure and improve debt serviceability.

\item \textbf{Consumables.}
The model treats target costs and S-replacement costs as ``consumables'' (Equations~\ref{eq:C_target_summary}--\ref{eq:C_S_Rep_summary}), and Figure~\ref{fig:Contour_simple_five}c,e highlights their nearly identical economic roles. Financially, these recurring cost streams should be hedged through long-term supply contracts or vertical integration. Because the iso-contours show diminishing gains beyond threshold $\mathrm{X_S}$ (Figures~\ref{fig:Contour_PfS_Xs}, \ref{fig:Contour_fixedPfS_Xs_yaxis}--\ref{fig:Contour_fixedPfS_Xs_xaxis}), overspending on exotic materials or designs to extend $\mathrm{X_S}$ may deteriorate NPV if it does not also reduce $\mathrm{\tau_{rep}}$ or improve $\mathrm{P_f/S}$. Make-or-buy decisions for S-fabrication must incorporate learning-curve effects and working-capital needs: maintaining higher inventories of prefabricated modules reduces downtime risk but ties up otherwise-liquid capital. Coordinated multi-plant procurement can exploit scale economies and reduce cost variance feeding into $\mathrm{C_{S,rep}}$.

\item \textbf{FOAK vs.\ NOAK.}
The model is most accurately applied to NOAK plants operating in steady state. However, mature FPPs do not exist yet and FOAK projects will be necessary, and will likely bear substantially higher costs due to technology development and first encounters with integrated operations. A complete financial roadmap must bridge FOAK risk through staged equity, milestone-based debt, and public credit support, with debt pricing that declines as technical risk is retired. FOAK valuation could be modeled as a sequence of real options---to continue, expand, or redesign---rather than a static NPV. Each option exercise updates the plant's position relative to the $\mathrm{Q_{econ}=1}$ hypersurface, gradually moving into the economically viable volume $\mathcal{V}$. An example of using the model for the assessment of payoff to technical advances is shown in Figure \ref{fig:tangent_payoff}.  This perspective avoids underinvestment in early de-risking stages that yield the greatest marginal gains in later NOAK economics.  

\item \textbf{Market design and policy levers.}
Figures~\ref{fig:Contour_PfS_taurep}- \ref{fig:Contour_PfS_Xs} show sharp engineering thresholds in $\mathrm{P_f/S}$, $\mathrm{X_S}$, and $\mathrm{\tau_{rep}}$, but policy instruments can shift economic outcomes without hardware changes. Contracts-for-Difference convert merchant volatility into fixed cashflows, effectively raising $\mathrm{A_1}$ (Equation~\ref{eq:A_definition}). Capacity markets remunerate availability and stabilize revenues linked to U. Carbon pricing and clean-energy credits raise effective $\mathrm{POE_{net}}$ for zero-carbon generation, while government guarantees directly reduce i. Each of these mechanisms acts as a ``financial control knob''—a policy-based variable that moves the design toward or across the $\mathrm{Q_{econ}\ge1}$  boundary without altering the physics or technology  of FPPs.

\item \textbf{Revenue/utilization covariance and scarcity premia.}
The deterministic model multiplies mean $\mathrm{POE_{net}}$ by mean U, but financial performance depends on the covariance between these two random variables. A dispatchable FPP can schedule outages during low-price seasons, while an unplanned S failure during scarcity events forfeits high rents. Portfolio operators can optimize outage timing against market forecasts and embed maintenance windows in contracts. Incorporating on-site thermal storage or auxiliary systems can maintain contractual firmness and capture scarcity premiums. These operational and contractual strategies increase risk-adjusted $\mathrm{Q_{econ}}$ beyond the deterministic baseline and reduce the left-tail risk highlighted by the Monte Carlo distributions in Figure~\ref{fig:Cnet_distribution_example}.

\item \textbf{Degradation uncertainty and the distributions of $\mathrm{P_f/S}$ and $\mathrm{X_S}$.}
Monte Carlo simulation results in Figure~\ref{fig:Cnet_distribution_example} show that variability in $\mathrm{P_f/S}$ and $\mathrm{X_S}$ leads to skewed and bimodal $\mathrm{Q_{econ}}$ distributions. Because lenders focus on conservative quantiles (e.g., P90 or P95 $\mathrm{Q_{econ}}$), engineering tolerances should consider minimizing variance in component performance in addition to, or instead of, maximizing mean output. Performance-linked EPC guarantees and availability wraps can transfer part of this tail risk to contractors, effectively reducing WACC and improving bankability even if the expected $\mathrm{Q_{econ}}$ remains constant.

\item \textbf{Working capital and runway risk.}
Although temporal equilibrium greatly simplifies our analysis, actual cashflows are time-dependent and likely nonstationary. Target procurement, S-fabrication, and replacement cycles generate working-capital swings and interest-during-construction (IDC) costs not captured in static $\mathrm{C_{fixed}}$. Liquidity shortfalls during ramp-up can trigger covenants even when lifetime $\mathrm{Q_{econ}}$ exceeds unity. Detailed project models should incorporate draw schedules, IDC, inventory cycles, and payment calendars. Aligning PPA payment profiles to maintenance cadences can reduce liquidity and runway risk more effectively than increasing revolving credit capacity.

\item \textbf{Plant end-of-life considerations.}
While $\mathrm{\tau_{life}}$ enters the financing term (Equations~\ref{eq:C_CF_cost_summary}, \ref{eq:C_CF_definition}), the model omits any consideration of residual value and decommissioning costs. Financial modeling should include reserves for end-of-life dismantling and waste handling as well as salvage value for reusable components (e.g., HTS magnets, turbines, vacuum systems, power electronics, etc.). The potential to repower with upgraded S designs or higher $\mathrm{\eta_E}$ cycles introduces a positive option value that offsets some terminal costs. Given the convex tradeoffs observed in Figures~\ref{fig:Contour_PfS_taurep}--\ref{fig:Contour_PfS_Xs}, designing for mid-life upgrades may be economically preferable to over-engineering initial specifications.

\item \textbf{Portfolio optimization.} The geometric representation of the viable region $\mathcal{V}$ in multi-dimensional parameter space naturally extends to a portfolio of plants or technologies. Imperfectly correlated risks---different $S$ technologies, regional $\mathrm{POE_{\rm net}}$ bases, or contract mixes, for example---can substantially improve portfolio-level $\Qecon$ (i.e., higher mean, lower volatility) for a given capital budget. The closest-viable-design optimization of Section~\ref{ssec:closestviable} offers a simple but effective capital-allocation heuristic: when the optimum is interior to the box constraints, parameters with large $\frac{|\partial \Qecon/\partial \theta_i|}{w_i}$ offer the greatest marginal improvement per unit weighted effort, and therefore are the most attractive targets for incremental R\&D or policy investment. When bounds are active, the associated lower- and upper-bound multipliers indicate which practical constraints are binding. This framework aids decision-making on whether to prioritize materials research, outage reduction, or financing innovation.

\item {\bf Steady-state vs. disequilibrium.} As with much of neoclassical economics, the assumption of temporal equilibrium in the FPP context is a powerful simplification that yields closed-form solutions and a complete characterization of economic viability. However, in practice, markets are often in disequilibrium, and frictions such as asymmetric information, incomplete or missing markets, misaligned incentives, behavioral biases, government policies, and taxes and other transactions costs can lead to market inefficiencies for extended periods of time. Moreover, by definition, the transition from FOAK to NOAK technologies implies non-stationarities in corresponding  costs, revenues, earnings, and other economic parameters. Extending the model to an adaptive framework \cite{lo:2017,lo:2024} may yield more realistic implications for short- and medium-run dynamics while preserving the long-run implications of our steady-state analysis. 

\end{enumerate}

\noindent
Taken together, these considerations underscore that the economics of fusion power cannot be fully characterized by engineering efficiency or capital cost alone. Financial structure, risk allocation, policy design, macroeconomic conditions and market integration jointly determine whether a technically viable fusion plant achieves commercial viability. Incorporating these extensions into the analytical framework would bridge the gap between the theoretical economic Lawson criterion and real-world investment decisions, providing a unified language for engineers, financiers, and policymakers to assess the path toward deployable, scalable fusion energy.

\section{Conclusion}
\label{sec:Conclusion}

The framework of physical assumptions used to derive the Lawson criterion for fusion energy gain can be fruitfully analogized to provide a similar set of criteria to estimate fusion economic gain for FPP designs. These criteria are agnostic with regards to technology and power output, and can be applied to any fusion confinement concept.

Sensitivity analysis conducted on the ten controlling parameters of the model generates a number of surprises. One consistent result is that there exists a threshold power density $\mathrm{P_f/S\sim 2 \enspace MW/m^2}$ for basic economic viability $\mathrm{Q_{econ}=1}$. This goes against the conventional wisdom that FPPs might operate profitably at low power densities that would require a less costly control surface S. Another surprise is the importance of low replacement costs and fast replacement times for the control surface S relative to its energy fluence limits. It is far better economically to have an FPP whose S can be quickly and cheaply replaced than it is to have an FPP with a maximally resilient S. The interplay between controlling parameters in the model usefully describes tradeoffs in the FPP design solution space, which can be analyzed symbolically or visualized graphically in scans between pairs of parameters.

The strengths of the Lawson approach are, paradoxically, also its weaknesses. The controlling parameters of the model, whether scientific, engineering, or economic, operate at a very high level of abstraction, and therefore they cannot give detailed prescriptions about FPP engineering design or financing the construction of an FPP. However, these parameters are more than robust enough to define a solution space for a given target, or to show that a given target is not economically feasible. It is also troublesome that some of the most important parameters are not accurately known and must be adduced from distant examples. Nevertheless, plausible numbers for the controlling parameters lead to values comparable to those found in existing energy systems for given model outputs.

With these caveats in mind, this framework and its associated model are able to provide new insights into the design space of future FPPs independent of specific knowledge of their technologies. It confirms that low-cost financing will be necessary to the economic success of any new FPP, it highlights the importance of the replacement cost and frequency of the control surface of the fusion reaction, and it overturns the idea that a very low power density will allow a FPP to become economically viable. We hope that the simplicity, flexibility, and transparency of this model will make it a staple in the fusion development space, like the Lawson criterion before it.

%***************************************************
% END OF REGULAR ARTICLE
%****************************************************

%\backmatter
\newpage

\section*{Supplementary information}

Illustrative examples and a fusion economics ``calculator'' that allows users to check the economic viability of their own parametric specifications can be found at \url{https://andrewwlo.github.io/fusioneconomics/}.
Also, a technical supplement establishing the mathematical properties
of $Q_{\mathrm{econ}}$---including its M\"obius form, concavity
in the effective power loading, log-log-concavity in all 10
parameters, and the uniqueness of the closest-viable-design
projection---is provided as an accompanying document~\cite{supplement}.

\section*{Acknowledgments}

Financial support from Rutherford Energy Ventures, LP and Stone Mountain Capital is gratefully acknowledged. No funding bodies had any role in study design, data collection and analysis, decision to publish, or preparation of this manuscript.  No direct funding was received for this study. The authors were personally salaried by their institutions during the period of writing (though no specific salary was set aside or given for the writing of this manuscript). 

We thank Peter Hancock for many helpful discussions. We also thank Layla Araiinejad for earlier discussions. The views and opinions expressed in this article are those of the authors only and do not necessarily reflect the views and opinions of any institution or agency, any of their affiliates or employees, or any of the individuals or organizations acknowledged above.

\section*{Declarations}

All authors are affiliated with Rutherford Energy Ventures, LP, a consulting and investment advisory firm specializing in fusion energy.

\noindent

\newpage

\begin{appendices}

\section{Sample derivation of energy conversion efficiency $\eta_E$}\label{sec:etae_appendix}

Consider a D-T FPP producing electricity as its energy product.  The basic definition of efficiency is
\begin{equation}\label{eq:eta_definition_electric}
    \eta_{E} \equiv \frac{P_{net}}{P_f} = \frac{P_{e,net}}{P_f} \hspace{3mm}, \vspace{2mm}
\end{equation}
where $\mathrm{P_{e,net}}$ denotes net electricity power (subscript ``e''), which can be defined by
\begin{equation}\label{eq:eta_definition_net}
    P_{e,net} =  P_{e,gross} -  P_{e,internal} \hspace{3mm}. \vspace{2mm} 
\end{equation}

The gross electric power $\mathrm{P_{e,gross}}$ is determined from
\begin{equation}\label{eq:Pe_gross_definition}
    P_{e,gross} =  P_f \enspace \eta_{th} \enspace M_n + (P_f + P_{ext}) \enspace f_{direct} \hspace{3mm}, \vspace{1mm}
\end{equation}
with $\mathrm{P_f}$ fusion power, $\mathrm{\eta_{th}}$ the blanket thermal-electric conversion efficiency, $\mathrm{M_n \equiv P_{th}/P_f}$ the effective multiplier setting total thermal power $\mathrm{P_{th}}$, including for example blanket nuclear reactions and $\mathrm{f_{direct}}$ is the fraction of fusion and externally coupled power directly converted to electricity .  The internal electric power consumption $\mathrm{P_{e,internal}}$ is determined from two primary requirements in an FPP to provide pumping power to circulate coolants $\mathrm{P_{pump}}$ to exhaust thermal power, and time-averaged external power delivered to the plasma required to maintain/obtain fusion conditions $\mathrm{P_{ext}}$.
\begin{equation}\label{eq:Pe_internal_definition}
    P_{e,internal} = P_{e,pump} + P_{e,ext} \hspace{3mm}, \vspace{2mm}
\end{equation}
with
\begin{equation}\label{eq:Pe_pump_definition}
    P_{e,pump} = \frac{f_p P_f M_n}{\eta_{e,pump}} \hspace{3mm}, \vspace{2mm}
\end{equation}
where $\mathrm{f_p \equiv P_{pump}/{P_f}} $ is the fractional pumping power requirement for the FPP, and $\mathrm{\eta_{e,pump}}$ is the wall-plug efficiency of the pumping, and the electrical power for external heating
\begin{equation}\label{eq:Pe_ext_definition}
    P_{e,ext} = \frac{P_{ext}}{\eta_{e,ext}} = 
    \frac{P_{f}}{Q_p \enspace\eta_{e,ext}} \hspace{3mm}, \vspace{2mm}
\end{equation}
where $\mathrm{\eta_{e,ext}}$ is the wall-plug efficiency of the external power and $\mathrm{Q_p \equiv P_f/P_{ext}}$ is the plasma energy gain from the Lawson criterion.

Pulling together these terms, and noting that all depend linearly on $P_f$ the FPP electrical energy efficiency can be obtained from
\begin{equation}\label{eq:etae_electric_derived}
    \eta_E = (\eta_{th} - \frac{f_p}{\eta_{e,pump}})M_n + (1+1/Q_p)f_{direct}- \frac{1}{\eta_{e,ext} \enspace Q_p} \hspace{3mm}. \vspace{2mm}
\end{equation}
This example derivation indicates that $\mathrm{\eta_E}$ can depend on a variety of plasma physics ($\mathrm{Q_p}$), nuclear physics ($\mathrm{M_n}$), thermodynamic ($\mathrm{\eta_{th}}$), electromagnetic ($\mathrm{f_{direct}}$) and other engineering parameters.

\section{Case study: Estimating $\mathrm{X_S}$ for D-T and D-D neutrons }\label{sec:Xs_appendix}

For D-T fusion, where 80 \% of the energy fluence arises from 14.1 MeV neutrons, we are interested in estimating energy fluence limit $\mathrm{X_S}$ in terms of $\mathrm{L_{dpa}}$ which expresses the ``displacement per atom'' (dpa) limit of S.  We use $\mathrm{L_{dpa}}$ since dpa is the most common metric for gauging nuclear component lifetime \cite{was2007fundamentals}  even though the lifetime can depend on other physical parameters such as transmutation effects. Eventually, all these effects (thermal cycles, displacements, transmutation) are linearly linked to energy throughput, making D-T a good case study on $\mathrm{X_S}$.

We define a parameter $\mathrm{F_{dpa,D-T}}$ $\mathrm{[dpa/(MW/m^2)\text -  y]}$ as a constant of proportionality relating the neutron energy fluence $\mathrm{(MW/m^2)\text -  y}$ to dpa.  Knowledge of  $\mathrm{F_{dpa}}$ then allows for the definition,
\begin{equation}\label{eq:X_S_DT}
X_{S,DT} \enspace  [MW \text -  y/m^2] =  
\frac{L_{dpa }}{0.8 \cdot F_{dpa,DT}} \hspace{3mm}, \vspace{2mm}
\end{equation}
where the 0.8 accounts for the fixed ratio of neutron to fusion power density in D-T. Thus knowledge of $\mathrm{F_{dpa}}$ provides the means to calculate/estimate $\mathrm{X_S}$ for a D-T FPP.

We start with a heuristic explanation of why values of $\mathrm{F_{dpa}}$ can be reasonably bounded for structural materials in S due to the nature of energy transfer. A fundamental requirement of S and the backing blanket is that it must slow down the neutrons, primarily by elastic collisions with the atoms of atomic mass A in S. For typical atomic number of the atoms in S ($\mathrm{A > 12}$) in structural materials the neutrons must undergo many collisions ($\mathrm{ \sim > 50}$) before thermalizing. Atomic displacements arise due to the kinematics of the energy transfer from neutrons to atoms exceeding a threshold displacement energy.  However, almost by default, candidate solid structural materials have significant atomic displacement energy thresholds ($\mathrm{E_d>  40 \enspace eV}$), and with absolute values over a limited range of values $\mathrm{\sim \times2}$. So in the end each neutron can be thought of as being required to induce a statistically similar number of displacements in S, since conceptually what is happening is that the neutron's kinetic energy is being transferred into heat (atoms movement) and displacement through atom self-collisions following the primary collision. Therefore the total number of displacements by each source neutron is $\mathrm{\sim E_{0,neutron}/E_d}$ with their distribution being set by neutron transport. However the physical density of the displacements is immaterial since dpa is displacement density \textit{normalized} to the material atomic density. 

A quantitative approximation of $\mathrm{F_{dpa}}$ can be obtained starting with the observations above. Each elastic collision on average fractionally decreases neutron energy by a kinematic factor KF, averaged over scattering angles, which to a close approximation is given by $\mathrm{KF \sim 2/A}$ for $\mathrm{A > 10}$. This will be appropriate since the atomic mass expected in most solid components ($\mathrm{(Z \geq 4-5)}$. The number of collision to slow the neutrons from their starting energy ($\mathrm{E_0=14.1 \enspace MeV}$) to some threshold lower energy ($\mathrm{E_{min}}$) is taken from,
\begin{equation}\label{eq:N_slowing}
\begin{split}
\frac{E_{min}}{E_0}=(1-KF)^{N_{slow}}  \rightarrow \\
N_{slow}=\frac{\ln({E_{min}/E_0})}{\ln(1-KF)} \simeq\frac{\ln({E_{min}/E_0})}{\ln(1-2/A)} \hspace{3mm}, \vspace{2mm}
\end{split}
\end{equation}
where $\mathrm{N_{slow}}$ is the statistical average of the number of slowing collisions between neutrons and target atoms in S. For simplification this assumes a homogeneous S composed of atoms of atomic mass A. The fluence of 1 MW-y of 14.1 MeV neutrons over 1 $\mathrm{m^2}$ is equivalent to a number fluence of
\begin{equation}\label{eq:neutron number fluence}
\begin{split}
& \Phi_{MW-y} \enspace   = \\
& \frac{10^6 [J/s/MW]\cdot 3.15\times 10^7 [s/y]}{1.6\times10^{-19}[J/eV]\cdot14.1\times10^6[eV/neutron]\cdot1m^2} \\
& =1.4 \times 10^{25} \enspace [neutrons/m^2/MW \text -  y]
\hspace{3mm}, \vspace{2mm}
\end{split}
\end{equation}
while the average number of displacements (``disp") caused by the neutron collision is 
\begin{equation}\label{eq:N_disp}
N_{disp} = \frac{0.8*KF\cdot \bar{E}_{n} \cdot N_{slow}}{E_{disp}} 
\hspace{3mm}, \vspace{2mm}
\end{equation}
where $\mathrm{\bar{E}_{n}}$ [eV] is the collision-averaged energy of neutrons, the 0.8 factor comes from the Kinchin-Pease model \cite{kinchin1955displacement}, and $\mathrm{E_{disp}}$ [eV] is the threshold displacement  energy of atoms in the material composing S. The density of displacements is then approximated by,
\begin{equation}\label{eq:n_disp}
n_{disp,MW-y} \enspace [disp/m^3]= \frac { \Phi_{MW-y} \cdot N_{disp} } {L_n} 
\hspace{3mm}, \vspace{2mm}
\end{equation}
where $\mathrm{L_n}$ [m] is the characteristic distance into S in which the collisions are occurring.  Treating the neutron transport as a random-walk process we can estimate
\begin{equation}\label{eq:Ln_estimate}
L_n\simeq(N_{slow})^{1/2} \enspace \lambda_n=(N_{slow})^{1/2} \enspace (\sigma \cdot n_S)^{-1} \hspace{3mm}, \vspace{2mm}
\end{equation}
where $\mathrm{\lambda_n}$ [m] is the mean-free path between collisions, $\mathrm{\sigma}$ [$\mathrm{m^2}$] is the microscopic cross-section for the neutron-atom collision and $\mathrm{n_S}$ [$\mathrm{m^{-3}}$] is the atom volumetric density of atoms in S. The displacements per atom is the normalized density of displacement, i.e., $\mathrm{n_{disp}/n_S}$, and gathering terms from Equations \ref{eq:N_slowing} - \ref{eq:Ln_estimate}, $\mathrm{F_{dpa}}$ for D-T can be estimated from,
\begin{equation}\label{eq:Fdpa_formula}
\begin{split}
& F_{dpa,DT} \simeq \frac{\Phi_{MW-y} \cdot N_{disp}}{L_n\cdot n_S}  \\
& \simeq 1.4 \times 10^{25}\cdot \sigma\cdot 
[\frac{\ln({E_{min}/E_0})}{\ln(1-KF)}]^{1/2}\cdot\frac{0.8 \cdot KF\cdot\bar{E_n}}{E_{disp}} \hspace{3mm}, \vspace{2mm}
\end{split}
\end{equation}
which is independent of $\mathrm{n_S}$ as previously stated. 

For an estimate of $\mathrm{F_{dpa,DT}}$, the microscopic cross-section is assigned to 1 barn or $\mathrm{10^{-28} \enspace m^2}$ typical for fast neutrons elastic collisions, the kinematic factor $\mathrm{KF \simeq 2/A}$.  The displacement energy will be calculated over a range 40 - 80 eV typical for candidate fusion structural materials and realizing that the expected accuracy of this simple model is $\mathrm{\sim 2}$.  The average neutron energy of the $\mathrm{N_{slow}}$ collisions is estimated from a logarithmic average between $\mathrm{E_0}$ to $\mathrm{E_{min}}$ as
\begin{equation}\label{eq:Ebar_formular}
\bar{E}_{n}\sim exp(0.55\cdot[\ln{E_0}+\ln{E_{min}}])\hspace{3mm}, \vspace{2mm}
\end{equation}
where the 0.55 is found to provide better agreement to numerical simulations than 0.5. 

A choice of $\mathrm{E_{min}}$ is required and two options are considered. The first option sets the minimum energy is fixed at $\mathrm{E_{min}=10 \enspace keV}$ with the reasoning that this assures that only fast neutrons are being considered, justified by the fact that we are most interested in the peak dpa rate which will be driven by energetic neutrons near the front of S. Choosing a fixed minium neutron energy is standard practice in considering effects of neutrons on materials \cite{zinkle2013challenges} (e.g., neutron degradation of high-T superconductors \cite{sorbom2016determination}). Using Equation \ref{eq:Ebar_formular} this fixes $\mathrm{\bar{E}_{n} = 1.6 \times10^6 \enspace eV}$ for evaluating Equation \ref{eq:Fdpa_formula}.
% \begin{equation}\label{eq:Fdpa_formula_10keV}
% \begin{split}
% & F_{dpa,DT}  \simeq 21\cdot 
% [\frac{-7.25}{\ln(1-2/A)}]^{1/2}\cdot(2/A) \\
% & \mathrm{for  \enspace E_{min}=10 \enspace keV}
% \end{split}
% \end{equation}\vspace{1mm}
A second option is to scale $\mathrm{E_{min} = M\cdot E_{disp}/KF }$  so that the minimum energy is a fixed multiplier M of the threshold energy for a neutron to cause a primary collision displacement. For this example we choose M=4 to assure this condition (noting $\mathrm{F_{dpa}}$ is approximately logarithmic sensitive to this choice of M).

% \begin{equation}\label{eq:Fdpa_formula_Emin}
% \begin{split}
% & F_{dpa,DT}  \simeq 2.8\times10^{-3}\cdot 
% [\frac{\ln{(2.5 \cdot A \cdot E_{disp}/E_0)}}{ \ln{(1-2/A )}}]^{1/2} \\
% &\cdot \frac{(2/A)\cdot exp (0.5\cdot[\ln{E_0}+\ln{(2.5\cdot cdot A \cdot E_{disp}}])}{E_{disp}} \\
% & \mathrm{for \enspace E_{min}=M \cdot E_{disp}/KF }
% \end{split}
% \end{equation}\vspace{1mm}

These example formulas are shown graphically in Figure \ref{fig:Fdpa}. Quantitatively these simple formulas provide a good match to example neutronics calculation, despite the significant simplifications invoked in the formulations. Furthermore, the general trend is as noted from the qualitative description, that $\mathrm{F_{dpa}}$ is weakly sensitive to the composition / atomic number of S since the kinematic factor of individual collision tends to be washed out by the collective requirements of moderating the neutrons. Indeed it is more likely that the chosen displacement energy for the materials in S has a larger impact than A

\begin{figure}[h]
    \centering
    \includegraphics[width=0.75\textwidth]{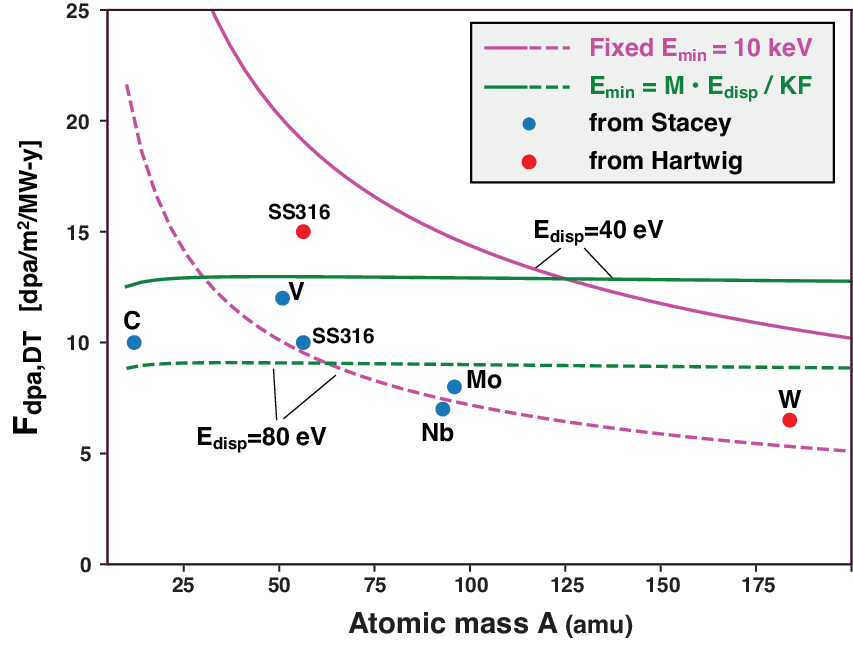}
    \caption{$\mathrm{F_{dpa}}$ as a function of atomic mass for a MW-y energy fluence of 14.1 MeV D-T neutrons .  Blue lines  with fixed $\mathrm{E_{min}=10 \enspace keV}$ and green lines with $\mathrm{E_{min} = 4\cdot E_{disp}/KF }$. Curves for displacement energies $\mathrm{E_{disp}}$ of 40 and 80 eV shown. Example results from neutronics are shown as circles with elements labeled from (green) Stacey \cite{stacey1985fusion} for single-composition blankets and (red) Hartwig \cite{HartwigPC} using OpenMC with thin first wall ($\mathrm{\sim cm}$) and elemental lithium as backing blanket material.}
    \label{fig:Fdpa}
\end{figure}

The conclusion is that for D-T fusion one may broadly expect a conversion factor $\mathrm{F_{dpa}\sim10}$ which then informs $\mathrm{X_S}$.  From Equation \ref{eq:X_S_DT} a $\mathrm{L_{dpa}=25}$  and $\mathrm{F_{dpa}\sim10}$ results in an estimated $\mathrm{X_S=3.125 \enspace MW \text -  y/m^2}$. While this is the base case value listed in Table  \ref{tab:Default_parameters} it is emphasized that the $\mathrm{X_S}$ model is in fact generic and not linked to only D-T fusion. Nonetheless this exercise in estimating $\mathrm{X_S}$ through dpa limits for D-T fusion neutrons provides further context for the  economic model, and firmly ties together the concept of limited S lifetime due to energy fluence.

The $\mathrm{X_S}$ framework  and formulas provided (Equations \ref{eq:N_slowing} through \ref{eq:Fdpa_formula} ) are easily  adapted for establishing operational period in other fusion fuels where neutrons remain the dominant degradation mechanism for S. If so, then one must determine then the relative fraction of exiting fusion power as neutrons and their energies specific to the fuel cycle / plasma design, and the production of displacement damage. This is a more complex optimization of the fusion reactivity (see for example the derivations of non D-T fusion fuel cycles for plasma gain in \cite{wurzel2022progress}), and must contain an accurate calculation of neutron-producing ``side'' reactions (e.g., D-D fusion in D-3He ,  proton-alpha fusion in p-11B \cite{moreau1977potentiality}), and their energies, even in systems where the principal fusion reaction is aneutronic. 

For illustration we can choose a simplified example of a D-D fuel cycle, where the tritium and 3He produced by the D-D reactions are assumed to be removed before burning. In this case a quarter of the fusion products produced are neutrons with energy $\mathrm{E_0=2.45}$ MeV, comprising 34 \% of the fusion energy and $\mathrm{\Phi_{MW-y}=8.1 \times10^{25}}$, providing
\begin{equation}\label{eq:Fdpa_formula_DD}
\begin{split}
& F_{dpa,DD}   \\
& \simeq 8.1 \times 10^{25}\cdot \sigma\cdot 
[\frac{\ln({E_{min}/E_0})}{\ln(1-KF)}]^{1/2}\cdot\frac{0.8 \cdot KF\cdot\bar{E_n}}{E_{disp}} \hspace{3mm}, \vspace{2mm}
\end{split}
\end{equation}
which evaluates to $\mathrm{F_{dpa,DD}\simeq16}$ (which roughly agrees with neutronic simulations) for A $\sim$50 with $\mathrm{E_{disp}=80}$ eV using the $\mathrm{E_{min} = 4\cdot E_{disp}/KF }$ assumption.  The resulting energy fluence limit is
\begin{equation}\label{eq:X_S_DD}
X_{S,DD} \enspace [MW \text -  y/m^2] = 
\frac{L_{dpa }}{0.34 \cdot F_{dpa,DT}}  \hspace{3mm}, \vspace{2mm}
\end{equation}
where the 0.34 reflects the fractional energy in neutrons.  Using again $\mathrm{L_{dpa}=25}$, evaluated at A $\sim$ 50 results in $\mathrm{X_S \simeq 4.6 \enspace MW \text -  y/m^2 }$, which is $\sim$ 50 \% higher than the D-T $\mathrm{X_S}$ case. This modest increase in $\mathrm{X_S}$, even though the neutron fraction is much lower in D-D,  is due to the fact $\mathrm{X_S}$ properly links the fluence limit to \emph{total} fusion power density passing through S, which is the primary concern in a commercial FPP.

Alternatively, in ``neutron-poor'' fusion cycles one may determine that other energy transmission mechanisms through S lead to lifetime limits faster than neutrons. This could be charged particle flux or high-energy photon flux in surfaces used for direct energy capture, or in electromagnetic components used for EM energy recovery, as discussed in the main text. Or the component damage may have compounded effects from the simultaneous direct energy capture and neutron flux. Regardless of these details, all fusion energy must pass through S, and one can construct constituent relations between that energy fluence and S lifetime. The extension of this economic framework to consider the effect of fusion fuel cycle on economics will be the subject of future work. However by examining Equation \ref{eq:Cgain_definition} and Equation \ref{eq:Cgain_full} it is apparent there are interactions between fusion power density and S lifetime, both which are strongly impacted by the choice of fusion fuel.

Of course for specific FPP S designs and fuel cycles would use accurate neutronics, sputtering, thermal, etc calculations to determine $\mathrm{X_S}$. However these simple examples illuminate the issue at hand, which is that fusion energy fluence through S will lead to finite operational lifetime, and determining this limit is a key aspect of understanding FPP economics.

\end{appendices}

%%===========================================================================================%%
%% If you are submitting to one of the Nature Portfolio journals, using the eJP submission   %%
%% system, please include the references within the manuscript file itself. You may do this  %%
%% by copying the reference list from your .bbl file, paste it into the main manuscript .tex %%
%% file, and delete the associated \verb+\bibliography+ commands.                            %%
%%===========================================================================================%%

\newpage

\bibliography{sn-bibliography}% common bib file
%% if required, the content of .bbl file can be included here once bbl is generated
%%\input sn-article.bbl

\newcommand{\beginsupplement}{%
    \setcounter{table}{0}
    \renewcommand{\thetable}{S\arabic{table}}%
    \setcounter{figure}{0}
    \renewcommand{\thefigure}{S\arabic{figure}}%
    \setcounter{section}{0}
    \renewcommand{\thesection}{S\arabic{section}}
    \setcounter{equation}{0}
    \renewcommand{\theequation}{S\arabic{equation}}
}

\appendix
\clearpage
\beginsupplement

\theoremstyle{plain}

\newtheorem{lemma}[theorem]{Lemma}
\newtheorem{corollary}[theorem]{Corollary}

\theoremstyle{definition}

\theoremstyle{remark}

\newcommand{\R}{\mathbb{R}}
\newcommand{\bell}{\boldsymbol{\ell}}

\newcommand{\Csrep}{C_{S,\mathrm{rep}}}
\newcommand{\Ccf}{C_{fixed}}
\newcommand{\pfs}{P_f\!/S}
\newcommand{\taur}{\tau_{\mathrm{rep}}}
\newcommand{\tauL}{\tau_{\mathrm{life}}}
\newcommand{\mSs}{(M\$/S)_S}
\newcommand{\mSf}{(M\$/S)_{\mathrm{FPP}}}
\newcommand{\LSE}{\mathrm{LSE}}

\section{Properties of $\mathrm{\Qecon}$: Concavity, Log-Log-Concavity,
  and Implications for Closest-Viable-Design Optimization\\
  {\textnormal {\normalsize A. Lo \\
  Laboratory for Financial Engineering, Massachusetts Institute of Technology, 
  Cambridge MA 02139 USA\\
  \\
  Supplement to\\ 
  \textit{Criteria for the economic viability of fusion power plants}\\
   D.G. Whyte, A. Lo, R. Bielajew, M. Hancock, R. Moeykens, G. Shaw} }
   }

\bigskip

% \title{%
%   Properties of $\mathrm{\Qecon}$: Concavity, Log-Log-Concavity,\\[4pt]
%   and Implications for Closest-Viable-Design Optimization}
% \author{%
%   A. Lo, 
%   \textit{Massachusetts Institute of Technology, Cambridge MA 02139}\\
%   \\
%   Supplement to\\ 
%   \textit{Criteria for the economic viability of fusion power plants}
%   \\D.G. Whyte, A. Lo, R. Bielajew, \\
%   M. Hancock, R. Moeykens, G. Shaw\\
%   preprint available on Arxiv}
% \date{April 2026}
%new

% \begin{document}
% \maketitle
% \tableofcontents
% \bigskip

\subsection{Setup and Definitions}\label{sec:setup}

We work in 10-dimensional control space with parameter vector
\[
\btheta = \bigl(\pfs,\;\taur,\;\tauL,\;\POE,\;X_S,\;\eta_E,\;(c/Y)_{\mathrm{target}},\;\mSs,\;\mSf,\;i\bigr),
\]
where all components are strictly positive. Throughout we adopt the shorthand
\[
p \equiv \pfs, \quad \tau \equiv \taur, \quad \tau_L \equiv \tauL, \quad
E \equiv \POE, \quad X \equiv X_S, \quad \eta \equiv \eta_E,
\]
\[
c_Y \equiv (c/Y)_{\mathrm{target}}, \quad m_S \equiv \mSs, \quad m_F \equiv \mSf.
\]

\begin{remark}[Notation conventions]\label{rem:notation}
We follow the manuscript's notation and units throughout. The interest rate $i$ is in \emph{percent} (nominal value $i=5$, not $0.05$); the factor $0.01$ appears explicitly in the annuity formula~\eqref{eq:Ccf}. The quantity $\mSf$ is the areal plant capital cost in M\$/m$^2$, distinct from the overnight cost $c_{O\!/N}$ in \$/W derived via $c_{O\!/N} = \mSf / [(\pfs)\,\eta_E]$.
\end{remark}

\begin{definition}[Constituent rates]\label{def:rates}
\begin{align}
\Cgain &= \alpha\, E\, \eta\, p\, U, &&\alpha = 8.76 \times 10^{-3}, \label{eq:Cgain}\\
\Ctarget &= \beta\, c_Y\, p\, U, &&\beta = 31.5, \label{eq:Ctarget}\\
\Csrep &= m_S \cdot \frac{p}{X} \cdot U, \label{eq:Csrep}\\
\Ccf &= m_F \cdot \varphi(i,\tau_L), \qquad
\varphi(i,\tau_L) \equiv \frac{0.01\,i\,(1+0.01\,i)^{\tau_L}}{(1+0.01\,i)^{\tau_L} - 1}, \label{eq:Ccf}
\end{align}
where the utilization factor is
\begin{equation}\label{eq:U}
U = \frac{X}{X + p\,\tau}.
\end{equation}
\end{definition}

\begin{definition}[Economic gain factor]\label{def:Qecon}
\begin{equation}\label{eq:Qecon}
\Qecon(\btheta) = \frac{\Cgain}{\Ctarget + \Csrep + \Ccf}.
\end{equation}
\end{definition}

\subsection{Derivation of the M\"obius Form}\label{sec:mobius}

Define the \emph{effective power loading} $x \equiv p \cdot U = pX/(X + p\tau)$. In terms of~$x$:
\[
\Cgain = \alpha E\eta \cdot x, \qquad
\Ctarget = \beta c_Y \cdot x, \qquad
\Csrep = \frac{m_S}{X} \cdot x.
\]

\begin{proposition}[M\"obius representation]\label{prop:mobius}
With $a \equiv \alpha E\eta$, $b \equiv \beta c_Y + m_S/X$, $d \equiv m_F\varphi(i,\tau_L)$:
\begin{equation}\label{eq:Qmobius}
\boxed{\;\Qecon = \frac{a\,x}{b\,x + d},\;}
\end{equation}
a M\"obius transformation in $x$, with $a,b,d > 0$.
\end{proposition}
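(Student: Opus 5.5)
The plan is a direct substitution, followed by a check of the positivity claims. First I rewrite each of the four constituent rates of Definition~\ref{def:rates} in terms of the effective power loading $x = pU = pX/(X+p\tau)$. By \eqref{eq:Cgain}--\eqref{eq:Csrep}, each of $\Cgain$, $\Ctarget$ and $\Csrep$ depends on $p$ and $U$ only through the product $pU$. So
\[
\Cgain = \alpha E\eta\, x, \qquad \Ctarget = \beta c_Y\, x, \qquad \Csrep = \frac{m_S}{X}\, x .
\]
By contrast, $\Ccf = m_F\varphi(i,\tau_L)$ in \eqref{eq:Ccf} contains neither $p$ nor $U$, so it is a constant $d$ with respect to $x$.

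Next I substitute these into Definition~\ref{def:Qecon}. The numerator is $ax$ with $a = \alpha E\eta$. The denominator is $(\beta c_Y + m_S/X)\,x + m_F\varphi = bx + d$. This gives \eqref{eq:Qmobius} immediately. It is a M\"obius (linear-fractional) map $x \mapsto (ax+0)/(bx+d)$, and it is nondegenerate because its determinant $a d - 0\cdot b = ad$ is nonzero once $a,d > 0$.

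The only step needing care is positivity of the coefficients. The main issue is the annuity factor: I need $\varphi(i,\tau_L) > 0$ for every admissible $i$, including negative real rates and the removable point $i = 0$. Set $r = 0.01\,i > -1$ and $g = (1+r)^{\tau_L}$ with $\tau_L > 0$.
\begin{itemize}
\item If $r > 0$, then $g > 1$, so the numerator $rg$ and the denominator $g-1$ are both positive.
\item If $-1 < r < 0$, then $0 < g < 1$, so $r < 0$ and $g - 1 < 0$, and the ratio is again positive.
\item At $r = 0$, the limit is $1/\tau_L > 0$, which I take as the definition of $\varphi$ there (the usual continuous extension).
\end{itemize}
Hence $d = m_F\varphi > 0$ provided $m_F > 0$. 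Likewise $a = \alpha E\eta > 0$, since $E, \eta > 0$ by the standing assumption that all components are strictly positive.

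For $b$, we have $b \ge m_S/X > 0$ under strict positivity of $m_S$. If $c_Y$ or $m_S$ is allowed to vanish, $b$ is only nonnegative, and I would remark that the formula still holds with $b \ge 0$.

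I would close by noting that $x$ ranges over $(0, X/\tau)$ as $p$ ranges over $(0,\infty)$, since $x$ is increasing in $p$. This range is what later monotonicity and concavity statements will use.
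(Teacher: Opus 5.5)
Your proposal is correct and takes the same route as the paper: rewrite $C_{\mathrm{gain}}$, $C_{\mathrm{target}}$ and $C_{S,\mathrm{rep}}$ as multiples of the effective loading $x = pU$, and note that $C_{\mathrm{fixed}} = m_F\varphi(i,\tau_L)$ does not depend on $x$. Your explicit check that $\varphi(i,\tau_L) > 0$ for $-1 < 0.01\,i < 0$ and at $i = 0$ goes slightly beyond the paper, whose supplement simply assumes all ten parameters are strictly positive.
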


\subsection{Concavity in $x$}\label{sec:concavity_x}

\begin{proposition}\label{prop:concavity_x}
For $a,b,d > 0$, $Q(x) = ax/(bx+d)$ is strictly increasing and strictly concave on $(0,\infty)$, with $dQ/dx = ad/(bx+d)^2 > 0$ and $d^2Q/dx^2 = -2abd/(bx+d)^3 < 0$.
\end{proposition}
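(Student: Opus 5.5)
The plan is a direct computation on the rational function $Q(x)=ax/(bx+d)$. Since $a,b,d>0$ and $x>0$, the denominator $bx+d$ is strictly positive on $[0,\infty)$, so $Q$ is smooth (indeed real-analytic) there. This lets us use derivative tests for monotonicity and concavity without any case analysis.

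First, I will differentiate with the quotient rule:
\[
\frac{dQ}{dx}=\frac{a(bx+d)-ax\cdot b}{(bx+d)^2}=\frac{ad}{(bx+d)^2}.
\]
Here the $abx$ terms cancel, which is the only point needing care. Because $a,d>0$, this derivative is strictly positive, so $Q$ is strictly increasing on $(0,\infty)$.

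Second, I will differentiate $ad\,(bx+d)^{-2}$ once more with the chain rule:
\[
\frac{d^2Q}{dx^2}=-\frac{2abd}{(bx+d)^3}.
\]
This is strictly negative for $x>0$, since all of $a,b,d$ and $bx+d$ are positive. A twice-differentiable function with negative second derivative on an interval is strictly concave, which completes the argument.

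As a cross-check, I would also note the alternative form $Q(x)=\frac{a}{b}-\frac{ad/b}{bx+d}$. This exhibits $Q$ as a constant minus a positive multiple of the convex, decreasing function $1/(bx+d)$. It makes both properties visible at once and recovers the asymptote $a/b$, consistent with $Q_{econ,max}$ in the main text.

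There is no real obstacle here. The only step to watch is the cancellation in the first derivative's numerator; everything else follows from the sign conditions $a,b,d>0$ supplied by Proposition~\ref{prop:mobius}.
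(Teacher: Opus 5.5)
Your proof is correct and matches the paper: the paper gives no separate proof, and the proposition itself records the quotient-rule and chain-rule derivatives $dQ/dx$ and $d^2Q/dx^2$ that you compute. One small correction to your side remark: $a/b$ is the limit as $x\to\infty$, but $x = pX/(X+p\tau) < X/\tau$ whenever $\tau>0$, so the main text's $Q_{\mathrm{econ,max}}$ is $Q(X/\tau) = a/(b + d\tau/X)$, which reduces to $a/b$ only when $\tau_{\mathrm{rep}}=0$.
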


\subsection{Non-Concavity in the Full Parameter Space}\label{sec:nonconcave}

\begin{proposition}\label{prop:nonconcavity}
$\Qecon(\btheta)$ is not concave on~$\R^{10}_{++}$.
\end{proposition}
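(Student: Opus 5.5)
To show $\Qecon$ is not concave on $\R^{10}_{++}$, it suffices to exhibit a single line (or a single coordinate direction) along which the restriction of $\Qecon$ fails to be concave. Concavity of a function on a convex set implies concavity of every restriction to a line segment. So one convex one-dimensional slice is enough, or a point where the Hessian has a positive diagonal entry.

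The simplest candidate is the direction of $E=\POE$ jointly with $\eta=\eta_E$. By Proposition~\ref{prop:mobius}, $\Qecon = \alpha E\eta\, x/(bx+d)$, and $x=pX/(X+p\tau)$, $b$, $d$ do not depend on $E$ or $\eta$.

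Fix all other parameters at any strictly positive values and set $K \equiv \alpha x/(bx+d)>0$. Then $\Qecon = K\,E\eta$. Restrict to the segment $E=\eta=t$ for $t$ in a small interval inside $(0,1)$, so that $\eta$ stays admissible. This gives $g(t)=Kt^2$, with $g''(t)=2K>0$. Hence $g$ is strictly convex, and $\Qecon$ is not concave along this segment.

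Equivalently, in two dimensions, the Hessian of the bilinear map $(E,\eta)\mapsto KE\eta$ is $\begin{pmatrix}0&K\\K&0\end{pmatrix}$, whose eigenvalues are $\pm K$. It is indefinite, so it is not negative semidefinite anywhere.

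As a backup, or a second witness, I would use the $m_F$ direction alone. Along it, $\Qecon = ax/(bx+m_F\varphi)$ is proportional to $1/(c_1+c_2 m_F)$, which is strictly convex in $m_F$. The same holds in $c_Y$ and in $m_S$. This gives non-concavity even along a single coordinate axis and avoids the joint-direction construction.

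The only step needing care is that the chosen segment stays inside the domain, which here includes the constraint $\eta<1$ from $\Theta$. That is handled by taking $t$ in a small interval. No real obstacle is expected; the bilinearity in $(E,\eta)$, or the reciprocal-affine dependence on the cost parameters, settles it immediately.
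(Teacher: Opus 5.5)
Your proof is correct. Your backup witness is exactly the paper's argument: the paper fixes everything except $m_F$, writes ${\Qecon} = ax/(bx+\varphi m_F)$, and notes that $d^2{\Qecon}/dm_F^2 = 2ax\varphi^2/(bx+\varphi m_F)^3>0$.

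Your primary witness takes a genuinely different route. Since ${\Qecon}$ is linear in $E$ alone and in $\eta$ alone, neither of those coordinate axes can refute concavity. You therefore need a joint direction, along which the bilinear factor $KE\eta$ becomes a convex quadratic.

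What your route buys is robustness. It uses nothing about the denominator except that it does not depend on $(E,\eta)$, so it would survive any change to the cost side of the model. It also makes a point that complements the paper's remark on non-strict log-log-concavity: the monomial numerator, which is log-affine, already destroys ordinary concavity.

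What the paper's route buys is a stronger form of failure. ${\Qecon}$ is not concave even as a function of the single coordinate $m_F$ (and likewise $c_Y$ or $m_S$, as you note). Your $(E,\eta)$ construction cannot give that. The paper's witness also has a direct economic reading as a diminishing marginal penalty from capital cost.

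One small point: the proposition is stated on $\mathbb{R}^{10}_{++}$, where $\eta$ is not capped at $1$. Restricting to $t<1$ is therefore harmless but unnecessary.
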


\begin{proof}
Fix all parameters except $m_F$ at any strictly positive values (e.g.\ the nominal values of Table~2 with $c_Y = 10^{-4}$). Then $\Qecon = ax/(bx + \varphi m_F)$ with $ax, \varphi > 0$ constant, and $d^2\Qecon/dm_F^2 = 2ax\varphi^2/(bx+\varphi m_F)^3 > 0$. Since $\Qecon$ is strictly convex along this line, it cannot be concave on~$\R^{10}_{++}$.
\end{proof}

\subsection{Log-Log-Concavity}\label{sec:loglogconcavity}

\subsubsection{Definitions}

\begin{definition}\label{def:llc}
$f:\R^n_{++}\to\R_{++}$ is \emph{log-log-concave} if $\log f(e^{\bell})$ is concave in $\bell = (\log\theta_1,\ldots,\log\theta_n)$. Equivalently, for all $\btheta^{(1)}, \btheta^{(2)}$ and $\lambda \in [0,1]$:
\[
f\!\bigl(\btheta^{(1)\lambda} \odot \btheta^{(2)(1-\lambda)}\bigr)
  \;\geq\;
  f(\btheta^{(1)})^\lambda \cdot f(\btheta^{(2)})^{1-\lambda},
\]
where $\odot$ denotes componentwise geometric combination.
\end{definition}

\begin{lemma}[Monomial over posynomial]\label{lem:monposyn}
If $N = c_0\prod\theta_i^{\alpha_i}$ is a monomial ($c_0 > 0$) and $D = \sum_{k} c_k\prod\theta_i^{\beta_{ki}}$ is a posynomial ($c_k > 0$), then $N/D$ is log-log-concave.
\end{lemma}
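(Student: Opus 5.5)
We pass to logarithmic coordinates and treat the numerator and the denominator separately. Write $\ell_i=\log\theta_i$, so that $\theta_i=e^{\ell_i}$, and split
\[
\log\frac{N}{D}(e^{\bell}) \;=\; \log N(e^{\bell})-\log D(e^{\bell}).
\]
It then suffices to show two things: that $\log N(e^{\bell})$ is affine in $\bell$, and that $\log D(e^{\bell})$ is convex in $\bell$. Their difference is then concave, which is exactly the definition of log-log-concavity in Definition~\ref{def:llc}.

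The numerator step is immediate. We have
\[
\log N(e^{\bell})=\log c_0+\sum_i\alpha_i\ell_i,
\]
which is affine in $\bell$ and hence both concave and convex.

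For the denominator, each posynomial term becomes
\[
c_k\prod_i\theta_i^{\beta_{ki}}=\exp(b_k+\boldsymbol{\beta}_k^\top\bell),
\qquad b_k=\log c_k,
\]
which is well defined because $c_k>0$. Therefore
\[
\log D(e^{\bell})=\LSE_k\bigl(b_k+\boldsymbol{\beta}_k^\top\bell\bigr),
\]
where $\LSE(z)=\log\sum_k e^{z_k}$. So $\log D$ is log-sum-exp composed with an affine map of $\bell$. Convexity is preserved under affine precomposition, so the claim reduces to the convexity of $\LSE$ itself.

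Proving that $\LSE$ is convex is the only substantive step. I would do it either by the Hessian or by Hölder's inequality.

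\emph{Hessian route.} With softmax weights $w_k=e^{z_k}/\sum_j e^{z_j}$, the Hessian is $\mathrm{diag}(w)-ww^\top$. For any $v$,
\[
v^\top\bigl(\mathrm{diag}(w)-ww^\top\bigr)v=\sum_k w_kv_k^2-\Bigl(\sum_k w_kv_k\Bigr)^2\ge 0,
\]
which is the variance of $v$ under the distribution $w$, so the Hessian is positive semidefinite.

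\emph{Hölder route.} For $\lambda\in[0,1]$,
\[
\sum_k e^{\lambda z_k+(1-\lambda)z'_k}\le\Bigl(\sum_k e^{z_k}\Bigr)^{\lambda}\Bigl(\sum_k e^{z'_k}\Bigr)^{1-\lambda},
\]
by Hölder's inequality with exponents $1/\lambda$ and $1/(1-\lambda)$. Taking logarithms gives convexity directly. This version also matches the geometric-combination form of Definition~\ref{def:llc}: combined with the equality $N(\btheta^{(1)\lambda}\odot\btheta^{(2)(1-\lambda)})=N(\btheta^{(1)})^\lambda N(\btheta^{(2)})^{1-\lambda}$ for monomials, it yields the stated inequality for $N/D$ without ever passing to logarithms explicitly.

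No step is a serious obstacle. The only care needed is to confirm that all coefficients $c_k>0$, so the logarithms and the positivity of $f$ are valid, and that the exponents $\alpha_i$ and $\beta_{ki}$ may be arbitrary real numbers, including negative ones. This last point matters because in $\Qecon$ we have factors like $m_S/X$, where $X$ appears with a negative exponent.
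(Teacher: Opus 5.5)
Your proof is correct and follows the same route as the paper: in log-coordinates, $\log(N/D)$ is an affine function minus log-sum-exp composed with affine maps, so it is concave because log-sum-exp is convex. The paper simply cites the convexity of log-sum-exp, whereas you also prove it (via the Hessian or via H\"older's inequality); this is a harmless, self-contained addition.
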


\begin{proof}
$\log(N/D) = (\text{affine in }\bell) - \LSE(g_1(\bell),\ldots,g_K(\bell))$, where each $g_k$ is affine. Since $\LSE$ is convex, the result is concave.
\end{proof}

\subsubsection{Log-log-convexity of the annuity factor}

\begin{lemma}\label{lem:phi}
For $r,T > 0$, define $\varphi(r,T) = r(1+r)^T/[(1+r)^T-1]$. Then $\varphi$ is log-log-convex: $\log\varphi$ is convex in $(\log r, \log T)$.
\end{lemma}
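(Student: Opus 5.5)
The plan is to reduce $\log\varphi$ to a one-variable convex, nonincreasing function applied to a concave function of $(\log r,\log T)$. The standard composition rule then gives convexity.

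\textbf{Step 1: reduction.} Set $u=\log r$, $v=\log T$, and $s\equiv T\log(1+r)>0$. Since $(1+r)^T=e^{s}$,
\[
\varphi(r,T)=\frac{r\,e^{s}}{e^{s}-1}=\frac{r}{1-e^{-s}},\qquad
\log\varphi = u + k(w),
\]
where $w\equiv\log s = v+g(u)$, with
\[
g(u)\equiv\log\log(1+e^{u}),\qquad k(w)\equiv-\log\bigl(1-\exp(-e^{w})\bigr).
\]
The term $u$ is linear, so it suffices to show that $(u,v)\mapsto k(v+g(u))$ is convex.

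\textbf{Step 2: $g$ is concave.} Differentiating, $g'(u)=\dfrac{r}{(1+r)\log(1+r)}$. It suffices that $h(r)=(1+r)\log(1+r)/r$ is increasing in $r$, because $r=e^{u}$ is increasing in $u$. A direct computation gives
\[
h'(r)=\frac{r-\log(1+r)}{r^{2}}>0,
\]
using $\log(1+r)<r$ for $r>0$. Hence $g'$ is decreasing and $g$ is concave. Consequently $w(u,v)=v+g(u)$ is concave on $\R^2$.

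\textbf{Step 3: $k$ is nonincreasing and convex.} Using $ds/dw=s$, we get
\[
k'(w)=-\frac{s\,e^{-s}}{1-e^{-s}}=-\frac{s}{e^{s}-1}<0 .
\]
Let $q(s)=s/(e^{s}-1)$. Its derivative has the sign of $e^{s}(1-s)-1$, which is negative because $e^{-s}>1-s$ for $s>0$. So $q$ is decreasing in $s$. Then $k''(w)=-s\,q'(s)>0$, so $k$ is convex and decreasing in $w$.

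\textbf{Step 4: composition.} A convex nonincreasing function of a concave function is convex. Therefore $k(v+g(u))$ is convex in $(u,v)$, and so is $\log\varphi=u+k(v+g(u))$.

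For the manuscript's annuity factor, $r=0.01\,i$. This only shifts $u$ by the constant $\log 0.01$, which preserves convexity, so the statement transfers to $(\log i,\log\tau_L)$ on the region $i>0$ where the log coordinates are defined.

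I expect the main obstacle to be choosing coordinates in which the composition structure becomes visible. A brute-force Hessian of $\log\varphi$ in $(\log r,\log T)$ is messy. Introducing $s=T\log(1+r)$ is the key move, because it collapses $\varphi$ to $r/(1-e^{-s})$. After that, the only analytic inputs are the elementary inequalities $\log(1+r)<r$ and $e^{-s}>1-s$.
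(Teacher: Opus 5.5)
Your proof is correct, and it reaches the result by a different route from the paper's.

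\textbf{Shared reduction.} The paper starts exactly as you do. Its $w = T\log(1+r)$ is your $s$, and its $g(w) = w - \log(e^w - 1)$ is your $-\log(1-e^{-s})$.

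\textbf{The paper's route.} The paper keeps $w = e^v \log(1+e^u)$ as the inner map. This map is convex in $v$, so no composition rule applies. The paper therefore computes the full Hessian of $g(w(u,v))$ in $(u,v)$ and checks three conditions:
$H_{11}\ge 0$ via $Tr \ge 1-(1+r)^{-T}$; $H_{22}\ge 0$; and $\det H \ge 0$, after factoring $\det H = T^2(ss''-s'^2)\,g'\,(g''w+g')$ with $s(u)=\log(1+e^u)$.

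\textbf{Your route.} Your extra logarithm turns the inner map into $v + \log\log(1+e^u)$, which is concave. A single convex-nonincreasing-of-concave composition then finishes the proof.

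\textbf{What each buys.} Your two analytic facts are exactly the two nontrivial factors of the paper's determinant. Concavity of $\log\log(1+e^u)$ is the statement $ss''-s'^2<0$. Convexity of $k$ in $\log s$ is the statement $g''w+g'>0$, which is the paper's $H_{22}$ condition.

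Your argument therefore needs no determinant expansion and no separate $H_{11}$ check. It also shows structurally why the lemma holds. The paper's computation yields explicit Hessian entries, which is more than the lemma requires.

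Strict convexity, which the paper gets from $\det H>0$, also follows from your setup at once. The second directional derivative of $k(v+g(u))$ is
$k''(w)\,(dv+g'(u)\,du)^2 + k'(w)\,g''(u)\,du^2$.
Both terms are nonnegative, and they vanish together only when $du=dv=0$.
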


\begin{proof}
Write $u = \log r$, $v = \log T$. Define
\[
s(u) = \log(1+e^u), \quad w(u,v) = e^v s(u), \quad g(w) = w - \log(e^w - 1).
\]
Then $\log\varphi = u + g(w(u,v))$. Since $u$ is linear in $(u,v)$, it suffices to show $g \circ w$ has a positive semidefinite Hessian. The derivatives of $g$ and $s$ are:
\[
g' = \frac{-1}{e^w-1},\quad g'' = \frac{e^w}{(e^w-1)^2},\quad
s' = \frac{r}{1+r},\quad s'' = \frac{r}{(1+r)^2}.
\]
The partial derivatives of $w$ are $w_u = Ts'$, $w_v = w$, $w_{uu} = Ts''$, $w_{vv} = w$, $w_{uv} = Ts'$. The Hessian of $g(w(u,v))$ is $H_{ij} = g''w_i w_j + g'w_{ij}$, giving:
\begin{align*}
H_{11} &= g''T^2(s')^2 + g'Ts'', \\
H_{22} &= g''w^2 + g'w, \\
H_{12} &= g''Ts'w + g'Ts'.
\end{align*}

\noindent\textbf{$H_{22} \geq 0$:} Equivalent to $g''w + g' \geq 0$, i.e.\ $we^w/(e^w-1)^2 \geq 1/(e^w-1)$, i.e.\ $w \geq 1-e^{-w}$, which holds for $w > 0$.

\smallskip
\noindent\textbf{$H_{11} \geq 0$:} Factor as $Ts''(g''Tr + g')$ using $(s')^2/s'' = r$. The condition $g''Tr + g' \geq 0$ reduces to $Tr \geq 1-(1+r)^{-T}$. By convexity of $t \mapsto (1+r)^{-t}$, we have $(1+r)^{-T} \geq 1 - T\log(1+r)$, and since $\log(1+r) < r$ for $r > 0$, it follows that $(1+r)^{-T} > 1 - Tr$, hence $Tr > 1 - (1+r)^{-T}$.

\smallskip
\noindent\textbf{$\det H \geq 0$:} Expanding $H_{11}H_{22} - H_{12}^2$ and using $w = Ts$:
\begin{align*}
\det H &= (g''T^2s'^2 + g'Ts'')(g''w^2 + g'w) - (g''Ts'w + g'Ts')^2 \\
&= T^2(ss'' - s'^2)\,g'\,(g''w + g').
\end{align*}
The four factors have definite signs:
\begin{enumerate}[label=(\roman*)]
\item $T^2 > 0$.
\item $ss'' - s'^2 = r[\log(1+r) - r]/(1+r)^2 < 0$, since $\log(1+r) < r$ for $r > 0$.
\item $g' = -1/(e^w-1) < 0$.
\item $g''w + g' = [e^w(w-1)+1]/(e^w-1)^2 > 0$, since $h(w) = e^w(w-1)+1$ satisfies $h(0) = 0$ and $h'(w) = we^w > 0$.
\end{enumerate}
Therefore $\det H = (+)(-)(-)(+) > 0$.

Together, $H_{11} \geq 0$, $H_{22} \geq 0$, $\det H \geq 0$ establish positive semidefiniteness.
\end{proof}

\subsubsection{The full 10-parameter result}

\begin{theorem}[Log-log-concavity of $\Qecon$]\label{thm:llc}
$\Qecon$ is log-log-concave as a function of all 10 parameters on $\R^{10}_{++}$.
\end{theorem}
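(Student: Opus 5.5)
The plan is to clear the utilization factor so that $\Qecon$ becomes a monomial divided by a sum of terms. Each term will be either a monomial or the product of a monomial with the annuity factor $\varphi$. With $U = X/(X+p\tau)$ substituted into Definition~\ref{def:Qecon}, we have $\Cgain = \alpha E\eta\,pX/(X+p\tau)$, and similarly for $\Ctarget$ and $\Csrep$. Multiplying numerator and denominator by $(X+p\tau)/X$ gives
\[
\Qecon = \frac{\alpha E\eta\, p}{\beta c_Y p + m_S p X^{-1} + m_F\varphi(i,\tau_L) + m_F\varphi(i,\tau_L)\, p\,\tau X^{-1}} .
\]
The numerator is a monomial in $(E,\eta,p)$, so its logarithm is affine in $\bell$. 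It remains to show that $\log D$ is convex in $\bell$, where $D$ is the denominator; then $\log\Qecon = (\text{affine}) - \log D$ is concave.

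Next, I write $D = \sum_{k=1}^4 D_k$ and check that each $\log D_k$ is convex in $\bell$. For $k=1,2$ the term $D_k$ is a monomial, so $\log D_k$ is affine. For $k=3,4$ we have $\log D_k = (\text{affine in } \log m_F, \log p, \log\tau, \log X) + \log\varphi(i,\tau_L)$. By Lemma~\ref{lem:phi}, $\log\varphi$ is convex in $(\log r,\log T)$ with $r=0.01\,i$ and $T=\tau_L$. Since $\log r = \log i + \log 0.01$ is an affine shift of the coordinate $\log i$, $\log\varphi$ is convex in $(\log i,\log\tau_L)$, hence convex in the full $\bell$. Thus $\log D = \LSE(\log D_1,\ldots,\log D_4)$. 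The log-sum-exp function is convex and nondecreasing in each argument, and each argument is convex in $\bell$, so the composition rule for convex functions gives that $\log D$ is convex. This is the same mechanism as in Lemma~\ref{lem:monposyn}, extended from affine to convex exponents.

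The main point to handle carefully is the domain. Lemma~\ref{lem:phi} needs $r>0$, which holds because the theorem is stated on $\R^{10}_{++}$, so $i>0$. Some parameters, namely $c_Y$, $m_S$, and $\tau$, can only appear with coefficient zero on the boundary of $\Theta$. On $\R^{10}_{++}$ they are strictly positive, so every $D_k>0$ and all logarithms are defined. I would note that the result extends to $\Theta_+$ only in the interior sense used in the main text. Negative $i$ is excluded here, and the case $i\to 0$ would need a separate limiting argument, which I would not attempt.

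Finally, I would verify that the cleared form agrees with Proposition~\ref{prop:mobius}: dividing $ax/(bx+d)$ by $x$ and writing $1/x = 1/p + \tau/X$ reproduces the same four denominator terms, which gives a consistency check on the algebra.
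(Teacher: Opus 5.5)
Your proof is correct and follows the paper's route. Like the paper, you clear $U$ to write $\Qecon$ as a monomial over a sum of monomials and monomial-times-$\varphi$ terms, then invoke Lemma~\ref{lem:phi} to get convexity of $\log D$; multiplying by $(X+p\tau)/X$ instead of $(X+p\tau)$ only divides numerator and denominator by $X$, and your $\LSE$ composition argument and the $0.01\,i$ shift spell out steps the paper merely asserts.
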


\begin{proof}
Multiplying numerator and denominator by $(X + p\tau)$:
\[
\Qecon = \frac{\alpha E\eta pX}{\beta c_Y pX + m_Sp + m_F\varphi X + m_F\varphi p\tau} = \frac{N}{D}.
\]
The numerator $N = \alpha E\eta pX$ is a monomial, so $\log N$ is affine in $\bell = \log\btheta$.

For the denominator, it suffices to show each summand is log-log-convex (since a positive sum of log-log-convex functions is log-log-convex):
\begin{enumerate}[label=(\roman*)]
\item $\beta c_Y pX$ and $m_Sp$: monomials, hence log-log-affine.
\item $m_F\varphi X$ and $m_F\varphi p\tau$: products of monomials with $\varphi$, which is log-log-convex by Lemma~\ref{lem:phi}. In log-coordinates, $\log(m_F\varphi X) = \ell_{m_F} + \ell_X + \log\varphi$; since $\log\varphi$ is convex in $(\ell_i, \ell_{\tau_L})$ and the remaining terms are linear, the sum is convex.
\end{enumerate}
Hence $\log D$ is convex in $\bell$, and $\log\Qecon = \log N - \log D$ is concave.

\smallskip
\noindent\textbf{Note:} With Lemma~\ref{lem:phi} now fully proved analytically (all three PSD conditions---$H_{11} \geq 0$, $H_{22} \geq 0$, $\det H \geq 0$---established via elementary inequalities), the 10-parameter log-log-concavity is a theorem-level result, not dependent on numerical verification.
\end{proof}

\begin{remark}[Non-strict concavity]\label{rem:nonstrict}
The concavity is not strict in general. Varying $E$ alone gives $\Qecon \propto E$, so $\log\Qecon$ is affine in $\log E$. The same holds for $\eta$. Strict concavity holds only along directions engaging the curvature of $\log D$.
\end{remark}

\begin{remark}[Boundary]\label{rem:boundary}
The result holds on $\R^{10}_{++}$. The manuscript's ranges include boundary values ($c_Y = 0$, $\tau_{\mathrm{rep}} = 0$, $m_S = 0$); any log-parameterized implementation needs strictly positive lower bounds (e.g.\ $c_Y \geq 10^{-6}$), which are numerically indistinguishable from zero.
\end{remark}

\subsubsection{Failure of ordinary quasiconcavity}\label{sec:not_qc}

\begin{proposition}\label{prop:notqc}
$\Qecon$ is not quasiconcave with respect to arithmetic combinations.
\end{proposition}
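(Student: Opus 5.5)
We need to show that some superlevel set of $\Qecon$ is not convex in ordinary (arithmetic) coordinates. Quasiconcavity of $f$ means $f(\lambda\btheta^{(1)}+(1-\lambda)\btheta^{(2)})\ge\min\{f(\btheta^{(1)}),f(\btheta^{(2)})\}$ for all $\lambda\in[0,1]$. Equivalently, every superlevel set must be convex. So it suffices to exhibit two points with equal $\Qecon$ whose midpoint has strictly smaller $\Qecon$.

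Single-parameter lines will not work. In each individual parameter, $\Qecon$ is monotone, and monotone functions of one variable are quasiconcave. For example, along $m_F$ the function is convex but decreasing, which is still quasiconcave. The counterexample must therefore move two parameters at once, one favorable and one unfavorable.

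The natural choice is the pair $(E,m_F)$, with all other parameters fixed at positive values. By Proposition~\ref{prop:mobius},
\[
\Qecon=\frac{a_0E}{b+\varphi m_F/x}=\frac{kE}{c+m_F},
\]
with constants $k=\alpha\eta x/\varphi>0$ and $c=bx/\varphi>0$. The level set $\Qecon=q$ is then the line $kE=q(c+m_F)$, and lines are convex, so this pair fails. I would instead use a pair in which the level curve bends the wrong way.

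Take $(E,p)$ in the simplest case $\tau=0$, so that $U=1$ and $x=p$. Then
\[
\Qecon=\frac{aEp}{bp+d}.
\]
The level set $\{\Qecon\ge q\}$ is $E\ge q(b+d/p)/a$. The right-hand side is a convex, decreasing function of $p$, so the superlevel set is the epigraph of a convex function and hence convex. This pair also fails.

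A better candidate is $(E,\eta)$, which enter as the product $E\eta$. The level set $\{E\eta\ge \text{const}\}$ is the region above a hyperbola, and that region is convex. This fails too.

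Next try $(p,m_F)$ with $\tau=0$:
\[
\Qecon=\frac{ap}{bp+\varphi m_F}.
\]
The condition $\Qecon\ge q$ becomes $(a-qb)p\ge q\varphi m_F$, a half-plane, so again convex.

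I expect the real difficulty to be producing an unfavorable–favorable pair whose level curve is concave in arithmetic coordinates, where log-coordinates would straighten it. Monomial-in-denominator structure is a good source, for instance $m_S$ against $X$ through the term $m_S/X$. Set $c_Y=0$ and $\tau=0$, so $\Qecon=ap/(m_Sp/X+d)$. Then $\Qecon\ge q$ becomes $m_S/X\le \text{const}$, i.e.\ $m_S\le KX$, a convex cone. This fails as well.

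With $\tau>0$ and the pair $(X,p)$, clear denominators to get
\[
\Qecon=\frac{aXp}{b'pX+m_Sp+dX+dp\tau},\qquad b'=\beta c_Y .
\]
The condition $\Qecon\ge q$ reads
\[
(a-qb')Xp\ge q\bigl(m_Sp+dX+d\tau p\bigr).
\]
This is $XY\ge$ linear, whose superlevel set is again convex. In general, most two-parameter slices turn out convex, consistent with the convex isoquants the paper reports.

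The decisive route is therefore the annuity factor $\varphi(i,\tau_L)$. Here the pair $(i,\tau_L)$ is unfavorable–favorable, and the isoquant is $\varphi(i,\tau_L)=\text{const}$. As $\tau_L\to\infty$ this curve tends to $i\to$ const, and for small $\tau_L$ it behaves like $\varphi\approx 1/\tau_L+0.005\,i$. So the curve $i(\tau_L)$ increases, and I expect it to be concave in arithmetic coordinates. The set $\{\varphi\le c\}$, i.e.\ $\{\tau_L\ge T(i)\}$ with $T$ increasing and convex, would then fail to be convex.

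The plan is to fix the remaining eight parameters and pick two concrete points $(i,\tau_L)$ on one isoquant, for example $(i,\tau_L)=(-2,\,T_1)$ and $(5,\,T_2)$. I would verify numerically that the midpoint has larger $\varphi$, which gives $\Qecon(\text{midpoint})<\min$ of the endpoint values. If that check fails, the fallback is a 3-parameter combination such as $(E,m_F,i)$, again checked by explicit evaluation at three points.
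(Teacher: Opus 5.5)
\textbf{There is a genuine gap.} Your proposal never actually produces a counterexample. After correctly discarding several convex slices, it rests on an expectation about the $(i,\tau_L)$ slice that is left to a numerical check, and that expectation has the convexity test backwards.

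With the other eight parameters fixed, $Q_{\mathrm{econ}}\ge q$ iff $\varphi(i,\tau_L)\le c$, i.e.\ iff $\tau_L\ge T(i)$ with $T$ increasing. That set is the epigraph of $T$. It is therefore convex precisely when $T$ is convex, equivalently when the isoquant $i^*(\tau_L)$ is concave, which is exactly the curvature you predict.

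Solving $\varphi=c$ gives $T(r)=-\log(1-r/c)/\log(1+r)$ with $r=0.01\,i$. On the base-case isoquant $c=\varphi(2,30)\approx0.0446$ one finds $T\approx18.3,\ 22.4,\ 25.5,\ 30.0,\ 37.7$ at $i=-2,0,1,2,3$, with increasing slopes. The same holds on isoquants with plausible lifetimes that reach $i=5$. So the midpoint test you propose would return a \emph{smaller} $\varphi$ (a larger $Q_{\mathrm{econ}}$), not a counterexample.

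This slice can become non-convex in extreme corners: for instance $T''(0)<0$ exactly when $c>4$, i.e.\ $\tau_L<1/4$~y. But that is not where you look. Moreover $i=-2$ lies outside $\mathbb{R}^{10}_{++}$, the domain the supplement works on. Your three-parameter fallback is asserted with no reason why it should succeed.

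The missing idea is that the natural source of arithmetic non-convexity is a \emph{product of two cost-type quantities} in the denominator. Its sublevel set is the region under a hyperbola, which is not convex.

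After clearing $U$ the denominator is $\beta c_Y pX+m_Sp+m_F\varphi\,(X+p\tau_{\mathrm{rep}})$. So with everything else fixed, $Q_{\mathrm{econ}}\ge q$ iff $uv\le K$, where $u=m_F$, $v=X+p\tau_{\mathrm{rep}}$ is affine in $\tau_{\mathrm{rep}}$, and $K>0$ is a constant. If $u_1v_1=u_2v_2=K$ and $u_1\ne u_2$, the midpoint satisfies $\tfrac14(u_1+u_2)(v_1+v_2)=\tfrac14(2K+u_1v_2+u_2v_1)>K$ by AM--GM. Hence $Q_{\mathrm{econ}}$ at the midpoint is strictly below the common endpoint value.

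Concretely, take the base case with $P_f/S=4$ and $c_Y=10^{-6}$. The points $(\tau_{\mathrm{rep}},m_F)=(0.05,20)$ and $(0.45,13.5)$ both give $Q_{\mathrm{econ}}\approx1.05$, while their midpoint $(0.25,16.75)$ gives $Q_{\mathrm{econ}}\approx1.02$.

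The pair $(i,m_F)$ works the same way. Here $1/\varphi$ is the present-value annuity factor, which is strictly convex in $i$, so $\{m_F\le K'/\varphi\}$ is the hypograph of a strictly convex function. This is why your $(E,m_F,i)$ fallback would in fact have succeeded, with $E$ superfluous.

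The paper's own proof simply exhibits two explicit 10-vectors and $\lambda=0.75$, giving $Q_{\mathrm{econ}}\approx2.03$ and $2.30$ at the endpoints and $\approx1.71$ at the combination. A two-parameter argument like the one above is an analytic alternative that also exposes the mechanism.
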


\begin{proof}
Take
\begin{align*}
\btheta^{(1)} &= (p, \tau, \tau_L, E, X, \eta, c_Y, m_S, m_F, i) = (2,\; 0.5,\; 30,\; 100,\; 2,\; 0.6,\; 0.002,\; 0.1,\; 3,\; 5), \\
\btheta^{(2)} &= (6,\; 0.05,\; 30,\; 200,\; 2,\; 0.6,\; 0.005,\; 0.1,\; 20,\; 5),
\end{align*}
with $\lambda = 0.75$. Then $\Qecon(\btheta^{(1)}) \approx 2.03$, $\Qecon(\btheta^{(2)}) \approx 2.30$, but $\Qecon(0.75\btheta^{(1)} + 0.25\btheta^{(2)}) \approx 1.71 < \min(2.03, 2.30)$.
\end{proof}

\subsection{Implications for Optimization}\label{sec:optimization}

\subsubsection{The manuscript's formulation}

The manuscript (Equation~52) poses the closest-viable-design problem using a diagonal weighted norm:
\begin{equation}\label{eq:opt_unweighted}
\min_{\theta \in \Theta_+} \sum_{i=1}^{10} w_i (\theta_i - \theta_{0,i})^2 \quad \text{subject to} \quad Q_{\mathrm{econ}}(\theta) \geq 1,
\end{equation}
with $W = \mathrm{diag}(w_1,\dots,w_{10})$, $w_i > 0$.  The diagonal structure matches the per-parameter normalization and difficulty decomposition described in Section~6.2.  When the active constraint is regular, the manuscript notes (Equation~53) that the KKT stationarity condition gives
\[
2W(\theta^* - \theta_0) = \lambda^* \nabla Q_{\mathrm{econ}}(\theta^*), \qquad Q_{\mathrm{econ}}(\theta^*) = 1,
\]
with scalar $\lambda^* \geq 0$.  Equivalently, the weighted displacement $W(\theta^* - \theta_0)$ is normal to the hypersurface $Q_{\mathrm{econ}} = 1$ at $\theta^*$.

\subsubsection{The normalization issue}

The weighted norm~\eqref{eq:opt_unweighted} addresses the fact that the 10 parameters carry heterogeneous units (MW/m$^2$, years, \%, \$/MW-h, M\$/m$^2$, etc.), so a unit change in one parameter is incommensurable with a unit change in another unless each weight is normalized in some manner. The weighted norm resolves this issue. Each weight $w_i$ can be decomposed conceptually as
\begin{equation}\label{eq:weight_decomp}
w_i = \frac{{\hat w}_i}{s_i^2},
\end{equation}
where $s_i$ is a scale factor that makes $(\Delta\theta_i / s_i)^2$ dimensionless (e.g.\ $s_i$ could be the width of the plausible range from Table~2) and $\hat{w}_i > 0$ is a dimensionless difficulty factor reflecting the relative cost or difficulty of changing parameter~$i$. In implementation the two roles are combined into a single $w_i$, but the decomposition clarifies that the norm is dimensionally consistent by construction and that the difficulty factors are modeling inputs.

\subsubsection{Role of $\Qecon$ versus the weights}

The formula for $\Qecon$ determines the \emph{viable region}---the set of parameter combinations achieving $\Qecon \geq q^*$. The weights do not change that region; they determine which point on its boundary is selected as ``closest.'' Different stakeholders (a plasma physicist, a financial engineer, a policymaker) would reasonably assign different $\hat{w}_i$ and obtain different projections, each representing a different pathway to viability.

Economically, the 10 parameters span fundamentally different categories: engineering parameters ($p$, $X$, $\eta$) require physics R\&D; market parameters ($E$, $i$) are set exogenously; construction parameters ($m_S$, $m_F$) depend on industrial-economic factors. A uniform weighting would conflate these categories---treating a 2-percentage-point reduction in the interest rate (which might require government credit guarantees) as equally ``costly'' as a 1\,MW/m$^2$ increase in power density (which requires advances in plasma confinement).

\subsubsection{Convexity structure}

By Theorem~\ref{thm:llc}, $\{\Qecon \geq q^*\}$ is convex in $\bell = \log\btheta$. Reparameterizing~\eqref{eq:opt_unweighted} (or its weighted generalization) with $\ell_i = \log\theta_i$ yields a convex feasible region: no disconnected pockets of viability, no re-entrant corners, and no spurious feasible regions.

\begin{proposition}[Uniqueness]\label{prop:wellposed}
Consider the weighted problem $\min_{\btheta \in \Theta} \|\btheta - \btheta_0\|_W^2$ with $W = \mathrm{diag}(w_1,\ldots,w_{10})$, $w_i > 0$, subject to $\Qecon(\btheta) \geq q^*$, reparameterized in log-coordinates. If the feasible set is nonempty and the box constraints satisfy $\theta_i^{\min} > \theta_{0,i}/2$ for all~$i$, then the problem has a unique global minimizer.
\end{proposition}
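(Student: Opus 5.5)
The plan is to pass to log-coordinates $\ell_i=\log\theta_i$ on the box $\Theta_{+}$. There I will show three things: the feasible set is convex and compact, the objective is strictly convex, and hence the minimizer is unique. Since $\ell\mapsto e^{\ell}$ is a bijection between the log-box $\prod_i[\log\underline{\theta}_i,\log\overline{\theta}_i]$ and $\Theta_{+}$, uniqueness in $\ell$ gives uniqueness in $\btheta$.

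\emph{Step 1 (feasible set).} By Theorem~\ref{thm:llc}, $\ell\mapsto\log\Qecon(e^{\ell})$ is concave on $\R^{10}$. Its superlevel set $\{\ell:\log\Qecon(e^{\ell})\ge\log q^\ast\}$ is therefore convex. Intersecting with the log-box, which is a product of intervals and hence convex, gives a convex set $F$. The function $\Qecon$ is continuous on $\Theta_{+}$: the annuity factor $\varphi$ is continuous for $r,T>0$, and the denominator $D$ in the proof of Theorem~\ref{thm:llc} is strictly positive. So $F$ is closed. Because the box has finite bounds $0<\underline{\theta}_k\le\overline{\theta}_k<\infty$, $F$ is also bounded, hence compact. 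It is nonempty by hypothesis.

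\emph{Step 2 (strict convexity of the objective).} In log-coordinates the objective is separable:
\[
f(\ell)=\sum_{i=1}^{10} w_i\bigl(e^{\ell_i}-\theta_{0,i}\bigr)^2 .
\]
Its Hessian is therefore diagonal, with entries
\[
\partial^2 f/\partial\ell_i^2 = 2w_i e^{\ell_i}\bigl(2e^{\ell_i}-\theta_{0,i}\bigr).
\]
Each entry is strictly positive exactly when $\theta_i=e^{\ell_i}>\theta_{0,i}/2$. The hypothesis $\underline{\theta}_i>\theta_{0,i}/2$ guarantees this throughout the log-box, since $w_i>0$. So $f$ is strictly convex on a convex neighborhood of $F$.

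\emph{Step 3 (existence and uniqueness).} $f$ is continuous and $F$ is compact and nonempty, so a minimizer exists. If $\ell^{(1)}\ne\ell^{(2)}$ were both minimizers, their midpoint would lie in $F$ by convexity. Strict convexity would then give $f$ at the midpoint strictly less than the common minimum, a contradiction. Mapping back through the exponential yields a unique $\btheta^\star$.

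The step I expect to need the most care is Step 2. The squared weighted distance is not convex in $\ell$ globally: the curvature changes sign at $\theta_i=\theta_{0,i}/2$. So one must verify that the lower box bounds keep every coordinate in the region of positive curvature. A related bookkeeping point is that the proposition is stated over $\Theta$, whereas the argument needs $\Theta_{+}$ (strictly positive, finite bounds, and $i>0$ so that Lemma~\ref{lem:phi} and Theorem~\ref{thm:llc} apply). I will state explicitly that the box constraints place the problem in $\Theta_{+}$.
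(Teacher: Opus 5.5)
Your proposal is correct and follows the paper's own argument: in log-coordinates the separable objective is strictly convex because $h_i''(\ell_i)=2w_ie^{\ell_i}(2e^{\ell_i}-\theta_{0,i})>0$ whenever $\theta_i>\theta_{0,i}/2$, and the feasible set is convex by Theorem~\ref{thm:llc} and compact because of the box constraints. You also make explicit several details the paper leaves implicit, namely closedness via continuity, the midpoint argument for uniqueness, and the need to work in $\Theta_{+}$ with $i>0$.
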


\begin{proof}
Each $h_i(\ell_i) = w_i(e^{\ell_i} - \theta_{0,i})^2$ has $h_i''(\ell_i) = 2w_i e^{\ell_i}(2e^{\ell_i} - \theta_{0,i}) > 0$ on $\ell_i > \log(\theta_{0,i}/2)$. The objective is strictly convex; the feasible set is convex (by log-log-concavity) and compact; uniqueness follows.
\end{proof}

\begin{remark}[Numerical verification]\label{rem:numerical}
SLSQP from 50 random initializations converges to the same solution (objective values identical to 8 significant figures), consistent with Proposition~\ref{prop:wellposed}.
\end{remark}

\begin{remark}[On the steep gradients]\label{rem:cliffs}
In log-coordinates, the level sets $\{\Qecon \geq q\}$ are convex. When the optimum is interior to the box constraints, the KKT stationarity condition reads $2W(\btheta - \btheta_0) = \lambda\nabla\Qecon(\btheta)$ with scalar $\lambda \geq 0$; when box bounds are active, additional bound multipliers enter. Parameters with large $|\partial\Qecon/\partial\theta_i|$ relative to $w_i$ are adjusted the most---the optimization preferentially changes parameters with the best cost-to-impact ratio.
\end{remark}

\begin{remark}[Endpoint, not trajectory]\label{rem:endpoint}
The projection identifies the closest viable \emph{endpoint}---the smallest weighted portfolio of parameter changes needed to reach viability. It is not intended to represent a literal R\&D trajectory through design space.
\end{remark}

\subsection{Summary}\label{sec:summary}

\begin{center}
\begin{tabular}{@{} l l l @{}}
\toprule
\textbf{Property} & \textbf{Domain} & \textbf{Result} \\
\midrule
M\"obius form & $x \in (0,\infty)$ & $Q = ax/(bx+d)$---Prop.~\ref{prop:mobius} \\[3pt]
Strict concavity in $x$ & $x \in (0,\infty)$ & $d^2Q/dx^2 < 0$---Prop.~\ref{prop:concavity_x} \\[3pt]
Concavity in $\btheta$ & $\R^{10}_{++}$ & \textbf{Fails}---Prop.~\ref{prop:nonconcavity} \\[3pt]
Log-log-concavity & $\R^{10}_{++}$ & \textbf{Holds} (not strict)---Thm.~\ref{thm:llc} \\[3pt]
Quasiconcavity (arithmetic) & $\R^{10}_{++}$ & \textbf{Fails}---Prop.~\ref{prop:notqc} \\[3pt]
Quasiconcavity (geometric) & $\R^{10}_{++}$ & \textbf{Holds} \\[3pt]
Unique optimizer & Nonempty feasible set, & \textbf{Holds}---Prop.~\ref{prop:wellposed} \\
 & $\theta_i^{\min} > \theta_{0,i}/2$ & \\
\bottomrule
\end{tabular}
\end{center}

\medskip

$\Qecon$ is log-log-concave in all 10 parameters: $\log\Qecon = (\text{affine}) - (\text{convex})$ in log-coordinates. The convexity of $\log D$ follows from the monomial-over-posynomial structure for 8 parameters (Lemma~\ref{lem:monposyn}) together with the analytically proved log-log-convexity of the annuity factor $\varphi$ (Lemma~\ref{lem:phi}). The concavity is not strict along pure-numerator directions ($E$, $\eta$), but uniqueness of the projection follows from the strict convexity of the distance objective (Proposition~\ref{prop:wellposed}).

\end{document}